\documentclass[11pt]{article}

\usepackage[a4paper,margin=30mm]{geometry}
\usepackage{amsmath,amssymb,amsthm,physics,comment,xcolor}
\usepackage{graphicx}
\usepackage{cancel}
\usepackage{cite}
\usepackage{hyperref}
\usepackage{ulem}

\theoremstyle{plain}
\newtheorem{Thm}{Theorem}
\newtheorem{Prop}{Proposition}
\newtheorem{Lem}{Lemma}
\newtheorem{Cor}{Corollary}
\newtheorem{Def}{Definition}
\newtheorem{Rem}{Remark}

\newtheorem{F}{Fact}

\def\<{\mathopen{(\!(}}
\def\>{\mathclose{)\!)}}

\newcommand{\CA}{\mathbb{C}}

\newcommand{\id}{\mathrm{id}}
\newcommand{\II}{\mathop{\mathbb{I}}\nolimits}

\newcommand{\LA}{\mathop{\mathcal{L}}\nolimits}
\newcommand{\HA}{\mathop{\mathcal{H}}\nolimits}
\newcommand{\SA}{\mathop{\mathcal{S}}\nolimits}

\newcommand{\I}{\mathop{\mathrm{id}}\nolimits}

\newcommand{\R}{\mathop{\mathbb{R}}\nolimits}
\newcommand{\N}{\mathop{\mathbb{N}}\nolimits}

\title{
Another Legacy of Andrzej Kossakowski:\thanks{Dedicated to the memory of Andrzej Kossakowski.}\\
A Self-Contained Derivation of the GKLS Equation
}
\author{
Gen Kimura\thanks{Email: \texttt{gen.kimura.quant@gmail.com}}\\
\small Graduate School of Information Sciences, Tohoku University\\
\small Sendai, Japan
}

\date{}

\begin{document}

\maketitle

\begin{abstract}
This note is written for the special issue of OSID dedicated to the
50th anniversary of the Gorini--Kossakowski--Lindblad--Sudarshan equation.
Its purpose is not to give a comprehensive historical review, but rather to
reconstruct, in a self-contained way, one logical route leading to the GKLS
generator. The emphasis is placed on Kossakowski's structural insight: the
combination of Markovianity, complete positivity, trace preservation, and
the infinitesimal structure of quantum dynamical semigroups naturally leads
to the celebrated form of the generator. 
I also include a few personal recollections of Andrzej Kossakowski, under whose guidance I spent one year as a postdoctoral researcher in 2003--2004.
\end{abstract}

\section{Introduction}
The purpose of this note is to reconstruct a logically closed route to the GKLS (Gorini--Kossakowski--Lindblad--Sudarshan) \cite{GoriniKossakowskiSudarshan1976,Lindblad1976} representation theorem for the generator of quantum Markovian dynamics, understood here as completely positive dynamical semigroups, in finite-level systems. 
The route we follow is mainly centered on Kossakowski's structural insight,
which was later developed into the formulation of the GKS theorem \cite{GoriniKossakowskiSudarshan1976}.
Already in his earlier work on positive dynamical semigroups, he had developed the axiomatic framework of quantum dynamical semigroups and studied their generator conditions in the early 1970s\cite{Kossakowski1972a,Kossakowski1972b,Kossakowski1973}. 
The now-standard finite-dimensional $N$-level form of the generator, expressed in terms of a traceless Hilbert--Schmidt orthonormal basis, was established by Gorini, Kossakowski, and Sudarshan \cite{GoriniKossakowskiSudarshan1976}. 
Independently, Lindblad obtained the equivalent theorem in a more general operator-algebraic setting, which partially extends the finite-dimensional picture to infinite-dimensional settings~\cite{Lindblad1976}. 
From this viewpoint, once the importance of complete positivity in open quantum system dynamics was recognized in the physics community, the passage to the GKLS form appears, at least in retrospect, to have been only a matter of time, building on the structural ideas that had already been developed in the early work of Kossakowski and other pioneers\footnote{For a historical account of the GKLS equation, see \cite{ChruscinskiPascazio2017}.}. 
In this note, we give a logically self-contained account of the passage from Kossakowski's characterization of positive dynamical semigroups (Theorem~\ref{thm:K}) to the GKS representation theorem (Theorem~\ref{thm:GKS}) for their generators\footnote{Although we do not follow Lindblad's independent proof in this note, his original paper \cite{Lindblad1976} is well worth reading for its entirely independent and highly original ideas.}. 

This note also includes some personal memories and anecdotes from the year I
spent with Professor Kossakowski, whom I had the privilege of knowing as a mentor.\footnote{
In this note, I use the first person singular when speaking about my personal recollections of Prof. Kossakowski.
In the mathematical exposition, I use the conventional inclusive ``we''.}
In writing this note, and in recalling his familiar smile, I would like to pass on to the reader, however modestly, something of what I was fortunate enough to receive from him. \footnote{I also warmly invite the reader to read Saverio Pascazio's memorial note~\cite{Saverio}, ``Remembering Andrzej Kossakowski''.}

Indeed, the choice of a self-contained presentation reflects a piece of advice he once gave me. He told me that a paper should be made as logically closed as
possible. Behind this advice was a simple and generous concern for the reader:
if substantial parts of the argument are left only to references, then readers who cannot easily obtain those references may be unable to understand the work.
Since then, I have tried to follow this advice in my own papers, and it is in this spirit that I have tried to make the present note as self-contained as possible.

Finally, the title of this note is also intended as a homage to Dariusz Chru\'sci\'nski's article \cite{Chruscinski2021LegacyKossakowski}, ``The Legacy of Andrzej Kossakowski.''
While that article beautifully presents Kossakowski's scientific legacy, I would like to describe here another aspect of that legacy from my own personal perspective.

\bigskip 

The structure of this note is as follows.
In Sec.~\ref{sec:PrNot}, we introduce the notation and review some basic results that will be used freely throughout the note.
These results are standard facts frequently used in the theory of open quantum systems and quantum information theory.
Readers familiar with this material may consult the notation as needed and
proceed directly to the next section. 
In Sec.~\ref{sec:OS}, we discuss positive and trace-preserving dynamical maps, their extension to complete positivity, and Markovianity formulated in terms of the semigroup law.

\section{Preliminaries and notation}\label{sec:PrNot}
The prerequisites for the present note are some familiarity with the foundations of quantum mechanics and quantum information theory for finite-level systems, as well as basic linear algebra and calculus.
We refer the reader to standard textbooks on quantum information science, e.g.,
\cite{NC,OurTextbook,Wilde,Watrous,HeinosaariZiman}, and on the theory of open quantum systems, e.g., \cite{BreuerPetruccione2002,RivasHuelga2012,AlickiLendi2007}.

To make the present note as logically self-contained and pedagogical as possible, we collect some basic results needed in the discussion as Facts.
These Facts may be used without further mention; they are elementary results whose verification may help readers become familiar with the basic tools used in this note.
For the reader's convenience, however, proofs of them are included in \ref{app:Facts}.

One point worth mentioning, in connection with the self-contained nature of this note, concerns some elementary but technical facts about topology and limiting procedures. 
In finite dimensions, all norms on a vector space are equivalent, and consequently distinctions such as strong convergence and operator-norm convergence do not play an essential role; see, e.g., \cite{RudinPMA,Kreyszig}.
Indeed, for readers with a physics background, these limiting arguments may be regarded as the familiar finite-dimensional manipulations usually treated intuitively.
For readers with a mathematical background, these facts will likely be familiar.
Therefore, we do not expect this point to obscure the main line of the argument.

\subsection{Mathematical preliminaries}\label{sec:MP}
For a $d$-dimensional Hilbert space $\HA \simeq \CA^d$, we write $\braket{\psi}{\phi}$ for the inner product of $\psi,\phi \in \HA$, using the physics convention that it is conjugate-linear in the first argument and linear in the second argument:
\[
\braket{\alpha\psi+\beta\psi'}{\phi}
=
\overline{\alpha}\braket{\psi}{\phi}
+
\overline{\beta}\braket{\psi'}{\phi},
\qquad
\braket{\psi}{\alpha \phi+\beta \phi'}
=
\alpha\braket{\psi}{\phi}
+
\beta\braket{\psi}{\phi'}.
\]
Here $\alpha,\beta \in \CA$ and $\psi,\psi',\phi,\phi' \in \HA$. 
The induced norm is given by $\|\psi\|:=\sqrt{\braket{\psi}{\psi}}$.
A family $\{\phi_i\}_{i=1}^n$ of vectors in $\HA$ is called an orthonormal set if $\braket{\phi_i}{\phi_j}=\delta_{ij}$ for all $i,j=1,\ldots,n$, where $\delta_{ij}$ is the Kronecker delta, i.e., $\delta_{ij}=1$ if $i=j$ and $\delta_{ij}=0$ otherwise.
If $n=d$, the dimension of $\HA$, such a family is called an orthonormal basis (ONB). 

Let $\LA(\HA) \simeq M_d(\CA)$ be the set of all linear operators on $\HA$. 
For $\psi,\phi \in \HA$, $\ketbra{\psi}{\phi}$ denotes the linear operator on $\HA$ whose action on $\xi \in \HA$ is given by $\ketbra{\psi}{\phi}\xi := \braket{\phi}{\xi}\psi$. 
An orthonormal basis $\{\phi_i\}_{i=1}^d$ satisfies the completeness relation $\sum_{i=1}^d \ketbra{\phi_i}{\phi_i}=\II$.

For $A \in \LA(\HA)$, its trace is defined by
\begin{equation}
\Tr A := \sum_{i=1}^d \langle \phi_i | A\phi_i \rangle,\label{eq:tr}
\end{equation}
where $\{\phi_i\}_{i=1}^d$ is an arbitrary orthonormal basis of $\HA$\footnote{
Note that this definition is independent of the choice of the ONB.
Indeed, let $\{\psi_j\}_{j=1}^d$ be another ONB.
Using the completeness relations $\sum_i \ketbra{\phi_i}{\phi_i}=\II$ and $\sum_j \ketbra{\psi_j}{\psi_j}=\II$, we have
\[
\sum_i \langle \phi_i|A\phi_i\rangle
=
\sum_{i,j}\langle \phi_i|A\psi_j\rangle\langle \psi_j|\phi_i\rangle
=
\sum_{i,j}\langle \psi_j|\phi_i\rangle\langle \phi_i|A\psi_j\rangle
=
\sum_j \langle \psi_j|A\psi_j\rangle.
\]
}.
The cyclic property of the trace, $\Tr AB=\Tr BA$, is particularly useful and will be used throughout this note.
We shall also use the identity
\[
\Tr(\ketbra{\psi}{\phi}A)=\braket{\phi}{A\psi}
\]
for $\psi,\phi\in\HA$ and $A\in\LA(\HA)$.
In particular, setting $A=\I$ gives $\Tr(\ketbra{\psi}{\phi})=\braket{\phi}{\psi}.$

For $A\in\LA(\HA)$, its adjoint $A^\dagger$ is defined by $\braket{\psi}{A\phi}=\braket{A^\dagger\psi}{\phi}$ for all $\psi,\phi\in\HA$. Note that, for $A,B\in\LA(\HA)$, one has $(AB)^\dagger=B^\dagger A^\dagger$. 
For the commutator and the anticommutator, we use the notation
$[A,B]:=AB-BA$ and $\{A,B\}:=AB+BA$, respectively. 
We say that $A$ is normal if $[A,A^\dagger]=0$, self-adjoint (or Hermitian) if $A=A^\dagger$, unitary if $AA^\dagger=A^\dagger A=\II$ (where $\II$ denotes the identity operator), positive\footnote{More precisely, this should be called positive semidefinite, but in this note we simply call it positive.}, written $A \geq 0$, if $\langle \psi | A\psi\rangle \geq 0$ for all $\psi \in \HA$, and a projection if $A=A^\dagger$ and $A^2=A$. 
We write $\LA_0(\HA)$, $\LA_{\rm sa}(\HA)$, and $\LA_+(\HA)$ for the sets of traceless, self-adjoint, and positive operators on $\HA$, respectively. 

One of the fundamental results in linear algebra is that $A$ is normal if and only if it admits an eigenvalue decomposition $A = \sum_{i=1}^d a_i \ketbra{\phi_i}{\phi_i}$, where $a_i$ are the eigenvalues of $A$ and $\{\phi_i\}_{i=1}^d$ is an ONB of
$\HA$ consisting of corresponding eigenvectors. 
Equivalently, $A$ admits the spectral decomposition $A=\sum_{\mu} \alpha_\mu P_\mu$, where $\alpha_\mu$ are the distinct eigenvalues of $A$ and $P_\mu$ is the orthogonal projection onto the eigenspace corresponding to $\alpha_\mu$. 
The eigen projections satisfy $P_\mu P_\nu=\delta_{\mu\nu}P_\mu$ and $\sum_\mu P_\mu=\II$. 
For normal operators, the following basic spectral characterizations hold:
\begin{enumerate}
\item $A$ is unitary if and only if all eigenvalues of $A$ have modulus one.
\item $A$ is self-adjoint if and only if all eigenvalues of $A$ are real.
\item $A$ is positive if and only if all eigenvalues of $A$ are nonnegative.
\item $A$ is a projection if and only if all eigenvalues of $A$ are either $0$ or $1$.
\end{enumerate}

For a normal operator with eigenvalue decomposition $A=\sum_{i=1}^d a_i \ketbra{\phi_i}{\phi_i}$ and a function
$f:\CA\to\CA$, we define $f(A):=\sum_{i=1}^d f(a_i)\ketbra{\phi_i}{\phi_i}$.
In particular, if $A\geq 0$, its positive square root is $\sqrt{A}:=\sum_{i=1}^d \sqrt{a_i}\ketbra{\phi_i}{\phi_i}$.
For non-normal operators, the definition of $f(A)$ requires more care; nevertheless, some elementary functions are defined directly for arbitrary $A\in\LA(\HA)$, such as powers $A^n$, defined recursively by $A^0:=\II$ and $A^{n+1}:=AA^n$ and the exponential $\exp(A):=\sum_{n=0}^{\infty}A^n/n!$, which is well defined in finite dimensions, and the absolute value $|A|:=\sqrt{A^\dagger A}$. 

The space $\LA(\HA)$ is itself a $d^2$-dimensional Hilbert space with respect to the Hilbert--Schmidt inner product $\langle A,B\rangle_{\rm HS}:=\Tr(A^\dagger B)$ for $A,B\in\LA(\HA)$.
The corresponding Hilbert--Schmidt norm is denoted by $\|A\|_{\rm HS}:=\sqrt{\Tr(A^\dagger A)}$.
If $\{\phi_i\}_{i=1}^d$ is an orthonormal basis of $\HA$, then the matrix units $E_{ij}:=\ketbra{\phi_i}{\phi_j}$, $i,j=1,\ldots,d$, form an orthonormal basis of $\LA(\HA)$.

We shall also use the trace norm $\|A\|_1:=\Tr|A|$, with $|A|:=\sqrt{A^\dagger A}$, and the operator norm $\|A\|_\infty:=\sup_{\|\psi\|=1}\|A\psi\|$. 
In finite dimensions, all norms on $\LA(\HA)$ are equivalent; hence the notion of convergence does not depend on the choice of norm. Thus $A=\lim_{n\to\infty}A_n$ means, for any norm $\|\cdot\|_*$, that
$\lim_{n\to\infty}\|A-A_n\|_*=0$.

We use tensor products of finite-dimensional Hilbert spaces.
If $\HA_1$ and $\HA_2$ are finite-dimensional Hilbert spaces of dimensions $d_1$ and $d_2$, respectively, their tensor product is denoted by $\HA_1\otimes\HA_2$ and has dimension $d_1d_2$.
For $\psi\in\HA_1$ and $\phi\in\HA_2$, we write $\psi\otimes\phi$ for the corresponding product vector.
The tensor product is bilinear in the sense that
\[
(\alpha\psi+\beta\psi')\otimes\phi
=
\alpha(\psi\otimes\phi)+\beta(\psi'\otimes\phi),
\qquad
\psi\otimes(\alpha\phi+\beta\phi')
=
\alpha(\psi\otimes\phi)+\beta(\psi\otimes\phi')
\]
for $\alpha,\beta\in\CA$, $\psi,\psi'\in\HA_1$, and $\phi,\phi'\in\HA_2$.
The inner product on $\HA_1\otimes\HA_2$ is defined on product vectors by
\[
\braket{\psi_1\otimes\phi_1}{\psi_2\otimes\phi_2}
=
\braket{\psi_1}{\psi_2}\braket{\phi_1}{\phi_2},
\]
and is extended to all of $\HA_1\otimes\HA_2$ by linearity in the second argument and conjugate-linearity in the first argument.
If $\{e_i\}_{i=1}^{d_1}$ and $\{f_j\}_{j=1}^{d_2}$ are orthonormal bases of $\HA_1$ and $\HA_2$, respectively, then $\{e_i\otimes f_j\}_{i=1,\ldots,d_1,\ j=1,\ldots,d_2}$ is an orthonormal basis of $\HA_1\otimes\HA_2$.

For $A\in\LA(\HA_1)$ and $B\in\LA(\HA_2)$, the tensor product operator $A\otimes B$ is defined by
$(A\otimes B)(\psi\otimes\phi):=A\psi\otimes B\phi$ and extended linearly.
It satisfies
$(A\otimes B)^\dagger=A^\dagger\otimes B^\dagger$,
$(A_1\otimes B_1)(A_2\otimes B_2)=A_1A_2\otimes B_1B_2$,
and
$\Tr(A\otimes B)=\Tr A\Tr B$.
Under the natural identification $\LA(\HA_1)\otimes\LA(\HA_2)\simeq\LA(\HA_1\otimes\HA_2)$, every operator on $\HA_1\otimes\HA_2$ can be written as a finite sum $\sum_k A_k\otimes B_k$.

We shall also use the partial trace.
For $X\in\LA(\HA_1\otimes\HA_2)$, the partial trace over $\HA_2$ is the unique linear map $\Tr_2:\LA(\HA_1\otimes\HA_2)\to\LA(\HA_1)$ characterized by
$\Tr_2(A\otimes B)=A\Tr B$ for all $A\in\LA(\HA_1)$ and $B\in\LA(\HA_2)$.

\subsection{Quantum states and unitary dynamics}

According to quantum theory, every quantum system is associated with a Hilbert space. A quantum system is called a finite-level system, or more specifically a $d$-level system, if its associated Hilbert space is finite-dimensional, $\HA \simeq \CA^d$.
In quantum information science, such a system is sometimes called a {\it qudit} system, and in particular a {\it qubit} system when $d=2$. 
In this note, the quantum system of interest is always assumed to be a finite-level system. 
A quantum state is represented by a density operator $\rho \in \LA(\HA)$, namely a positive operator $\rho \geq 0$ with unit trace, $\Tr \rho=1$. 
The set $\SA(\HA)$ of density operators is convex: for $\rho_1,\rho_2 \in \SA(\HA)$ and $p \in [0,1]$, the state
$\rho=p\rho_1+(1-p)\rho_2$ represents the probabilistic mixture obtained by preparing $\rho_1$ with probability $p$ and $\rho_2$ with probability $1-p$.

For an isolated quantum system, time evolution is unitary: a state $\rho$ evolves as
$\rho \mapsto \rho' = U\rho U^\dagger$. Introducing a time parameter $t \in \mathbb{R}$, the time evolution from an initial state $\rho$ to the state $\rho_t$ at time $t$ is given by
$\rho_t=U_t\rho U_t^\dagger$, where $U_t$ is a unitary operator. Equivalently, $\rho_t$ satisfies the Schr\"odinger--von Neumann equation
\begin{equation}
    \frac{d\rho_t}{dt}=-i[H,\rho_t],
\end{equation}
where $H=H^\dagger$ is the (time-independent) Hamiltonian of the system and $U_t=\exp(-iHt)$.\footnote{Throughout this note, we set the reduced Planck constant $\hbar$ equal to $1$.}

However, an open quantum system cannot in general be described by unitary evolution alone, since phenomena such as decoherence, dissipation, and thermalization arise through its interaction with the surrounding environment. 
The main subject of this note is the mathematical description of such non-unitary time evolutions, especially those forming quantum Markovian dynamics characterized by a semigroup law.

\section{Quantum Markovian Dynamics}\label{sec:OS}

\subsection{Mathematical preliminaries for open quantum systems}

Let $\HA \simeq \CA^d$ be a finite-dimensional Hilbert space. 
A linear map $\Lambda$ on $\LA(\HA)$, or a real-linear map $\Lambda$ on $\LA_{\rm sa}(\HA)$, is called trace-preserving if $\Tr \Lambda(A)=\Tr A$ for all $A$ in the corresponding space. 
It is called positive\footnote{More explicitly, this means positivity-preserving, but the standard terminology is simply positive.} if $\Lambda(A)\geq 0$ for all $A\geq 0$.

\begin{Rem}\label{rem:LAsaLA}
Real-linear maps on $\LA_{\rm sa}(\HA)$ can always be identified unambiguously with their complex-linear extensions to $\LA(\HA)$. Indeed, every operator $A\in\LA(\HA)$ admits the Cartesian decomposition
\begin{equation}\label{CDec}
A=A_R+iA_I,
\qquad
A_R:=\frac{A+A^\dagger}{2},
\qquad
A_I:=\frac{A-A^\dagger}{2i},
\end{equation}
where $A_R,A_I\in\LA_{\rm sa}(\HA)$. Therefore, a real-linear map on
$\LA_{\rm sa}(\HA)$ uniquely determines a complex-linear extension to
$\LA(\HA)$ by
\[
\Lambda(A):=\Lambda(A_R)+i\Lambda(A_I).
\]
In what follows, we use the same symbol for a real-linear map on
$\LA_{\rm sa}(\HA)$ and for its complex-linear extension to $\LA(\HA)$,
according to the context.
\end{Rem}

One can extend the notion of positivity as follows. For $n\in\N$, a linear map
$\Lambda$ on $\LA(\HA)$ is called $n$-positive if the map
\[
\Lambda\otimes\I_n:
\LA(\HA)\otimes M_n(\CA)\simeq \LA(\HA\otimes\CA^n)
\longrightarrow
\LA(\HA)\otimes M_n(\CA)\simeq \LA(\HA\otimes\CA^n)
\]
is positive, where $\I_n$ denotes the identity map on $M_n(\CA)$. 
Equivalently, $\Lambda\otimes\I_n$ is required to preserve the positivity of operators on the composite system $\HA\otimes\CA^n$. 
The map $\Lambda$ is called completely positive \cite{Stinespring1955}, or CP, if it is $n$-positive for all $n\in\N$.
Note that $1$-positivity is equivalent to positivity. 
If $\Lambda$ is $n$-positive, then it is $m$-positive for every $m\leq n$.

We recall the following standard characterizations of completely positive maps \cite{Choi1975,Jamiolkowski1972,Kraus1971,Stinespring1955}.
\begin{Thm}\label{thm:cp}
For a linear map $\Lambda$ on $\LA(\HA)$ with $\HA \simeq \CA^d$, the following conditions are equivalent:
\begin{enumerate}
\item[(i)] $\Lambda$ is completely positive.
\item[(ii)] $\Lambda$ is $d$-positive.
\item[(iii)] For a maximally entangled state $\Psi_M:=\frac{1}{\sqrt{d}}\sum_{i=1}^d e_i\otimes e_i$, where $\{e_i\}_{i=1}^d$ is an orthonormal basis of $\HA$, one has
\[
(\Lambda\otimes\id_d)(\ketbra{\Psi_M}{\Psi_M})\geq 0.
\]
\end{enumerate}
\end{Thm}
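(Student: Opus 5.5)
We will prove the cycle of implications (i)$\Rightarrow$(ii)$\Rightarrow$(iii)$\Rightarrow$(i). The first two are immediate. If $\Lambda$ is completely positive, it is in particular $d$-positive. If $\Lambda$ is $d$-positive, then $\Lambda\otimes\id_d$ is positive on $\LA(\HA\otimes\CA^d)$, where we identify $\CA^d\simeq\HA$ via the basis $\{e_i\}$. Since $\ketbra{\Psi_M}{\Psi_M}\geq 0$ is a rank-one projection, its image under $\Lambda\otimes\id_d$ is positive. All the work lies in (iii)$\Rightarrow$(i), and the plan is to go through a Kraus decomposition.

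Assume first that the Choi operator $C:=d\,(\Lambda\otimes\id_d)(\ketbra{\Psi_M}{\Psi_M})=\sum_{i,j}\Lambda(E_{ij})\otimes E_{ij}$ is positive, where $E_{ij}=\ketbra{e_i}{e_j}$. By the spectral theorem, $C=\sum_{k}\ketbra{v_k}{v_k}$ with (unnormalized) vectors $v_k\in\HA\otimes\HA$, $k=1,\ldots,d^2$. For each $k$ define $K_k\in\LA(\HA)$ by $K_k e_j:=(\II\otimes\bra{e_j})v_k$, that is, $\braket{e_i}{K_k e_j}=\braket{e_i\otimes e_j}{v_k}$. With this definition $v_k=\sum_j K_k e_j\otimes e_j$. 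Comparing the $(\cdot\otimes e_i,\cdot\otimes e_j)$ blocks of $C$ then gives $\Lambda(E_{ij})=\sum_k K_kE_{ij}K_k^\dagger$. Both sides are linear in the argument and the $E_{ij}$ form a basis, so we obtain the Kraus form $\Lambda(A)=\sum_k K_kAK_k^\dagger$ for all $A\in\LA(\HA)$. This identification of blocks is the one step that requires careful index bookkeeping. Linearity of $\Lambda$ is used essentially here, together with the identity $\ketbra{K e_i}{K e_j}=KE_{ij}K^\dagger$.

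Complete positivity then follows directly. For any $n$ and any $X\geq 0$ on $\HA\otimes\CA^n$, we have $(\Lambda\otimes\I_n)(X)=\sum_k(K_k\otimes\II_n)X(K_k\otimes\II_n)^\dagger$. Each term is positive, because $\braket{\xi}{BXB^\dagger\xi}=\braket{B^\dagger\xi}{X B^\dagger\xi}\geq0$, and a sum of positive operators is positive. Hence $\Lambda$ is $n$-positive for every $n$, which is (i).

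The main obstacle is the Kraus reconstruction in the second paragraph, specifically the passage from $C\ge0$ to the operators $K_k$ through the vector–operator correspondence $v\leftrightarrow K$ with $v=(K\otimes\II)\sum_j e_j\otimes e_j$. To keep it transparent I would state this correspondence as a small auxiliary observation. It is a linear bijection $\HA\otimes\HA\to\LA(\HA)$ satisfying $\ketbra{v}{w}=\sum_{i,j}K_vE_{ij}K_w^\dagger\otimes E_{ij}$, and the block comparison then reads off directly from it.
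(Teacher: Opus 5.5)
Your proposal is correct and follows essentially the same route as the paper: the trivial implications (i)$\Rightarrow$(ii)$\Rightarrow$(iii), then for (iii)$\Rightarrow$(i) you write the positive Choi matrix as $C=\sum_k\ketbra{v_k}{v_k}$, use the vector--operator correspondence $v_k=\sum_j K_ke_j\otimes e_j$ to read off $\Lambda(E_{ij})=\sum_k K_kE_{ij}K_k^\dagger$ by uniqueness of the expansion in the basis $\{E_{ij}\}$, and finally obtain complete positivity from the Kraus form via $(K_k\otimes\II_n)X(K_k\otimes\II_n)^\dagger\geq 0$. Your $K_k$ are exactly the paper's $M_\alpha$, defined there by $M_\alpha e_i=f_{\alpha i}$ with $v_\alpha=\sum_i f_{\alpha i}\otimes e_i$.
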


\begin{Rem}
By definition, (i) $\Rightarrow$ (ii) $\Rightarrow$ (iii) are immediate.
The theorem says that the converse implications also hold.
In particular, for a linear map on $\LA(\HA)$ with $\dim\HA=d$, complete positivity is already equivalent to $d$-positivity.
Moreover, although $d$-positivity of $\Lambda$ means that $\Lambda\otimes\id_d$ preserves positivity for all positive operators on $\HA\otimes\CA^d$, for complete positivity it is enough to check positivity only for the single maximally entangled state $\ketbra{\Psi_M}{\Psi_M}$.
The operator $C_\Lambda:=(\Lambda\otimes\id_d)(\ketbra{\Psi_M}{\Psi_M})$ is called the Choi matrix of $\Lambda$, and the correspondence $\Lambda \mapsto C_\Lambda$ is known as the Choi--Jamio{\l}kowski isomorphism.
\end{Rem}

The following theorem gives the Choi--Kraus representation of completely positive maps \cite{Kraus1971,Choi1975,Stinespring1955}.
\begin{Thm}\label{thm:cp2}
For a linear map $\Lambda$ on $\LA(\HA)$ with $\HA \simeq \CA^d$, the following conditions are equivalent:
\begin{enumerate}
\item[(i)] $\Lambda$ is completely positive.
\item[(ii)] There exist operators $M_\alpha\in\LA(\HA)$ such that
\[
\Lambda(A)
=
\sum_\alpha M_\alpha A M_\alpha^\dagger
\qquad
(A\in\LA(\HA)).
\]
\end{enumerate}
Moreover, $\Lambda$ is trace-preserving if and only if the operators $M_\alpha$ can be chosen so that
\[
\sum_\alpha M_\alpha^\dagger M_\alpha=\II.
\]
\end{Thm}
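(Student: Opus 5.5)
The plan is to derive the Kraus form from the Choi matrix, using Theorem~\ref{thm:cp}, and then read off trace preservation from the Kraus operators.

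\emph{Direction (ii)$\Rightarrow$(i).} This direction is direct. For any $n$ and any positive $X\in\LA(\HA\otimes\CA^n)$ we have
\[
(\Lambda\otimes\id_n)(X)=\sum_\alpha (M_\alpha\otimes\II_n)X(M_\alpha\otimes\II_n)^\dagger .
\]
Each term is of the form $BXB^\dagger$ and is therefore positive, since $\braket{\psi}{BXB^\dagger\psi}=\braket{B^\dagger\psi}{XB^\dagger\psi}\ge 0$. A sum of positive operators is positive, so $\Lambda$ is $n$-positive for every $n$.

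\emph{Direction (i)$\Rightarrow$(ii).} By Theorem~\ref{thm:cp}, the Choi matrix $C_\Lambda=(\Lambda\otimes\id_d)(\ketbra{\Psi_M}{\Psi_M})$ is positive. Write it as
\[
C_\Lambda=\frac1d\sum_{i,j}\Lambda(\ketbra{e_i}{e_j})\otimes\ketbra{e_i}{e_j}.
\]
Being positive, it is normal, so it has an eigenvalue decomposition $C_\Lambda=\sum_\alpha \lambda_\alpha\ketbra{v_\alpha}{v_\alpha}$ with $\lambda_\alpha\ge0$. Expand each eigenvector as $v_\alpha=\sum_j w_{\alpha j}\otimes e_j$ with $w_{\alpha j}\in\HA$. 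Define
\[
M_\alpha:=\sqrt{d\lambda_\alpha}\sum_j\ketbra{w_{\alpha j}}{e_j},
\]
so that $\sqrt{\lambda_\alpha}\,v_\alpha=\frac{1}{\sqrt d}\sum_j M_\alpha e_j\otimes e_j=(M_\alpha\otimes\II)\Psi_M$. Substituting gives
\[
C_\Lambda=\sum_\alpha (M_\alpha\otimes\II)\ketbra{\Psi_M}{\Psi_M}(M_\alpha\otimes\II)^\dagger
=\frac1d\sum_{i,j}\Big(\sum_\alpha M_\alpha\ketbra{e_i}{e_j}M_\alpha^\dagger\Big)\otimes\ketbra{e_i}{e_j}.
\]
The operators $\ketbra{e_i}{e_j}$ are linearly independent in the second factor, so comparing the two expressions for $C_\Lambda$ yields $\Lambda(\ketbra{e_i}{e_j})=\sum_\alpha M_\alpha\ketbra{e_i}{e_j}M_\alpha^\dagger$ for all $i,j$. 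Both sides are linear in the argument, and the matrix units $E_{ij}$ form a basis of $\LA(\HA)$, so the identity extends to all $A\in\LA(\HA)$.

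The main obstacle is this "vector-to-operator" reshaping step. One has to check carefully that $(M\otimes\II)\Psi_M$ runs over all vectors of $\HA\otimes\CA^d$ as $M$ runs over $\LA(\HA)$, and that the coefficient comparison is legitimate. A clean way to justify the comparison is to apply the partial-trace identity $\Tr_2[(\II\otimes\ketbra{e_j}{e_i})X]$ to both expressions, which extracts the $(i,j)$ block.

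\emph{Trace preservation.} For the representation above,
\[
\Tr\Lambda(A)=\Tr\Big(\sum_\alpha M_\alpha^\dagger M_\alpha\, A\Big)
\]
by cyclicity of the trace. So $\Tr\Lambda(A)=\Tr A$ for all $A$ holds iff $\Tr\big((\sum_\alpha M_\alpha^\dagger M_\alpha-\II)A\big)=0$ for all $A$. Choosing $A=(\sum_\alpha M_\alpha^\dagger M_\alpha-\II)^\dagger$ and using nondegeneracy of the Hilbert--Schmidt inner product, this is equivalent to $\sum_\alpha M_\alpha^\dagger M_\alpha=\II$. This holds for any Kraus family, in particular the one constructed from $C_\Lambda$.
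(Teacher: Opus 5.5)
Your proof is correct and follows essentially the same route as the paper. Both arguments use positivity of the Choi matrix, its eigen-decomposition, reshaping each eigenvector into a Kraus operator via $(M_\alpha\otimes\II)\Psi_M$, comparing the coefficients of $E_{ij}$, and reading off trace preservation from $\Tr\Lambda(A)=\Tr[(\sum_\alpha M_\alpha^\dagger M_\alpha)A]$; the only cosmetic difference is that the paper absorbs the eigenvalues and the factor $d$ into unnormalized eigenvectors of $d\,C_\Lambda$, whereas you carry $\sqrt{d\lambda_\alpha}$ explicitly.
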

\noindent For completeness, we give proofs of Theorems~\ref{thm:cp} and \ref{thm:cp2} in \ref{app:cp}.

We denote by $\LA(\LA(\HA))$ the set of all linear maps on $\LA(\HA)$.
Since $\LA(\HA)$ is itself a Hilbert space with respect to the Hilbert--Schmidt inner product, we can also regard $\LA(\LA(\HA))$ as the space of linear operators on this Hilbert space. Hence $\LA(\LA(\HA))$ carries its own Hilbert--Schmidt inner product: 
For $\Gamma,\Phi\in\LA(\LA(\HA))$, we define
\[
    \langle \Gamma,\Phi\rangle_{\rm HS}
    :=
    \Tr(\Gamma^\dagger\Phi),
\]
where $\Gamma^\dagger$ denotes the adjoint of $\Gamma$ with respect to the
Hilbert--Schmidt inner product on $\LA(\HA)$, i.e.,
$\braket{A}{\Gamma(B)}_{\rm HS}
=
\braket{\Gamma^\dagger(A)}{B}_{\rm HS}$ for all $A,B\in\LA(\HA)$, and the
trace is taken over the Hilbert space $\LA(\HA)$.
Equivalently, if
$\{G_\alpha\}_{\alpha=1,\ldots,d^2}$ is any Hilbert--Schmidt orthonormal
basis of $\LA(\HA)$, then
\begin{equation}\label{eq:HSIL}
    \langle \Gamma,\Phi\rangle_{\rm HS}=
    \sum_\alpha
    \langle G_\alpha,\Gamma^\dagger\Phi(G_\alpha)\rangle_{\rm HS} =
    \sum_\alpha
    \langle \Gamma(G_\alpha),\Phi(G_\alpha)\rangle_{\rm HS} =
    \sum_\alpha
    \Tr\left[(\Gamma(G_\alpha))^\dagger\Phi(G_\alpha)\right].
\end{equation}
The corresponding Hilbert--Schmidt norm is given by
$\|\Gamma\|_{\rm HS}=\sqrt{\langle \Gamma,\Gamma\rangle_{\rm HS}}$. Besides the Hilbert--Schmidt norm, one can introduce various other norms on spaces of linear maps. 
In particular, we mainly use the following operator norm induced by the trace norm.
For a real-linear map $\Lambda$ on $\LA_{\rm sa}(\HA)$, we define
\begin{equation}\label{eq:OpNormLam}
   \|\Lambda\|_{1 \to 1}
   :=
   \sup_{\substack{A\in\LA_{\rm sa}(\HA)\\ A \neq 0 }}
   \frac{\|\Lambda(A)\|_1}{\|A\|_1}.
\end{equation}

\subsection{Quantum dynamics from an operational viewpoint}

In this subsection, assuming that quantum states are represented by density operators, we characterize quantum dynamics by imposing natural operational requirements. Let the system of interest be a finite-level system\footnote{In open quantum systems, the system of interest can often be genuinely finite-dimensional, as exemplified by an electronic spin system or the polarization degree of freedom of a photon.
The environment may still be infinite-dimensional, or may possess infinitely many degrees of freedom; the finite-dimensional restriction concerns only the quantum system under consideration.} $S$ with Hilbert space $\HA \simeq \CA^d$. 
Suppose that the transformation of the system from an initial state to a final state is described by a time evolution map $\Lambda$. 
Since a time evolution map should naturally send states to states, $\Lambda$ must at least map the state space $\SA(\HA)$ into itself. 
Moreover, since time evolution should also preserve probabilistic mixtures of preparations, $\Lambda$ should be affine on $\SA(\HA)$; that is,
\[
    \Lambda(p\rho_1+(1-p)\rho_2)
    =
    p\Lambda(\rho_1)+(1-p)\Lambda(\rho_2)
\]
for all $\rho_1,\rho_2 \in \SA(\HA)$ and $p \in [0,1]$. 
\begin{F}\label{F:LExt}
It is a standard fact that every affine map $\Lambda:\SA(\HA)\to\SA(\HA)$ uniquely extends to a trace-preserving positive real-linear map on $\LA_{\rm sa}(\HA)$. 
\end{F}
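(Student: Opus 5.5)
The plan is to extend $\Lambda$ in two stages: first from $\SA(\HA)$ to the cone $\LA_+(\HA)$ by positive homogeneity, and then to $\LA_{\rm sa}(\HA)$ by taking differences. For $A\geq 0$ with $A\neq 0$, write $A=(\Tr A)\,\rho_A$ with $\rho_A:=A/\Tr A\in\SA(\HA)$. I would set $\tilde\Lambda(A):=(\Tr A)\,\Lambda(\rho_A)$ and $\tilde\Lambda(0):=0$.

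I would then check additivity on the cone. For $A,B\geq 0$ that are not both zero, put $a=\Tr A$ and $b=\Tr B$. Then $(A+B)/(a+b)=\frac{a}{a+b}\rho_A+\frac{b}{a+b}\rho_B$ is a convex combination of states, and affinity of $\Lambda$ gives $\tilde\Lambda(A+B)=\tilde\Lambda(A)+\tilde\Lambda(B)$. The cases where $A$ or $B$ vanishes are trivial. Homogeneity $\tilde\Lambda(cA)=c\tilde\Lambda(A)$ for $c\geq 0$ is immediate from the definition. By construction $\tilde\Lambda(A)\geq 0$ and $\Tr\tilde\Lambda(A)=\Tr A$, because $\Lambda(\rho_A)$ is a state.

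Next I would extend to self-adjoint operators. Every $X\in\LA_{\rm sa}(\HA)$ can be written as $X=P-Q$ with $P,Q\geq 0$, for example via the spectral decomposition $X=X_+-X_-$. Define $\hat\Lambda(X):=\tilde\Lambda(P)-\tilde\Lambda(Q)$. The one point needing care is well-definedness. If $P-Q=P'-Q'$, then $P+Q'=P'+Q$, and additivity on the cone gives $\tilde\Lambda(P)+\tilde\Lambda(Q')=\tilde\Lambda(P')+\tilde\Lambda(Q)$, which is exactly what is needed. Additivity of $\hat\Lambda$ then follows by adding decompositions. For real scalars, the case $c\geq 0$ is immediate, and for $c<0$ I would use $cX=|c|Q-|c|P$. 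So $\hat\Lambda$ is real-linear, and it is positive and trace-preserving because $\Tr(P-Q)$ maps to $\Tr P-\Tr Q$.

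Uniqueness follows from the same decomposition. Any real-linear extension must agree with $\Lambda$ on states, and so, by linearity, on $\LA_+(\HA)=\{c\rho: c\geq 0,\ \rho\in\SA(\HA)\}$. Since these positive operators span $\LA_{\rm sa}(\HA)$ over $\mathbb{R}$, the extension is determined on all of $\LA_{\rm sa}(\HA)$. I expect the main obstacle to be only the well-definedness check in the previous paragraph, and that check rests entirely on cone additivity, which comes directly from affinity. Remark~\ref{rem:LAsaLA} then supplies the complex-linear extension if it is wanted.
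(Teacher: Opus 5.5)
Your proof is correct, but it takes a different route from the paper's.

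The paper's argument has three steps. First it proves a general lemma: an affine map on any convex set $S$ extends uniquely to an affine map on ${\rm aff}(S)$. Well-definedness comes from splitting the affine coefficients into positive and negative parts, which reduces the question to an equality of convex combinations. Second, it identifies ${\rm aff}(\SA(\HA))$ with the trace-one hyperplane. Third, it linearizes by fixing a reference state $\rho_0$ and setting $\widehat{\Lambda}(X)=(\Tr X)\Lambda(\rho_0)+\tilde{\Lambda}(\rho_0+Y)-\Lambda(\rho_0)$ for $X=(\Tr X)\rho_0+Y$. Real-linearity, trace preservation, positivity and uniqueness are then each checked separately.

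You instead use the fact that $\SA(\HA)$ is a base of the generating cone $\LA_+(\HA)$. You extend homogeneously to the cone, where cone additivity is exactly affinity of $\Lambda$ on two-point convex combinations. You then apply the standard difference extension through $X=X_+-X_-$.

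Your route is shorter, and positivity and trace preservation are built into the construction. The only genuine check is well-definedness, which you handle correctly via $P+Q'=P'+Q$. It also sidesteps the identification ${\rm aff}(\SA(\HA))=\{A\in\LA_{\rm sa}(\HA):\Tr A=1\}$, which the paper states without proof and which itself rests on the positive--negative decomposition.

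The paper's route, in exchange, isolates a reusable statement about affine maps on arbitrary convex sets that does not refer to any cone. It also makes the affine-to-linear passage explicit. Even so, for the positivity check it falls back on the same homogeneity trick $X=t\rho$ that drives your whole argument.
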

Moreover, as explained in Remark~\ref{rem:LAsaLA}, $\Lambda$ can be further
extended to $\LA(\HA)$. Thus, whether for an isolated or an open quantum system,
a time evolution map may be naturally viewed as a trace-preserving positive map
on $\LA_{\rm sa}(\HA)$, or equivalently through its complex-linear extension on
$\LA(\HA)$.

However, in the dynamics of open quantum systems, it is customary to impose the stronger condition of complete positivity rather than mere positivity. 
One simple motivation for this stronger requirement is the following. 
The state of the system of interest may be only part of a larger composite system, namely a reduced density operator obtained by ignoring an ancillary or environmental system. Therefore, even when the dynamics acts only on the system of interest, it should preserve the positivity of states of the composite system. 
More precisely, if an arbitrary ancillary system $\CA^n$ is attached to the system and left unaffected by the dynamics, the extended map is naturally given by $\Lambda\otimes\I_n$. Complete positivity is precisely the requirement that $\Lambda\otimes\I_n$ preserves positivity for every $n\in\N$. Thus complete positivity guarantees physical consistency even when the system is correlated or entangled with an external system. Since trace preservation of $\Lambda\otimes\I_n$ follows from that of $\Lambda$, no additional trace condition is needed. 

Another convincing reason\cite{GoriniKossakowskiSudarshan1976} comes from the characterization of time evolution
maps in terms of reduced dynamics. Suppose that the system $S$ interacts with
an environment $E$. If the total system $S+E$ is regarded as an isolated system,
then its dynamics is described by a unitary evolution on $S+E$, namely
$X\mapsto UXU^\dagger$. If the system and the environment are initially
uncorrelated, then, for an initial state $\rho$ of the system and an initial
state $\rho_E$ of the environment, often taken to be a thermal equilibrium state,
the initial state of the total system is $\rho\otimes\rho_E$, and it evolves as
$ \rho\otimes\rho_E \mapsto U(\rho\otimes\rho_E)U^\dagger$. 
The resulting state of the system is obtained by tracing out the environment:
\[
\Lambda(\rho)
=
\Tr_E\left[U(\rho\otimes\rho_E)U^\dagger\right].
\]
It is straightforward to verify that this construction gives a completely
positive map\footnote{
Indeed, positivity is clear because $\rho\otimes\rho_E\geq 0$, unitary conjugation preserves positivity, and the partial trace over $E$ maps positive operators to positive operators.
Complete positivity follows by applying the same argument after adjoining an arbitrary $n$-dimensional ancillary system:
\begin{equation}
(\Lambda\otimes \I_n)(X)
=
\Tr_E\left[
(U\otimes I_n)(X\otimes\rho_E)(U^\dagger\otimes I_n)
\right].
\end{equation}
}.
Consequently, a time evolution map is generally assumed to be a completely positive and trace-preserving (CPTP) map. In quantum information theory, such a map is also called a quantum channel.

\begin{Rem}
Whether complete positivity should be regarded as a universal requirement for all quantum dynamics has often been discussed.
In our own work, we have pursued this question from the viewpoint of experimentally testable universal constraints on relaxation times implied by the GKLS equation.
This line of research began with the qubit case~\cite{Rel1}.
For general $d$-level systems, the tightest form of such a constraint was conjectured in~\cite{Rel2} and proved there for generic physical systems.
I would like to record here, with deep respect, that this was Kossakowski's last paper.
The conjecture was later resolved in full in~\cite{Rel3}, and the same viewpoint has subsequently developed into the study of experimentally testable consequences of $n$-positivity~\cite{Rel4}.
\end{Rem}

To describe the time evolution continuously, we introduce a time parameter and consider a family of time evolution maps $\{\Lambda_t\}_{t\geq 0}$. 
From a physical point of view, it is natural to assume that the state of the system changes continuously in time. Since the trace norm has an operational meaning, for instance in terms of the optimal distinguishability of quantum states, it is also natural to use the trace norm to measure the distance between states.
Accordingly, the continuity of the time evolution may be formulated as
\begin{equation}\label{eq:TimeConti}
\|\Lambda_t(\rho)-\Lambda_s(\rho)\|_1 \to 0 \quad (t\to s)
\end{equation}
for every state $\rho$.

So far, we have discussed the general theory of time evolution maps for open quantum systems. We now turn to an important class of dynamics, namely Markovian dynamics.
In many open quantum systems, Markovian dynamics provides an effective description of dissipative processes such as spin relaxation in NMR, spontaneous emission and damping in quantum optics, and thermal relaxation of a small quantum system weakly coupled to a large reservoir. In such cases, the environmental correlation time is much shorter than the characteristic time scale of the system, so that memory effects can be neglected \cite{GardinerZoller2004,BreuerPetruccione2002}.

Although it is difficult to give a completely unambiguous definition of Markovianity in open quantum dynamics, its basic idea is that the future evolution does not depend on the past history, but only on the present state. 
Following GKSL \cite{GoriniKossakowskiSudarshan1976,Lindblad1976}, we formulate this idea by the semigroup law for the time evolution maps: $\Lambda_{t+s}=\Lambda_t\circ\Lambda_s,\ t,s\geq 0$. Indeed, if the initial state is $\rho$ and the state at time $t$ is
$\rho_t:=\Lambda_t(\rho)$, then
\[
\rho_{t+\Delta t}
=
\Lambda_{t+\Delta t}(\rho)
=
(\Lambda_{\Delta t}\circ\Lambda_t)(\rho)
=
\Lambda_{\Delta t}(\rho_t).
\]

Thus the state at time $t+\Delta t$ is determined only by the state $\rho_t$ at time $t$. 

With this idea in mind, Kossakowski \cite{Kossakowski1972a,Kossakowski1972b} formulated quantum Markovian dynamics in terms of dynamical semigroups.

\begin{Def}
A one-parameter family of real-linear maps $\{\Lambda_t\}_{t\geq 0}$ on $\LA_{\rm sa}(\HA)$ is called a \emph{dynamical semigroup} if it satisfies the following conditions:
\begin{enumerate}
\item[(i)] (State preservation): $\Lambda_t$ is positive and trace-preserving for all $t\geq 0$;
\item[(ii)] (Semigroup law): $\Lambda_{t+s}=\Lambda_t\circ\Lambda_s$ for all $t,s\geq 0$;
\item[(iii)] (Strong continuity): $\lim_{t\downarrow 0}\|\Lambda_t(X)-X\|_1=0$ for all $X\in\LA_{\rm sa}(\HA)$.
\end{enumerate}
If, in addition, the complex-linear extension of each $\Lambda_t$ to $\LA(\HA)$ is completely positive, then $\{\Lambda_t\}_{t\geq 0}$ is called a \emph{CP dynamical semigroup} \cite{GoriniKossakowskiSudarshan1976}.
\end{Def}
In order to distinguish positivity from complete positivity clearly, we shall use the following terminology in this note. 
A trace-preserving positive semigroup will be called a \emph{positive dynamical semigroup}, whereas a trace-preserving completely positive semigroup will be called a \emph{completely positive (CP) dynamical semigroup}.
In both cases, trace preservation is included in the terminology.

\begin{Rem}
\leavevmode\par
\begin{enumerate}
\item A positive dynamical semigroup is thus a strongly continuous semigroup on the real
normed vector space $\LA_{\rm sa}(\HA)$.

\item The strong continuity condition in (iii) means continuity at the initial time, where one naturally sets $\Lambda_0=\I$. Moreover, together with the semigroup law, it implies the continuity condition \eqref{eq:TimeConti} at every time.

\item In the finite-dimensional setting considered in this note, the particular choice of norm is not essential. In fact, strong continuity is equivalent to uniform continuity,
\[
\lim_{t\downarrow 0}\|\Lambda_t-\I\|_{1 \to 1}=0.
\]
For this reason, although we state the continuity condition in the trace norm, we shall often use the equivalent uniform continuity formulation.
\end{enumerate}
See Sec.~\ref{LPThm} and \ref{app:semigroup} for details.
\end{Rem}
Since $\LA_{\rm sa}(\HA)$ is finite-dimensional, the continuity condition (iii),
together with the semigroup law, ensures the existence of a generator
$\mathfrak{L}$ defined on the whole space $\LA_{\rm sa}(\HA)$ by
\begin{equation}\label{eq:L}
    \mathfrak{L}(X)
    :=
    \lim_{t\downarrow 0}
    \frac{\Lambda_t(X)-X}{t},
    \qquad X\in\LA_{\rm sa}(\HA).
\end{equation}
The dynamics $\rho\mapsto\rho_t:=\Lambda_t(\rho)$ is then described by the
master equation
\begin{equation}
    \frac{d}{dt}\rho_t=\mathfrak{L}(\rho_t).
\end{equation}
Equivalently,
\begin{equation}
    \Lambda_t=\exp(t\mathfrak{L})
    :=\sum_{n=0}^{\infty}\frac{t^n\mathfrak{L}^n}{n!}.
\end{equation}
For completeness, we give a simple proof of the existence of the generator in \ref{app:Gen}.

\section{Derivation of the GKLS generator}

The following theorem is Kossakowski's 1972 characterization of generators of positive dynamical semigroups in the present sense, namely positive trace-preserving semigroups; see Theorem~5 in Ref.~\cite{Kossakowski1972b}, where the result is stated in a different but equivalent form.
\begin{Thm}[Kossakowski 1972]\label{thm:K}
A linear map $\mathfrak{L}$ on ${\cal L}_{\rm sa}(\HA)$ is the generator of a positive dynamical semigroup if and only if the following conditions hold:
\begin{subequations}\label{cond}
\begin{enumerate}
\item[(i)] (Positivity) For any mutually orthogonal projections $P$ and $Q$, i.e., $PQ=QP=0$,
\begin{equation}
\Tr [P(\mathfrak{L}(Q))] \geq 0.
\label{cond1}
\end{equation}

\item[(ii)] (Trace Preservation) For any projection $P$,
\begin{equation}
\Tr [\mathfrak{L}(P)] = 0.
\label{cond2}
\end{equation}
\end{enumerate}
\end{subequations}\end{Thm}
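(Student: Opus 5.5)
Both directions pass through the relation $\Lambda_t=\exp(t\mathfrak{L})$ established above. Trace preservation is the easy part. If each $\Lambda_t$ is trace-preserving, then differentiating $\Tr\Lambda_t(P)=\Tr P$ at $t=0$ gives \eqref{cond2}. Conversely, \eqref{cond2} together with the spectral decomposition of a self-adjoint $X$ gives $\Tr\mathfrak{L}(X)=0$ for all $X\in\LA_{\rm sa}(\HA)$. Hence $\frac{d}{dt}\Tr\Lambda_t(X)=\Tr\mathfrak{L}(\Lambda_t X)=0$, and $\Lambda_t$ is trace-preserving.

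\emph{Necessity of (i).} Let $PQ=0$. Since $\Lambda_t(Q)\ge 0$ and $P\geq 0$, we have $\Tr[P\Lambda_t(Q)]\ge 0$. Because $\Tr[PQ]=0$, it follows that
\[
\Tr[P\mathfrak{L}(Q)]=\lim_{t\downarrow0}\frac{\Tr[P\Lambda_t(Q)]-\Tr[PQ]}{t}\ge 0 .
\]

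\emph{Sufficiency of (i).} This is the main obstacle. I would argue via the resolvent or via a shift, and the second route seems cleaner. First extend (i) by linearity and spectral decomposition: for any $A,B\ge0$ with $AB=0$, write $A=\sum a_\mu P_\mu$ and $B=\sum b_\nu Q_\nu$. Each $P_\mu$ is then orthogonal to each $Q_\nu$, so $\Tr[A\mathfrak{L}(B)]\ge0$.

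Next set $\mathfrak{L}_\lambda:=\mathfrak{L}+\lambda\I$ and aim to show that $\exp(t\mathfrak{L}_\lambda)=e^{\lambda t}\Lambda_t$ maps positive operators into \emph{strictly} positive ones once $\lambda$ is large. A cleaner variant is to prove positivity of $\Lambda_t$ for the perturbed generator $\mathfrak{L}_\varepsilon:=\mathfrak{L}+\varepsilon(\mathfrak{D}-\I)$, with $\mathfrak{D}(X)=\Tr(X)\,\II/d$, and then let $\varepsilon\to0$. The generator $\mathfrak{L}_\varepsilon$ satisfies the strict version of (i) on rank-one orthogonal pairs.

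The core is a first-exit argument. Suppose $\rho\ge0$ and $\rho_t:=e^{t\mathfrak{L}_\varepsilon}\rho$, and let $t_0$ be the first time at which positivity fails. At time $t_0$, $\rho_{t_0}\ge0$ has a kernel vector $\psi$ such that $\langle\psi|\rho_t\psi\rangle$ becomes negative just after $t_0$. Differentiating gives
\[
\frac{d}{dt}\langle\psi|\rho_t\psi\rangle\Big|_{t_0}=\Tr\bigl[\ketbra{\psi}{\psi}\,\mathfrak{L}_\varepsilon(\rho_{t_0})\bigr],
\]
and this is positive by the strict inequality, since $\ketbra{\psi}{\psi}\rho_{t_0}=0$. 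That contradicts the loss of positivity.

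Making this rigorous needs some care, because the violating vector may depend on $t$. I would handle it by working instead with the function $m(t):=\min_{\|\psi\|=1}\langle\psi|\rho_t\psi\rangle$, starting from a strictly positive $\rho$ and using compactness of the unit sphere. Alternatively, one can avoid derivatives altogether with a Trotter-type argument: write $\mathfrak{L}=\mathfrak{K}-\lambda\I$ where $\mathfrak{K}$ is positive on $\LA_+$ for large $\lambda$. Checking that positivity of $\mathfrak{K}=\mathfrak{L}+\lambda\I$ follows from (i) reduces to $\langle\psi|\mathfrak{L}(\ketbra{\phi}{\phi})\psi\rangle+\lambda|\langle\psi|\phi\rangle|^2\ge0$, and this is the delicate estimate I expect to be hardest. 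Decompose $\psi$ into its components parallel and orthogonal to $\phi$: the orthogonal part is controlled by (i), while the cross terms are bounded by $\|\mathfrak{L}\|$ times the parallel component. A Cauchy--Schwarz-type balancing then should force the required inequality, though the orthogonal part may need a further quadratic-form argument.

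Once positivity is obtained for the perturbed semigroups, let $\varepsilon\to0$. Positivity passes to the limit because the cone $\LA_+$ is closed.
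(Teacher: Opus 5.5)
Your main line is correct and takes a genuinely different route from the paper: regularize to $\mathfrak{L}_\varepsilon=\mathfrak{L}+\varepsilon(\mathfrak{D}-\I)$, show by a first-hitting-time argument that $e^{t\mathfrak{L}_\varepsilon}$ leaves $\LA_+(\HA)$ invariant, then let $\varepsilon\to0$.

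The paper never works with the cone directly. It shows that (i) and (ii) make $\mathfrak{L}$ dissipative for the Kossakowski product $\|A\|_1\Tr[\operatorname{sgn}(A)B]$. It then obtains trace-norm contractivity from the finite-dimensional Lumer--Phillips theorem (Theorem~\ref{thm:LP}) and converts contractivity into positivity via Theorem~\ref{thm:PosCont}. That route needs (ii) twice: to eliminate the diagonal terms $\Tr[P_\mu\mathfrak{L}(P_\mu)]$, and in the equivalence ``positive $\Leftrightarrow$ contractive'', which holds only for trace-preserving maps.

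Your Nagumo-type tangency argument uses only (i) for positivity. It therefore decouples the two conditions and shows that (i) alone characterizes positivity of $e^{t\mathfrak{L}}$. For example, $\mathfrak{L}=\I$ satisfies (i) and generates a positive but non-contractive semigroup, which no contractivity argument can reach. The paper's route buys the structural identification of positivity with trace-norm dissipativity, with the Kossakowski product as the supporting functional of the trace norm; that is the point of the note and the form used in Kossakowski's original Banach-space setting. Yours is more elementary, at the price of a regularization and two limits.

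Two things must be repaired or dropped.

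\emph{The exit step as written is not valid.} A derivative bound along one fixed kernel vector $\psi$ of $\rho_{t_0}$ does not make $\langle\psi|\rho_t\psi\rangle$ turn negative after $t_0$. It also says nothing about nearby vectors along which positivity might actually fail. Your suggested fix works if you argue from the left:
\begin{enumerate}
\item[(a)] Take $\rho>0$, let $t_0$ be the first zero of the continuous function $m(t)$, and pick a unit $\psi$ with $\rho_{t_0}\psi=0$.
\item[(b)] Then $f(t)=\langle\psi|\rho_t\psi\rangle>0$ for $t<t_0$ and $f(t_0)=0$ force $f'(t_0)\le0$.
\item[(c)] Your extended form of (i), with $A=\ketbra{\psi}{\psi}$ and $B=\rho_{t_0}$, gives $f'(t_0)=\Tr[\ketbra{\psi}{\psi}\mathfrak{L}_\varepsilon(\rho_{t_0})]\ge\varepsilon\Tr(\rho_{t_0})/d>0$, since $0\neq\rho_{t_0}\ge0$. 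This is a contradiction.
\item[(d)] General $\rho\ge0$ then follows from $\rho+\delta\II$ with $\delta\downarrow0$.
\end{enumerate}
Note that strictness is needed for this pair, with $\rho_{t_0}$ of arbitrary rank, not only for rank-one pairs.

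\emph{Discard the other two ideas.} First, $e^{t(\mathfrak{L}+\lambda\I)}=e^{\lambda t}\Lambda_t$ differs from $\Lambda_t$ only by a positive scalar, so it cannot create strict positivity. Second, the Trotter route fails because $\mathfrak{L}+\lambda\I$ need not be positive for any $\lambda$. Take $\mathfrak{L}=-i[H,\cdot\,]$, which satisfies (i) and (ii), a unit $\phi$ that is not an eigenvector of $H$, and $\psi=s\phi+\chi$ with real $s$ and $\chi\perp\phi$. Then
\[
\langle\psi|(\mathfrak{L}+\lambda\I)(\ketbra{\phi}{\phi})\,\psi\rangle
=2s\operatorname{Im}\langle\chi|H\phi\rangle+\lambda s^2 .
\]
Choosing $\chi$ with $\operatorname{Im}\langle\chi|H\phi\rangle\neq0$, this is negative for small $s$ of suitable sign. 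The cross term is linear in the parallel component while the $\lambda$-term is quadratic, so the balancing you hoped for cannot work.
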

\begin{Rem}
At a formal level, the necessity of these conditions is quite transparent.
Indeed, expanding $\Lambda_t=e^{t\mathfrak{L}}$ for small $t$ gives $\Lambda_t=\id+t\mathfrak{L}+O(t^2)$. 
If $P$ and $Q$ are mutually orthogonal projections, then, since $PQ=0$,
\[
\Tr[P\Lambda_t(Q)]
=
t\Tr[P(\mathfrak{L}(Q))]+O(t^2).
\]
Positivity of $P$ and $\Lambda_t(Q)$ for all $t\geq 0$ implies that the left-hand side is nonnegative (use Fact~\ref{ex:TrAB} below). 
Dividing by $t>0$ and letting $t\downarrow 0$, it follows that
\[
\Tr[P(\mathfrak{L}(Q))]\geq 0.
\]
Similarly, trace preservation of $\Lambda_t$ implies $\Tr[\mathfrak{L}(P)]=0$ for every projection $P$.
The nontrivial part of Kossakowski's characterization is that these infinitesimal conditions are also sufficient for $e^{t\mathfrak{L}}$ to define a positive trace-preserving semigroup. 
\end{Rem}
\begin{F}\label{ex:TrAB}
If $A,B$ are self-adjoint operators, then $\Tr AB\in\R$.
If $A,B$ are positive operators, then $\Tr AB\geq 0$.
\end{F}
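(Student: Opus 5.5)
The statement immediately above is Fact~\ref{ex:TrAB}. The proof uses only the cyclic property of the trace, the adjoint rule $(AB)^\dagger=B^\dagger A^\dagger$, and the positive square root introduced in Sec.~\ref{sec:MP}. There are two claims: reality of $\Tr AB$ for self-adjoint $A,B$, and nonnegativity for positive $A,B$.

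\emph{Reality.} First we record the general identity $\overline{\Tr X}=\Tr X^\dagger$. In an ONB $\{\phi_i\}$,
\[
\overline{\langle\phi_i|X\phi_i\rangle}=\langle X\phi_i|\phi_i\rangle=\langle\phi_i|X^\dagger\phi_i\rangle,
\]
and summing over $i$ gives the identity. We apply it with $X=AB$, where $A=A^\dagger$ and $B=B^\dagger$:
\[
\overline{\Tr AB}=\Tr (AB)^\dagger=\Tr B^\dagger A^\dagger=\Tr BA=\Tr AB,
\]
the last step by cyclicity. Hence $\Tr AB\in\R$.

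\emph{Nonnegativity.} Let $A,B\geq 0$, and use the positive square roots $\sqrt{A}$ and $\sqrt{B}$ defined via the eigenvalue decomposition. These are self-adjoint with real nonnegative eigenvalues, and they satisfy $\sqrt{A}^2=A$ and $\sqrt{B}^2=B$. Cyclicity gives
\[
\Tr AB=\Tr \sqrt{A}\sqrt{A}\sqrt{B}\sqrt{B}=\Tr \sqrt{B}\sqrt{A}\sqrt{A}\sqrt{B}=\Tr C^\dagger C,
\qquad C:=\sqrt{A}\sqrt{B},
\]
since $C^\dagger=\sqrt{B}\sqrt{A}$. Finally, in any ONB,
\[
\Tr C^\dagger C=\sum_i\langle\phi_i|C^\dagger C\phi_i\rangle=\sum_i\|C\phi_i\|^2=\|C\|_{\rm HS}^2\geq 0.
\]

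An alternative route avoids square roots on both operators. Expand $A=\sum_i a_i\ketbra{\phi_i}{\phi_i}$ with $a_i\geq 0$, and use the identity $\Tr(\ketbra{\psi}{\phi}X)=\langle\phi|X\psi\rangle$ to get
\[
\Tr AB=\sum_i a_i\langle\phi_i|B\phi_i\rangle .
\]
Each term is nonnegative because $a_i\geq 0$ and $B\geq 0$.

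There is no real obstacle here. The only point needing care is justifying that $\sqrt{A}$ is self-adjoint and squares to $A$. Both follow directly from the eigenvalue decomposition together with the spectral characterizations listed in Sec.~\ref{sec:MP}.
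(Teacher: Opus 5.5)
Your proof is correct. Your ``alternative route'' is essentially the paper's proof: the paper diagonalizes $B=\sum_k b_k\ketbra{b_k}{b_k}$ (you diagonalize $A$ instead, which is symmetric) and reads both claims off the single formula $\Tr AB=\sum_k b_k\langle b_k|Ab_k\rangle$. Each term there is real when $A,B$ are self-adjoint and nonnegative when $A,B\geq 0$.

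Your main route is genuinely different. For reality you never diagonalize anything: the identity $\overline{\Tr X}=\Tr X^\dagger$ together with cyclicity is basis-free and needs no spectral theorem. For nonnegativity you take square roots of both operators and obtain $\Tr AB=\|\sqrt{A}\sqrt{B}\|_{\rm HS}^2$.

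The paper's version is more economical. One eigenvalue decomposition handles both claims, and in the positive case only one operator needs to be diagonalized. The other operator enters only through its defining property $\langle\psi|A\psi\rangle\geq 0$.

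Your identity buys a sharper conclusion. It shows that for $A,B\geq 0$ one has $\Tr AB=0$ if and only if $\sqrt{A}\sqrt{B}=0$, equivalently $AB=0$. The termwise formula yields this characterization only after an extra argument.
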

The celebrated representation theorem for the generator of a completely positive
dynamical semigroup was obtained by Gorini, Kossakowski, and Sudarshan as follows.
\begin{Thm}[Gorini--Kossakowski--Sudarshan 1976]\label{thm:GKS}
A linear map $\mathfrak{L}$ on $\LA(\HA)\simeq \LA(\CA^d)$ is the generator of a completely
positive dynamical semigroup on $\LA(\HA)$ if and only if it can be expressed in the form
\begin{align}\label{eq:GKSform}
\mathfrak{L}(\rho)
&=
-i[H,\rho]
+\frac{1}{2}\sum_{\alpha,\beta=1}^{d^2-1} c_{\alpha\beta}
\Bigl(
[F_\alpha,\rho F_\beta^\dagger]
+
[F_\alpha\rho,F_\beta^\dagger]
\Bigr)  \notag\\
&=
-i[H,\rho]
+\frac{1}{2}\sum_{\alpha,\beta=1}^{d^2-1} c_{\alpha\beta}
\Bigl(
2F_\alpha\rho F_\beta^\dagger
-
F_\beta^\dagger F_\alpha\rho
-
\rho F_\beta^\dagger F_\alpha
\Bigr),
\end{align}
where $H$ is a self-adjoint operator, called an effective Hamiltonian,
$\{F_\alpha\}_{\alpha=1}^{d^2-1}$ is an orthonormal basis of the space $\LA_0(\HA)$ of all traceless operators with respect to the Hilbert--Schmidt inner product, that is,
\[
\Tr F_\alpha=0,
\qquad
\Tr F_\alpha^\dagger F_\beta=\delta_{\alpha\beta}
\quad
(\alpha,\beta=1,\ldots,d^2-1),
\]
and $C=(c_{\alpha\beta})_{\alpha,\beta=1}^{d^2-1}$ is a positive matrix. 
The effective Hamiltonian $H$ is uniquely determined by the condition $\Tr H=0$.
\end{Thm}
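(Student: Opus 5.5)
The plan is to derive Theorem~\ref{thm:GKS} from Kossakowski's Theorem~\ref{thm:K}, applied to the amplified semigroup $\Lambda_t\otimes\id_d=e^{t(\mathfrak{L}\otimes\id_d)}$. By Theorem~\ref{thm:cp}, $\{\Lambda_t\}$ is a CP dynamical semigroup if and only if $\{\Lambda_t\otimes\id_d\}$ is a positive dynamical semigroup on $\LA(\HA\otimes\CA^d)$. Its generator is $\mathfrak{L}\otimes\id_d$. Theorem~\ref{thm:K} therefore says that the statement is equivalent to two conditions. The first is trace preservation, $\Tr\mathfrak{L}(X)=0$ for all $X$. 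The second is conditional complete positivity:
\[
\Tr\bigl[P\,(\mathfrak{L}\otimes\id_d)(Q)\bigr]\geq 0\quad\text{whenever } PQ=0 .
\]

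\textbf{Sufficiency.} Write $\mathfrak{L}=\Phi+K(\cdot)+(\cdot)K^\dagger$ with
\[
\Phi(\rho)=\sum c_{\alpha\beta}F_\alpha\rho F_\beta^\dagger,\qquad K=-iH-\tfrac12\sum c_{\alpha\beta}F_\beta^\dagger F_\alpha .
\]
Since $C\geq 0$, diagonalizing $C$ gives $\Phi(\rho)=\sum_k L_k\rho L_k^\dagger$, so $\Phi$ is CP. Trace preservation follows from cyclicity of the trace, because $K+K^\dagger=-\Phi^\dagger(\II)$. For the conditional positivity condition, take a rank-one $Q=\ketbra{\psi}{\psi}$ and orthogonal projections $P$ on $\HA\otimes\CA^d$; the rank-one case suffices by linearity. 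The terms $(K\otimes\II)Q+Q(K^\dagger\otimes\II)$ contribute $\Tr[P(\cdots)]=0$, since $P\psi=0$. The remaining term $\Tr[P(\Phi\otimes\id)(Q)]$ is nonnegative by Fact~\ref{ex:TrAB}. Theorem~\ref{thm:K} then gives positivity of $e^{t\mathfrak{L}\otimes\id_d}$, and Theorem~\ref{thm:cp} upgrades this to CP. (One could alternatively use the Lie--Trotter product formula with $e^{t\Phi}$ and $e^{t(K\cdot+\cdot K^\dagger)}$; I prefer the route through Theorem~\ref{thm:K} since it is the theme of the paper.)

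\textbf{Necessity.} First complete $\{F_\alpha\}$ to an orthonormal basis of $\LA(\HA)$ by adding $F_{d^2}=\II/\sqrt d$. Every linear map can be written uniquely as $\mathfrak{L}(\rho)=\sum_{\alpha,\beta=1}^{d^2} c_{\alpha\beta}F_\alpha\rho F_\beta^\dagger$, since $\{F_\alpha\otimes\overline{F_\beta}\}$ is a basis of $\LA(\LA(\HA))$. Hermiticity preservation of $\mathfrak{L}$ forces $c_{\alpha\beta}=\overline{c_{\beta\alpha}}$. Next split off the terms containing $F_{d^2}$: set
\[
G=\tfrac{c_{d^2d^2}}{2d}\II+\tfrac1{\sqrt d}\sum_{\alpha<d^2}c_{\alpha d^2}F_\alpha .
\]
This gives $\mathfrak{L}(\rho)=\sum_{\alpha,\beta<d^2}c_{\alpha\beta}F_\alpha\rho F_\beta^\dagger+G\rho+\rho G^\dagger$. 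Then decompose $G=\tfrac12(G+G^\dagger)-i H$ with $H$ self-adjoint. Trace preservation ($\Tr\mathfrak{L}(\rho)=0$ for all $\rho$) gives $G+G^\dagger=-\sum c_{\alpha\beta}F_\beta^\dagger F_\alpha$, which yields exactly the form \eqref{eq:GKSform}. Shifting $H$ by a multiple of $\II$ changes nothing, so $\Tr H=0$ fixes $H$. Uniqueness of $H$ follows from uniqueness of the coefficients in the basis expansion, because the traceless part of $G$ is determined by the $c_{\alpha d^2}$.

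\textbf{Main obstacle: $C\geq 0$.} This is where conditional complete positivity must be used. Take $Q=\ketbra{\Psi_M}{\Psi_M}$ and $P=\ketbra{\phi}{\phi}$ with $\phi\perp\Psi_M$. The $G$-terms vanish in $\Tr[P(\mathfrak{L}\otimes\id)(Q)]$ because $P\Psi_M=0$. Hence
\[
\bigl\langle\phi,\,(\Phi_C\otimes\id)(\ketbra{\Psi_M}{\Psi_M})\,\phi\bigr\rangle\geq 0,\qquad \Phi_C=\sum_{\alpha,\beta<d^2} c_{\alpha\beta}F_\alpha\cdot F_\beta^\dagger .
\]
Now use the correspondence $F\mapsto(F\otimes\II)\Psi_M$, an isometry from $\LA(\HA)$ with the Hilbert--Schmidt product onto $\HA\otimes\CA^d$, up to the factor $1/\sqrt d$. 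Traceless $F$ correspond exactly to vectors orthogonal to $\Psi_M$. With $\phi=\sum_\gamma \overline{x_\gamma}(F_\gamma\otimes\II)\Psi_M$, one gets $\langle\phi,\cdots\phi\rangle=d^{-2}\sum x^{*}Cx$ up to conjugation conventions. Getting these index and conjugation conventions right is the fiddly part, and I would check it carefully. The upshot is $C\geq0$.
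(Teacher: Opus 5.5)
Your proposal is correct and follows essentially the paper's route. It amplifies to $\mathfrak{L}\otimes\id_d$ and applies Theorem~\ref{thm:K} (the paper's Lemma~\ref{lem:cpDp}), and it rederives the Hermitian basis form of Proposition~\ref{prop:GKSformHerm}. For $C\geq 0$ it tests $Q=|\Psi_M\rangle\langle\Psi_M|$ against a rank-one projection onto $(F\otimes\II)\Psi_M$ with $F$ traceless. These are exactly the paper's $\hat P_{d^2}$ and $\hat P_\alpha$ from Lemma~\ref{lem:FEF}, except that the paper first diagonalizes $C$ and tests only the eigen-directions.

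Your sufficiency check via $\mathfrak{L}=\Phi+K(\cdot)+(\cdot)K^\dagger$ is a coordinate-free version of the paper's block-matrix computation. The step marked $\heartsuit$ there corresponds to your vanishing of the $K$-terms. The paper's idempotence rewriting plus Lemma~\ref{lem:pos-matrix-trace} plays the role of your direct appeal to the Kraus form of $\Phi$.
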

By diagonalizing the positive matrix $(c_{\alpha\beta})$, one
obtains an orthonormal basis $\{G_\alpha\}_{\alpha=1}^{d^2-1}$ of
$\LA_0(\HA)$ and nonnegative numbers $\lambda_\alpha$
$(\alpha=1,\ldots,d^2-1)$ such that the expression \eqref{eq:GKSform} takes the diagonal form\footnote{\label{ft:Diag} Since $C=(c_{\alpha\beta})_{\alpha,\beta=1}^{d^2-1}$ is positive semidefinite, its eigenvalue decomposition gives a unitary matrix $U=(U_{\alpha\gamma})_{\alpha,\gamma=1}^{d^2-1}$ and nonnegative eigenvalues $\lambda_\gamma$ such that
\[
c_{\alpha\beta}
=
\sum_{\gamma=1}^{d^2-1}
\lambda_\gamma U_{\alpha\gamma}\overline{U_{\beta\gamma}}.
\]
Define
\[
G_\gamma
:=
\sum_{\alpha=1}^{d^2-1}
U_{\alpha\gamma}F_\alpha
\qquad
(\gamma=1,\ldots,d^2-1).
\]
By the unitarity of $U$,
\[
F_\alpha
=
\sum_{\gamma=1}^{d^2-1}
\overline{U_{\alpha\gamma}}G_\gamma.
\]

Substituting the diagonalization of $C$ into \eqref{eq:GKSform}, one obtains
\begin{align}
&\frac{1}{2}\sum_{\alpha,\beta=1}^{d^2-1} c_{\alpha\beta}
\Bigl(
[F_\alpha,\rho F_\beta^\dagger]
+
[F_\alpha\rho,F_\beta^\dagger]
\Bigr) =
\frac{1}{2}\sum_{\gamma=1}^{d^2-1}\lambda_\gamma
\sum_{\alpha,\beta=1}^{d^2-1}
U_{\alpha\gamma}\overline{U_{\beta\gamma}}
\Bigl(
[F_\alpha,\rho F_\beta^\dagger]
+
[F_\alpha\rho,F_\beta^\dagger]
\Bigr) \nonumber\\
&=
\frac{1}{2}\sum_{\gamma=1}^{d^2-1}\lambda_\gamma
\Bigl(
[G_\gamma,\rho G_\gamma^\dagger]
+
[G_\gamma\rho,G_\gamma^\dagger]
\Bigr).
\end{align}}
\begin{equation}\label{eq:GKSdia}
\mathfrak{L}(\rho)
=
-i[H,\rho]
+
\frac{1}{2}\sum_{\alpha=1}^{d^2-1}\lambda_\alpha
\Bigl(
[G_\alpha,\rho G_\alpha^\dagger]
+
[G_\alpha\rho,G_\alpha^\dagger]
\Bigr).
\end{equation}
Moreover, by setting $L_\alpha=\sqrt{\lambda_\alpha}G_\alpha$, one obtains
\begin{align}\label{LForm}
\mathfrak{L}(\rho)
&=
-i[H,\rho]
+
\frac{1}{2}\sum_{\alpha=1}^{d^2-1}
\Bigl(
2L_\alpha\rho L_\alpha^\dagger
-
L_\alpha^\dagger L_\alpha\rho
-
\rho L_\alpha^\dagger L_\alpha
\Bigr) \nonumber \\
&=-i[H,\rho]
+
\sum_{\alpha=1}^{d^2-1}
\Bigl(
L_\alpha\rho L_\alpha^\dagger
- \frac{1}{2}\{L_\alpha^\dagger L_\alpha,\rho\}
\Bigr).
\end{align}
This is essentially the form obtained independently by Lindblad \cite{Lindblad1976}. 
The operators $L_\alpha$ are often called noise operators or jump operators.

\subsection{Proof of Theorem \ref{thm:GKS}}

In this subsection, we prove Theorem~\ref{thm:GKS} by following the argument of \cite{GoriniKossakowskiSudarshan1976}.
The proof relies on Theorem~\ref{thm:K}, whose proof is postponed to the next section, and on the following lemmas. 
Throughout this section, we use the matrix units $E_{ij}=\ketbra{\phi_i}{\phi_j}$ associated with a fixed orthonormal basis $\{\phi_i\}_{i=1}^{d}$ of $\HA$.
For simplicity, we shall often write $E_{ij}=\ketbra{i}{j}$.
Note that $E_{ij}^\dagger=E_{ji}$ and $\sum_i E_{ii} = \sum_i \ketbra{i}{i} = \II$. 

\begin{Lem}\label{lem:cpDp}
$\{\Lambda_t\}_{t\geq 0}$ is a completely positive dynamical semigroup on
$\LA(\HA)$ if and only if $\{\Lambda_t\otimes\I_d\}_{t\geq 0}$ is a dynamical semigroup on $\LA(\HA)\otimes\LA(\HA)$.
\end{Lem}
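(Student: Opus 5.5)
We verify the three defining properties of a dynamical semigroup for the family $\{\Lambda_t\otimes\I_d\}_{t\ge0}$ on $\LA(\HA)\otimes\LA(\HA)\simeq\LA(\HA\otimes\CA^d)$, viewed on its self-adjoint part via Remark~\ref{rem:LAsaLA}. Each property will be compared directly with the corresponding property of $\{\Lambda_t\}_{t\ge0}$. The only genuinely nontrivial ingredient is Theorem~\ref{thm:cp}, which shows that positivity of $\Lambda_t\otimes\I_d$ already amounts to complete positivity of $\Lambda_t$.

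\emph{Semigroup law.} By bilinearity of the tensor product of maps,
\[
(\Lambda_t\otimes\I_d)\circ(\Lambda_s\otimes\I_d)=(\Lambda_t\circ\Lambda_s)\otimes\I_d .
\]
Hence $\Lambda_{t+s}=\Lambda_t\circ\Lambda_s$ gives the semigroup law for the tensored family. Conversely, we restrict to inputs of the form $X\otimes E_{11}$, using $(\Lambda\otimes\I_d)(X\otimes E_{11})=\Lambda(X)\otimes E_{11}$, and read off the semigroup law for $\Lambda_t$.

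\emph{Trace preservation and positivity.} Trace preservation transfers in both directions by the same device: $\Tr[(\Lambda\otimes\I_d)(A\otimes B)]=\Tr\Lambda(A)\,\Tr B$ on product operators, extended by linearity, and $\Tr[\Lambda(X)\otimes E_{11}]=\Tr\Lambda(X)$ for the converse. For positivity, $\Lambda_t\otimes\I_d$ being positive is exactly $d$-positivity of $\Lambda_t$. By Theorem~\ref{thm:cp}, (i)$\Leftrightarrow$(ii), this is equivalent to complete positivity of $\Lambda_t$, and positivity of $\Lambda_t$ itself is included since $d$-positivity implies $1$-positivity.

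\emph{Continuity.} Since all spaces are finite-dimensional and all norms are equivalent, it suffices to check uniform continuity. We use
\[
\Lambda_t\otimes\I_d-\I\otimes\I_d=(\Lambda_t-\I)\otimes\I_d ,
\]
and bound this in the Hilbert--Schmidt norm of maps. Computing \eqref{eq:HSIL} in the product basis $\{E_{ij}\otimes E_{kl}\}$ gives
\[
\|(\Lambda_t-\I)\otimes\I_d\|_{\rm HS}=d\,\|\Lambda_t-\I\|_{\rm HS},
\]
so convergence to zero transfers in both directions. For the converse direction one can alternatively use the embedding $X\mapsto X\otimes E_{11}$, which is isometric in the trace norm.

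I expect no real obstacle beyond invoking Theorem~\ref{thm:cp}. The only point needing care is that continuity is stated in the trace norm on the larger space; I would handle it explicitly through the norm-equivalence remark rather than estimating $\|\cdot\|_{1\to1}$ of a tensor product directly, which would require more work.
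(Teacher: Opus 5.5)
Your proposal is correct and follows the same route as the paper. Each defining property is checked separately, positivity of $\Lambda_t\otimes\I_d$ is identified with complete positivity via Theorem~\ref{thm:cp}(ii), and the converse directions are read off from product test elements (you use $X\otimes E_{11}$ where the paper uses $Y\otimes\II$). The only variation is the continuity step: the paper gives a direct trace-norm estimate on an expansion $X=\sum_i A_i\otimes B_i$ using $\|C\otimes B\|_1=\|C\|_1\|B\|_1$, whereas your identity $\|(\Lambda_t-\I)\otimes\I_d\|_{\rm HS}=d\,\|\Lambda_t-\I\|_{\rm HS}$, combined with the finite-dimensional equivalence of strong and uniform continuity, works equally well.
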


\begin{proof}
By Theorem~\ref{thm:cp}-(ii), $\Lambda_t$ is completely positive if and only if
$\Lambda_t\otimes\I_d$ is positive. Thus, the positivity condition for $\Lambda_t\otimes\I_d$ is equivalent to the complete positivity of $\Lambda_t$.

We next check trace preservation. If $\Lambda_t$ is trace-preserving, then for any $X\in\LA(\HA)\otimes\LA(\HA)$ written as $X=\sum_i A_i\otimes B_i$, we have
\[
\Tr((\Lambda_t\otimes\I_d)(X))
=
\sum_i \Tr(\Lambda_t(A_i))\Tr B_i
=
\sum_i \Tr A_i \Tr B_i
=
\Tr X.
\]
Hence $\Lambda_t\otimes\I_d$ is trace-preserving. Conversely, if
$\Lambda_t\otimes\I_d$ is trace-preserving, then for any $Y\in\LA(\HA)$,
\[
\Tr(\Lambda_t(Y))\Tr\II
=
\Tr((\Lambda_t\otimes\I_d)(Y\otimes\II))
=
\Tr(Y\otimes\II)
=
\Tr Y \Tr\II.
\]
Since $\Tr\II=d\neq 0$, it follows that $\Tr(\Lambda_t(Y))=\Tr Y$. Thus
$\Lambda_t$ is trace-preserving.

The semigroup law is also equivalent. Indeed,
\[
(\Lambda_{t+s}\otimes\I_d)
=
(\Lambda_t\otimes\I_d)\circ(\Lambda_s\otimes\I_d)
\]
holds if and only if
\[
\Lambda_{t+s}=\Lambda_t\circ\Lambda_s,
\]
because the equality can be tested on elements of the form $Y\otimes\II$.

Finally, we check strong continuity. Suppose first that $\{\Lambda_t\}_{t\geq 0}$ is strongly continuous. 
Let $X\in\LA(\HA)\otimes\LA(\HA)$ and write $X=\sum_i A_i\otimes B_i.$
Then
\[
((\Lambda_t\otimes\I_d)(X)-X)
=
\sum_i(\Lambda_t(A_i)-A_i)\otimes B_i.
\]
Hence
\[
\|(\Lambda_t\otimes\I_d)(X)-X\|_1
\leq
\sum_i \|(\Lambda_t(A_i)-A_i)\otimes B_i\|_1.
\]
Using $\|C\otimes B_i\|_1=\|C\|_1\|B_i\|_1$ (see Fact \ref{exe:TrnTrn} below), we obtain
\[
\|(\Lambda_t\otimes\I_d)(X)-X\|_1
\leq
\sum_i \|\Lambda_t(A_i)-A_i\|_1\|B_i\|_1
\to 0
\]
as $t\downarrow 0$. Thus $\{\Lambda_t\otimes\I_d\}_{t\geq 0}$ is strongly
continuous.

Conversely, suppose that $\{\Lambda_t\otimes\I_d\}_{t\geq 0}$ is strongly
continuous. For any $Y\in\LA(\HA)$, we have
\[
(\Lambda_t\otimes\I_d)(Y\otimes\II)-Y\otimes\II
=
(\Lambda_t(Y)-Y)\otimes\II.
\]
Therefore,
\[
d\|\Lambda_t(Y)-Y\|_1
=
\|(\Lambda_t(Y)-Y)\otimes\II\|_1
=
\|(\Lambda_t\otimes\I_d)(Y\otimes\II)-Y\otimes\II\|_1
\to 0
\]
as $t\downarrow 0$. Hence $\{\Lambda_t\}_{t\geq 0}$ is strongly continuous.
Combining these observations proves the claim.
\end{proof}
\begin{F}\label{exe:TrnTrn}
For $A,B \in \LA(\HA)$, $|A\otimes B|=|A|\otimes |B|$.
In particular, taking the trace gives $\|A\otimes B\|_1=\|A\|_1\|B\|_1$.
\end{F}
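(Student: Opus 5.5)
We prove the two assertions of Fact~\ref{exe:TrnTrn} in turn. Everything follows from the product rules for tensor products recalled in Sec.~\ref{sec:MP}, namely $(A\otimes B)^\dagger=A^\dagger\otimes B^\dagger$, $(A_1\otimes B_1)(A_2\otimes B_2)=A_1A_2\otimes B_1B_2$ and $\Tr(A\otimes B)=\Tr A\,\Tr B$, together with the definition $|X|=\sqrt{X^\dagger X}$.

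\textbf{Step 1: computing $(A\otimes B)^\dagger(A\otimes B)$.} By the product rules,
\[
(A\otimes B)^\dagger(A\otimes B)=A^\dagger A\otimes B^\dagger B .
\]
Hence it suffices to show that, for positive operators $P=A^\dagger A$ and $Q=B^\dagger B$, one has $\sqrt{P\otimes Q}=\sqrt P\otimes\sqrt Q$.

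\textbf{Step 2: the square root of $P\otimes Q$.} Diagonalize
\[
P=\sum_i p_i\ketbra{\phi_i}{\phi_i},\qquad Q=\sum_j q_j\ketbra{\psi_j}{\psi_j},
\]
with $p_i,q_j\geq 0$ and $\{\phi_i\}$, $\{\psi_j\}$ orthonormal bases. Since $\{\phi_i\otimes\psi_j\}$ is an orthonormal basis of $\HA\otimes\HA$, we obtain
\[
P\otimes Q=\sum_{i,j}p_iq_j\,\ketbra{\phi_i\otimes\psi_j}{\phi_i\otimes\psi_j}.
\]
This uses the identity $\ketbra{\phi}{\phi}\otimes\ketbra{\psi}{\psi}=\ketbra{\phi\otimes\psi}{\phi\otimes\psi}$, which is checked on product vectors via the inner-product rule. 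The expression above is an eigenvalue decomposition of $P\otimes Q$ with eigenvalues $p_iq_j\geq 0$. By definition of the square root,
\[
\sqrt{P\otimes Q}=\sum_{i,j}\sqrt{p_i}\sqrt{q_j}\,\ketbra{\phi_i\otimes\psi_j}{\phi_i\otimes\psi_j}=\sqrt P\otimes\sqrt Q .
\]
A cleaner alternative avoids the decomposition: $\sqrt P\otimes\sqrt Q$ is positive, because its eigenvalues are the products $\sqrt{p_i}\sqrt{q_j}\geq 0$, and its square is $P\otimes Q$ by the product rule. So it suffices to know that positive square roots are unique. The explicit eigenbasis argument sidesteps proving uniqueness, so I would present that version. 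Combining with Step~1 gives $|A\otimes B|=|A|\otimes|B|$.

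\textbf{Step 3: the trace norm.} Taking traces and using $\Tr(X\otimes Y)=\Tr X\,\Tr Y$,
\[
\|A\otimes B\|_1=\Tr(|A|\otimes|B|)=\Tr|A|\,\Tr|B|=\|A\|_1\|B\|_1 .
\]

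The only point needing care is the identification of $\sqrt{\cdot}$ applied to a tensor product. This is handled by exhibiting the product eigenbasis, so that the spectral definition $f(X)=\sum f(x_k)\ketbra{\chi_k}{\chi_k}$ is applied directly and no uniqueness statement is needed. Beyond that, everything is routine.
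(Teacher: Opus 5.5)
Your proof is correct. Steps~1 and~3 coincide with the paper's; the only difference is how you identify $\sqrt{A^\dagger A\otimes B^\dagger B}$. The paper writes $|A\otimes B|^2=(|A|\otimes|B|)^2$, asserts that $|A|\otimes|B|$ is positive, and concludes by uniqueness of the positive square root. That is exactly the alternative you set aside. You instead build the product eigenbasis $\{\phi_i\otimes\psi_j\}$ and read off $\sqrt{P\otimes Q}$ directly from the spectral definition.

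What your version buys is real: it proves the positivity of $|A|\otimes|B|$, which the paper merely asserts. It also exhibits the eigenvalues of $|A\otimes B|$ as the products $\sqrt{p_i}\sqrt{q_j}$ of singular values of $A$ and $B$, which makes Step~3 transparent. The paper's version is shorter and basis-free.

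However, your remark that no uniqueness statement is needed overstates the savings. $P\otimes Q$ can be degenerate even when $P$ and $Q$ are not (coinciding products $p_iq_j=p_kq_l$, or repeated zero eigenvalues). So you rely on $\sqrt{X}$ being independent of which eigenbasis of $X$ is used to compute it, i.e.\ on $\sqrt{X}=\sum_\mu\sqrt{\alpha_\mu}P_\mu$ with uniquely determined spectral projections. In finite dimensions this same fact gives uniqueness of the positive square root in one line. If $R\geq 0$ and $R^2=X$, diagonalize $R=\sum_k r_k\,|\chi_k\rangle\langle\chi_k|$ with $r_k\geq 0$. Then $X=\sum_k r_k^2\,|\chi_k\rangle\langle\chi_k|$ is an eigenvalue decomposition of $X$, so $\sqrt{X}=R$.

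The two arguments therefore rest on the same foundation and differ mainly in presentation, with yours the more explicit and self-contained one.
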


\begin{Lem}\label{lem:idF}
For any orthonormal basis $\{F_\alpha\}_{\alpha=1,\ldots,d^2}$ of $\LA(\HA)$
with respect to the Hilbert--Schmidt inner product, we have
\begin{equation}\label{eq:idF}
    \sum_{\alpha=1}^{d^2} F_\alpha^\dagger A F_\alpha
    =
    \Tr(A)\II
    \qquad (A\in\LA(\HA)).
\end{equation}
\end{Lem}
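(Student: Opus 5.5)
The plan is to reduce the identity to a single convenient basis and then compute it in the matrix units. For an arbitrary Hilbert--Schmidt orthonormal basis $\{F_\alpha\}_{\alpha=1}^{d^2}$, define the linear map
\[
\Phi_F(A):=\sum_{\alpha=1}^{d^2} F_\alpha^\dagger A F_\alpha .
\]
The first step is to show that $\Phi_F$ does not depend on the choice of orthonormal basis. Let $\{G_\beta\}$ be the matrix-unit basis $E_{ij}=\ketbra{i}{j}$. Any other orthonormal basis is related to it by a unitary $d^2\times d^2$ matrix $U$, namely
\[
F_\alpha=\sum_\beta U_{\beta\alpha}G_\beta,
\qquad
U_{\beta\alpha}=\langle G_\beta,F_\alpha\rangle_{\rm HS}.
\]
Unitarity of $U$ follows from the orthonormality of both families. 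Substituting this expansion gives
\[
\Phi_F(A)=\sum_{\beta,\gamma}\Bigl(\sum_\alpha \overline{U_{\beta\alpha}}U_{\gamma\alpha}\Bigr)G_\beta^\dagger A G_\gamma .
\]
The inner sum is $(UU^\dagger)_{\gamma\beta}=\delta_{\beta\gamma}$, so $\Phi_F=\Phi_G$. This is the same manipulation as in footnote~\ref{ft:Diag}, and it is the only step requiring care, in particular keeping track of which index carries the complex conjugate.

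The second step is a direct computation in the matrix units:
\[
\sum_{i,j}E_{ij}^\dagger A E_{ij}
=\sum_{i,j}\ketbra{j}{i}A\ketbra{i}{j}
=\sum_{i,j}\braket{i}{A\, i}\,\ketbra{j}{j}.
\]
Here we use $E_{ij}^\dagger=E_{ji}$. The factor $\sum_i\braket{i}{A\,i}$ equals $\Tr A$ by \eqref{eq:tr}, and the completeness relation gives $\sum_j\ketbra{j}{j}=\II$. Combining this with the basis independence from the first step yields \eqref{eq:idF} for every orthonormal basis, which proves the lemma.

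I expect no real obstacle here. The change-of-basis step is the only place where an index error could occur, and that part of the argument is routine.
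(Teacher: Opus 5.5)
Your proposal is correct and takes the same route as the paper: first basis independence of $\sum_\alpha F_\alpha^\dagger A F_\alpha$ via a unitary change of basis, then the direct computation in the matrix units $E_{ij}$. The only cosmetic difference is that you expand $F_\alpha$ along the columns of $U$, so your contraction is $(UU^\dagger)_{\gamma\beta}=\delta_{\beta\gamma}$, while the paper expands along rows and uses $\sum_\alpha \overline{U_{\alpha\beta}}U_{\alpha\gamma}=\delta_{\beta\gamma}$. Both follow immediately from unitarity of a square matrix.
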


\begin{proof}
First note that
\[
    \sum_{\alpha=1}^{d^2} F_\alpha^\dagger A F_\alpha
\]
is independent of the choice of the Hilbert--Schmidt orthonormal basis $\{F_\alpha\}$ of $\LA(\HA)$. 
Indeed, if $\{F_\alpha\}$ and $\{E_\alpha\}$ are two orthonormal bases of $\LA(\HA)$, then there exists a unitary matrix $U=(U_{\alpha\beta})\in M_{d^2}(\CA)$ such that $E_\alpha=\sum_\beta U_{\alpha\beta}F_\beta.$
Therefore,
\[
\begin{aligned}
    \sum_\alpha E_\alpha^\dagger A E_\alpha
    &=
    \sum_{\beta,\gamma}
    \left(\sum_\alpha \overline{U_{\alpha\beta}}U_{\alpha\gamma}\right)
    F_\beta^\dagger A F_\gamma =
    \sum_\beta F_\beta^\dagger A F_\beta.
\end{aligned}
\]
Thus we may choose the matrix units $E_{ij}=\ketbra{i}{j}$ as an orthonormal basis of $\LA(\HA)$. 
Then, by the definition of the trace \eqref{eq:tr} and the completeness relation $\sum_j \ketbra{j}{j}=\II$, 
\[
\begin{aligned}
    \sum_{i,j=1}^d E_{ij}^\dagger A E_{ij}=
    \sum_{i,j=1}^d |j\rangle\langle i|A| i\rangle\langle j| =
    \sum_{i=1}^d \langle i|A|i\rangle
    \sum_{j=1}^d |j\rangle\langle j| =
    \Tr(A)\II.
\end{aligned}
\]
This proves the claim.
\end{proof}
\begin{Lem}\label{lem:BigONB}
Let $\{F_\alpha\}_{\alpha=1,\ldots,d^2}$ be an orthonormal basis of
$\LA(\HA)$. For $\alpha,\beta=1,\ldots,d^2$, define a linear map
$\Gamma_{\alpha\beta}$ on $\LA(\HA)$ by
\[
    \Gamma_{\alpha\beta}(A):=F_\alpha A F_\beta^\dagger .
\]
Then $\{\Gamma_{\alpha\beta}\}_{\alpha,\beta=1,\ldots,d^2}$ is an
orthonormal basis of $\LA(\LA(\HA))$ with respect to the Hilbert--Schmidt
inner product \eqref{eq:HSIL}.
\end{Lem}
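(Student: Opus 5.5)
Since $\LA(\LA(\HA))$ has dimension $(d^2)^2=d^4$ and there are exactly $d^4$ maps $\Gamma_{\alpha\beta}$, it suffices to prove orthonormality with respect to \eqref{eq:HSIL}; linear independence, and hence the basis property, then follow by counting dimensions. The plan is to compute $\langle \Gamma_{\alpha\beta},\Gamma_{\gamma\delta}\rangle_{\rm HS}$ directly and show that it factorizes as $\delta_{\alpha\gamma}\delta_{\beta\delta}$.

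To evaluate the inner product we use the last expression in \eqref{eq:HSIL}, which holds for any Hilbert--Schmidt orthonormal basis of $\LA(\HA)$. We choose that auxiliary basis to be the matrix units $E_{ij}=\ketbra{i}{j}$. This gives
\[
\langle \Gamma_{\alpha\beta},\Gamma_{\gamma\delta}\rangle_{\rm HS}
=\sum_{i,j}\Tr\bigl[(F_\alpha E_{ij}F_\beta^\dagger)^\dagger F_\gamma E_{ij}F_\delta^\dagger\bigr]
=\sum_{i,j}\Tr\bigl[F_\beta E_{ji}F_\alpha^\dagger F_\gamma E_{ij}F_\delta^\dagger\bigr].
\]
Next we apply cyclicity to rewrite this as $\sum_{i,j}\Tr[F_\delta^\dagger F_\beta E_{ji}F_\alpha^\dagger F_\gamma E_{ij}]$. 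Writing $E_{ji}=\ketbra{j}{i}$ and $E_{ij}=\ketbra{i}{j}$, the summand becomes $\langle i|F_\alpha^\dagger F_\gamma|i\rangle\,\langle j|F_\delta^\dagger F_\beta|j\rangle$. Summing over $i$ and $j$ then yields $\Tr(F_\alpha^\dagger F_\gamma)\,\Tr(F_\delta^\dagger F_\beta)=\delta_{\alpha\gamma}\,\overline{\delta_{\beta\delta}}=\delta_{\alpha\gamma}\delta_{\beta\delta}$, using orthonormality of $\{F_\alpha\}$.

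Alternatively, one could use Lemma~\ref{lem:idF} in the form $\sum_{ij}E_{ji}XE_{ij}=\Tr(X)\II$ to collapse the sum. This reduces the expression to $\Tr(F_\alpha^\dagger F_\gamma)\Tr(F_\delta^\dagger F_\beta)$ in the same way.

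The main obstacle is purely bookkeeping. One must get the adjoint of $\Gamma_{\alpha\beta}(E_{ij})$ right, namely $(F_\alpha E_{ij}F_\beta^\dagger)^\dagger=F_\beta E_{ji}F_\alpha^\dagger$. One must also track that the second trace factor is $\Tr(F_\delta^\dagger F_\beta)=\overline{\langle F_\beta,F_\delta\rangle_{\rm HS}}$, so the conjugate appears but does no harm, since $\delta_{\beta\delta}$ is real. Finally, one should confirm that \eqref{eq:HSIL} is basis-independent, which holds because it is a trace over $\LA(\HA)$; this is what justifies choosing the matrix units as the auxiliary basis.
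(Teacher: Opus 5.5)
Your proof is correct and follows essentially the same route as the paper: you expand $\langle\Gamma_{\alpha\beta},\Gamma_{\gamma\delta}\rangle_{\rm HS}$ via \eqref{eq:HSIL}, use cyclicity to factor it as $\Tr(F_\alpha^\dagger F_\gamma)\Tr(F_\delta^\dagger F_\beta)=\delta_{\alpha\gamma}\delta_{\beta\delta}$, and conclude by counting the $d^4$ dimensions. The only cosmetic difference is that the paper keeps a general auxiliary basis $\{G_\lambda\}$ and collapses the sum with Lemma~\ref{lem:idF}, whereas you fix the matrix units and do that collapse by hand, which is exactly how Lemma~\ref{lem:idF} is itself proved.
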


\begin{proof}
Let $\{G_\lambda\}_{\lambda=1,\ldots,d^2}$ be any Hilbert--Schmidt
orthonormal basis of $\LA(\HA)$. Then
\begin{align*}
&\braket{\Gamma_{\alpha\beta}}{\Gamma_{\mu\nu}}_{\rm HS} =
\sum_\lambda
\Tr\left[
    (\Gamma_{\alpha\beta}(G_\lambda))^\dagger
    \Gamma_{\mu\nu}(G_\lambda)
\right] =
\sum_\lambda
\Tr\left[
    (F_\alpha G_\lambda F_\beta^\dagger)^\dagger
    F_\mu G_\lambda F_\nu^\dagger
\right] \text{(by \eqref{eq:HSIL}) }\\ 
&=
\sum_\lambda
\Tr\left[
    F_\beta G_\lambda^\dagger F_\alpha^\dagger
    F_\mu G_\lambda F_\nu^\dagger
\right] =
\sum_\lambda
\Tr\left[
    F_\nu^\dagger F_\beta G_\lambda^\dagger
    F_\alpha^\dagger F_\mu G_\lambda
\right] \qquad \text{(by cyclicity of the trace)} \\
&=
\Tr\left[
    F_\nu^\dagger F_\beta
    \sum_\lambda
    G_\lambda^\dagger F_\alpha^\dagger F_\mu G_\lambda
\right] =
\Tr\left[
    F_\nu^\dagger F_\beta
    \Tr(F_\alpha^\dagger F_\mu)\II
\right] \qquad \text{(by Lemma~\ref{lem:idF})}\\
&=
\Tr(F_\alpha^\dagger F_\mu)
\Tr(F_\nu^\dagger F_\beta) =
\delta_{\alpha\mu}\delta_{\beta\nu}.
\end{align*}
Thus $\{\Gamma_{\alpha\beta}\}_{\alpha,\beta=1,\ldots,d^2}$ is an orthonormal set in $\LA(\LA(\HA))$. Since $\LA(\LA(\HA))$ has dimension $d^4$ and the set consists of $d^4$ elements, it is an orthonormal basis.
\end{proof}

\begin{Lem}\label{lem:ONBform}
Let $\{F_\alpha\}_{\alpha=1,\ldots,d^2}$ be an orthonormal basis of $\LA(\HA)$. 
Any linear map $\Gamma$ on $\LA(\HA)$ can be uniquely written in the form
\begin{equation}
\Gamma(A)=\sum_{\alpha,\beta=1}^{d^2} c_{\alpha\beta}
F_\alpha A F_\beta^\dagger .
\end{equation}
Moreover, $\Gamma$ is Hermiticity-preserving if and only if
$C=(c_{\alpha\beta})$ is a Hermitian matrix, that is,
$\overline{c_{\beta\alpha}}=c_{\alpha\beta}$.
\end{Lem}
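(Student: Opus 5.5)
The plan is to obtain the representation and its uniqueness directly from Lemma~\ref{lem:BigONB}, and then to settle the Hermiticity statement by comparing coefficients, using that uniqueness.

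\emph{Existence and uniqueness.} By Lemma~\ref{lem:BigONB}, the maps $\Gamma_{\alpha\beta}(A)=F_\alpha A F_\beta^\dagger$ form an orthonormal basis of the $d^4$-dimensional Hilbert space $\LA(\LA(\HA))$. Hence every linear map $\Gamma$ expands uniquely as
\[
\Gamma=\sum_{\alpha,\beta}c_{\alpha\beta}\Gamma_{\alpha\beta},
\qquad
c_{\alpha\beta}=\langle\Gamma_{\alpha\beta},\Gamma\rangle_{\rm HS}.
\]
This is exactly the claimed form.

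\emph{Hermiticity.} Call $\Gamma$ Hermiticity-preserving if $\Gamma(A^\dagger)=\Gamma(A)^\dagger$ for all $A$. This is equivalent to mapping self-adjoint operators to self-adjoint operators, via the Cartesian decomposition of Remark~\ref{rem:LAsaLA}.

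Define the map $\Gamma^\ast(A):=\Gamma(A^\dagger)^\dagger$. It is complex linear, since the two conjugate-linear operations compose to a linear one. Computing it from the expansion gives
\[
\Gamma^\ast(A)=\Bigl(\sum_{\alpha\beta}c_{\alpha\beta}F_\alpha A^\dagger F_\beta^\dagger\Bigr)^\dagger
=\sum_{\alpha\beta}\overline{c_{\alpha\beta}}\,F_\beta A F_\alpha^\dagger .
\]
Relabeling the indices shows that the coefficient matrix of $\Gamma^\ast$ is $\bigl(\overline{c_{\beta\alpha}}\bigr)_{\alpha\beta}=C^\dagger$.

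Now $\Gamma$ is Hermiticity-preserving if and only if $\Gamma=\Gamma^\ast$. By the uniqueness part, this holds if and only if $c_{\alpha\beta}=\overline{c_{\beta\alpha}}$ for all $\alpha,\beta$, that is, $C=C^\dagger$.

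\emph{Main obstacle.} The only delicate point is checking that $\Gamma^\ast$ is genuinely complex linear. Uniqueness of coefficients is a statement about linear maps, so it can be applied to $\Gamma^\ast$ only once this is established. The rest is bookkeeping of indices under the adjoint.
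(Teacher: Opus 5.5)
Your proposal is correct and follows essentially the same route as the paper: existence and uniqueness come from Lemma~\ref{lem:BigONB}, the Cartesian decomposition reduces Hermiticity preservation to $\Gamma(A)^\dagger=\Gamma(A^\dagger)$, and the coefficients are then compared using uniqueness. Your explicit linear map $\Gamma^\ast(A)=\Gamma(A^\dagger)^\dagger$ with coefficient matrix $C^\dagger$ is a slightly cleaner packaging of the same comparison. It makes explicit the linearity that the paper's coefficient comparison uses only implicitly.
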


\begin{proof}
The first assertion follows directly from Lemma~\ref{lem:BigONB}. For the second assertion, we first note the following elementary characterization: 
a linear map $\Gamma$ is Hermiticity-preserving, that is, $\Gamma(A)\in\LA_{\rm sa}(\HA)$ for all $A\in\LA_{\rm sa}(\HA)$, if and only if $\Gamma(A)^\dagger=\Gamma(A^\dagger)$ for all $A\in\LA(\HA)$.
Indeed, the identity immediately implies that $\Gamma$ maps self-adjoint operators to self-adjoint operators. Conversely, suppose that $\Gamma$ is Hermiticity-preserving. 
Using the Cartesian decomposition \eqref{CDec}, write $A=A_R+iA_I$ with $A_R,A_I\in\LA_{\rm sa}(\HA)$. Then
\[
\Gamma(A)^\dagger
=
\Gamma(A_R+iA_I)^\dagger
=
\Gamma(A_R)^\dagger-i\Gamma(A_I)^\dagger
=
\Gamma(A_R)-i\Gamma(A_I)
=
\Gamma(A^\dagger),
\]
as desired. 
Using this characterization and the representation above, we obtain
\[
\Gamma(A)^\dagger
=
\sum_{\alpha,\beta=1}^{d^2}
\overline{c_{\alpha\beta}}F_\beta A^\dagger F_\alpha^\dagger
=
\sum_{\alpha,\beta=1}^{d^2}
\overline{c_{\beta\alpha}}F_\alpha A^\dagger F_\beta^\dagger.
\]
On the other hand,
\[
\Gamma(A^\dagger)
=
\sum_{\alpha,\beta=1}^{d^2}
c_{\alpha\beta}F_\alpha A^\dagger F_\beta^\dagger.
\]
By the uniqueness of the expansion in the first assertion, these two expressions coincide for all $A$ if and only if $\overline{c_{\beta\alpha}}=c_{\alpha\beta}$ for all $\alpha,\beta$. This is equivalent to saying that $C=(c_{\alpha\beta})$ is Hermitian.
\end{proof}
 
\begin{Prop}\label{prop:GKSformHerm}
A linear map $\mathfrak{L}$ on $\LA(\HA)$ is the generator of a
trace-preserving and Hermiticity-preserving semigroup if and only if it can be expressed in the form
\begin{equation}\label{eq:GKSformHerm}
\mathfrak{L}(A)
=
-i[H,A]
+
\frac{1}{2}\sum_{\alpha,\beta=1}^{d^2-1}
c_{\alpha\beta}
\Bigl(
[F_\alpha,AF_\beta^\dagger]
+
[F_\alpha A,F_\beta^\dagger]
\Bigr),
\end{equation}
where $H=H^\dagger$, $\{F_\alpha\}_{\alpha=1}^{d^2}$ is an orthonormal basis of $\LA(\HA)$ such that $F_{d^2}=\II/\sqrt{d}$, and $C=(c_{\alpha\beta})_{\alpha,\beta=1}^{d^2-1}$ is a Hermitian matrix. With the condition $\Tr H=0$, the effective Hamiltonian $H$ is uniquely determined.
\end{Prop}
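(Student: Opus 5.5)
The plan is to reduce everything to the unique expansion of Lemma~\ref{lem:ONBform}, using a basis adapted to the identity. First I would record what ``generator of a trace-preserving and Hermiticity-preserving semigroup'' means infinitesimally. If $\Lambda_t=e^{t\mathfrak{L}}$ is Hermiticity-preserving and trace-preserving for all $t\ge 0$, then differentiating at $t=0$, as in \eqref{eq:L}, shows that $\mathfrak{L}(A)^\dagger=\mathfrak{L}(A^\dagger)$ and $\Tr\mathfrak{L}(A)=0$ for all $A$. Conversely, if $\mathfrak{L}$ has these two properties, then every power $\mathfrak{L}^n$ with $n\ge1$ is Hermiticity-preserving and has zero trace. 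Summing the exponential series then shows that $e^{t\mathfrak{L}}$ is Hermiticity-preserving and trace-preserving. So the proposition reduces to the following claim: a linear map is Hermiticity-preserving with $\Tr\circ\mathfrak{L}=0$ if and only if it has the form \eqref{eq:GKSformHerm}.

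\emph{Necessity.} Fix the basis $\{F_\alpha\}_{\alpha=1}^{d^2}$ with $F_{d^2}=\II/\sqrt d$. By Lemma~\ref{lem:ONBform} we can write $\mathfrak{L}(A)=\sum_{\alpha,\beta=1}^{d^2}c_{\alpha\beta}F_\alpha AF_\beta^\dagger$, where the full $d^2\times d^2$ matrix $(c_{\alpha\beta})$ is Hermitian. I would split off every term in which an index equals $d^2$. With
\[
G:=\frac{c_{d^2d^2}}{2d}\II+\frac{1}{\sqrt d}\sum_{\alpha=1}^{d^2-1}c_{\alpha d^2}F_\alpha ,
\]
Hermiticity of $C$ gives
\[
\mathfrak{L}(A)=GA+AG^\dagger+\Phi(A),\qquad \Phi(A):=\sum_{\alpha,\beta=1}^{d^2-1}c_{\alpha\beta}F_\alpha AF_\beta^\dagger .
\]
Next I write $G=K-iH$ with $K=(G+G^\dagger)/2$ and $H=i(G-G^\dagger)/2$, both self-adjoint. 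This gives $\mathfrak{L}(A)=-i[H,A]+\{K,A\}+\Phi(A)$. Trace preservation says $\Tr\bigl[(2K+\sum_{\alpha,\beta<d^2}c_{\alpha\beta}F_\beta^\dagger F_\alpha)A\bigr]=0$ for all $A$, which forces $K=-\tfrac12\sum_{\alpha,\beta<d^2}c_{\alpha\beta}F_\beta^\dagger F_\alpha$. Substituting this $K$ and regrouping $2F_\alpha AF_\beta^\dagger-F_\beta^\dagger F_\alpha A-AF_\beta^\dagger F_\alpha$ into the two commutators yields \eqref{eq:GKSformHerm}, with the Hermitian block $(c_{\alpha\beta})_{\alpha,\beta<d^2}$.

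\emph{Sufficiency.} For a map of the form \eqref{eq:GKSformHerm}, the trace of each commutator vanishes. For Hermiticity, I would compute $\mathfrak{L}(A)^\dagger$ and then swap $\alpha\leftrightarrow\beta$ using $\overline{c_{\beta\alpha}}=c_{\alpha\beta}$; this reproduces $\mathfrak{L}(A^\dagger)$. Both facts are short direct checks.

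\emph{Uniqueness of $H$.} This is the step I expect to need the most care. Replacing $H$ by $H+\lambda\II$ with $\lambda\in\R$ leaves the commutator unchanged, so $\Tr H=0$ is the natural normalization. To show this normalization determines $H$, suppose two representations $(H,C)$ and $(H',C')$ with traceless $H,H'$ give the same $\mathfrak{L}$. I would re-expand both in the basis $\{F_\alpha\otimes\text{-conjugation}\}$ of Lemma~\ref{lem:BigONB} and compare coefficients.
\begin{itemize}
\item The coefficients with $\alpha,\beta<d^2$ come only from the dissipative part, so $C=C'$, and hence the two $K$'s coincide.
\item The coefficients of $F_\alpha A\,\II$ with $\alpha<d^2$ determine the traceless part of $G=K-iH$. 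Since $K$ is already fixed, this fixes the traceless part of $H$, which is all of $H$.
\end{itemize}
The main obstacle is bookkeeping: the operators $F_\beta^\dagger F_\alpha$ appearing in $K$ need not be traceless, so their identity components must be tracked carefully when matching coefficients.
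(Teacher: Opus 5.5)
Your proposal is correct and follows the paper's proof essentially step by step. Your $G$, $K$ and $H$ are exactly the paper's $F+\frac{c_{d^2d^2}}{2d}\II$, $G=F_R+\frac{c_{d^2d^2}}{2d}\II$ and $H=-F_I$; the trace condition fixes the anticommutator term in the same way; and sufficiency is the same direct check, with the reduction to $\mathfrak{L}(A)^\dagger=\mathfrak{L}(A^\dagger)$ and $\Tr\mathfrak{L}(A)=0$ spelled out a bit more explicitly than in the paper.

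The paper only asserts that the traceless $H$ is unique, so your coefficient comparison is a genuine addition. It can be shortened, and your bookkeeping worry avoided: once $C=C'$ the two dissipators coincide, so $[H-H',A]=0$ for all $A$, hence $H-H'=\lambda\II$ with $\lambda\in\R$, and $\Tr H=\Tr H'=0$ gives $\lambda=0$. If the two representations use different bases of $\LA_0(\HA)$, first rewrite one dissipator in the other basis via the unitary change of coefficients, as in footnote~\ref{ft:Diag}.
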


\begin{proof}
Let $\mathfrak{L}$ be the generator of a trace-preserving and Hermiticity-preserving semigroup. 
By Lemma~\ref{lem:ONBform}, we can write
\begin{equation}
\mathfrak{L}(A)
=
\frac{1}{d}c_{d^2d^2}A
+
\frac{1}{\sqrt{d}}\sum_{\alpha=1}^{d^2-1}
\bigl(
c_{\alpha d^2}F_\alpha A
+
c_{d^2\alpha}AF_\alpha^\dagger
\bigr)
+
\sum_{\alpha,\beta=1}^{d^2-1}
c_{\alpha\beta}F_\alpha A F_\beta^\dagger ,
\end{equation}
where the full matrix $(c_{\alpha\beta})_{\alpha,\beta=1}^{d^2}$ is Hermitian.
Put
\[
F:=\frac{1}{\sqrt{d}}\sum_{\alpha=1}^{d^2-1}c_{\alpha d^2}F_\alpha,
\]
and write its Cartesian decomposition as $F=F_R+iF_I$, where $F_R=F_R^\dagger$ and $F_I=F_I^\dagger$. Since the coefficient matrix is Hermitian, the preceding expression becomes
\[
\mathfrak{L}(A)
=
\frac{1}{d}c_{d^2d^2}A
+
FA+AF^\dagger
+
\sum_{\alpha,\beta=1}^{d^2-1}
c_{\alpha\beta}F_\alpha A F_\beta^\dagger .
\]
Using $FA+AF^\dagger=i[F_I,A]+\{F_R,A\}$, we obtain
\[
\mathfrak{L}(A)
=
\frac{1}{d}c_{d^2d^2}A
+
i[F_I,A]
+
\{F_R,A\}
+
\sum_{\alpha,\beta=1}^{d^2-1}
c_{\alpha\beta}F_\alpha A F_\beta^\dagger .
\]
Let $H:=-F_I$ and
\[
G:=F_R+\frac{1}{2d}c_{d^2d^2}\II .
\]
Then
\begin{equation}\label{Ltemp}
\mathfrak{L}(A)
=
-i[H,A]
+
\{G,A\}
+
\sum_{\alpha,\beta=1}^{d^2-1}
c_{\alpha\beta}F_\alpha A F_\beta^\dagger .
\end{equation}

We now use trace preservation of the semigroup. Since $\exp(t\mathfrak{L})$ is trace-preserving for all $t\geq 0$, differentiating
$\Tr[\exp(t\mathfrak{L})(A)]=\Tr A$ at $t=0$ gives $\Tr\mathfrak{L}(A)=0$ for all $A\in\LA(\HA)$\footnote{
Here the trace is interchanged with the derivative.
Indeed, if $A_t$ is differentiable in trace norm, then, using
$|\Tr X|\leq \|X\|_1$, we have
\[
\left|
\frac{\Tr A_{t+h}-\Tr A_t}{h}
-
\Tr\left(\frac{d}{dt}A_t\right)
\right|
\leq
\left\|
\frac{A_{t+h}-A_t}{h}
-
\frac{d}{dt}A_t
\right\|_1
\to 0
\]
as $h\to 0$.
}.
Thus, taking the trace of equation~\eqref{Ltemp}, 
\[
0
=
\Tr\left[
\left(
2G+\sum_{\alpha,\beta=1}^{d^2-1}
c_{\alpha\beta}F_\beta^\dagger F_\alpha
\right)A
\right]
\]
for all $A\in\LA(\HA)$. Here we have used the cyclicity of the trace: $\Tr [H,A]=\Tr(HA-AH)=0$ and $\Tr\{G,A\}=\Tr(GA+AG)=2\Tr(GA)$.
Hence
\[
G
=
-\frac{1}{2}
\sum_{\alpha,\beta=1}^{d^2-1}
c_{\alpha\beta}F_\beta^\dagger F_\alpha.
\]
(See Fact \ref{ex:TrAB=0} below). Substituting this into \eqref{Ltemp} gives \eqref{eq:GKSformHerm}.

Conversely, suppose that $\mathfrak{L}$ has the form \eqref{eq:GKSformHerm}. 
A direct computation shows that $\mathfrak{L}(A^\dagger)=\mathfrak{L}(A)^\dagger$ and $\Tr\mathfrak{L}(A)=0$ for all $A\in\LA(\HA)$. Hence $\mathfrak{L}$ is Hermiticity-preserving and trace-annihilating. It follows that $\exp(t\mathfrak{L})$ is Hermiticity-preserving and trace-preserving for all $t\geq 0$. 
Therefore, $\mathfrak{L}$ generates a trace-preserving and Hermiticity-preserving semigroup.
\end{proof}

\begin{F}\label{ex:TrAB=0}
If $\Tr(BA)=0$ for all $A \in \LA(\HA)$, then $B=0$.
\end{F}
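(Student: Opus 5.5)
The plan is to choose a single, well-adapted test operator $A$ and exploit positive definiteness of the Hilbert--Schmidt inner product. The natural choice is $A=B^\dagger$. The hypothesis then gives
\[
0=\Tr(BB^\dagger)=\Tr\bigl((B^\dagger)^\dagger B^\dagger\bigr)=\|B^\dagger\|_{\rm HS}^2 .
\]
The Hilbert--Schmidt norm is a genuine norm on $\LA(\HA)$, as recalled in Sec.~\ref{sec:MP}, so this forces $B^\dagger=0$ and hence $B=(B^\dagger)^\dagger=0$.

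To keep the argument entirely self-contained, I would also spell out why $\Tr(CC^\dagger)=0$ implies $C=0$ for $C=B^\dagger$. Take an orthonormal basis $\{\phi_i\}$ and use the definition \eqref{eq:tr}:
\[
\Tr(C^\dagger C)=\sum_i\braket{\phi_i}{C^\dagger C\phi_i}=\sum_i\|C\phi_i\|^2 .
\]
This is a sum of nonnegative terms, so it vanishes only if $C\phi_i=0$ for every $i$. By linearity, $C$ then annihilates all of $\HA$, so $C=0$.

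As an alternative, more concrete route, I could test the hypothesis on the matrix units $A=E_{ji}=\ketbra{j}{i}$. Using the identity $\Tr(\ketbra{\psi}{\phi}A)=\braket{\phi}{A\psi}$ together with cyclicity of the trace,
\[
\Tr(BE_{ji})=\Tr(E_{ji}B)=\braket{i}{B\phi_j}.
\]
Hence every matrix element of $B$ vanishes, and $B=0$.

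I do not expect any real obstacle. The only point needing care is keeping track of which operator is adjointed, so that the trace that appears really is a squared norm; the matrix-unit argument avoids even that.
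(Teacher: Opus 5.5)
Your proof is correct, but your primary route differs from the paper's. The paper tests the hypothesis on rank-one operators $A=\ketbra{\phi}{\psi}$ and reads off $0=\Tr(B\ketbra{\phi}{\psi})=\braket{\psi}{B\phi}$ for all $\psi,\phi$. That is exactly your ``alternative'' matrix-unit argument, which is the special case $\phi=\phi_j$, $\psi=\phi_i$.

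Your main argument instead notes that $\Tr(BA)=\langle B^\dagger,A\rangle_{\rm HS}$. So the hypothesis says $B^\dagger$ is Hilbert--Schmidt orthogonal to every operator, in particular to itself. You then invoke positive definiteness of the Hilbert--Schmidt inner product, and you correctly justify this from scratch via $\Tr(C^\dagger C)=\sum_i\|C\phi_i\|^2$ with $C=B^\dagger$.

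Your route buys conceptual economy. A single test operator suffices, and the Fact is revealed as nothing more than nondegeneracy of the Hilbert--Schmidt inner product. The paper's route needs only the rank-one trace identity and cyclicity, with no adjoint bookkeeping or positivity step. It also makes clear that testing on a fixed spanning set independent of $B$, such as the matrix units, is already enough.
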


\begin{Rem} Since $C =(c_{\alpha \beta})$ is Hermitian, \eqref{eq:GKSformHerm} can be diagonalized. This allows the generator $\mathfrak{L}$ in \eqref{eq:GKSformHerm} to be written in the diagonal form 
\begin{align}\label{eq:GKSdiaHerm}
\mathfrak{L}(\rho) &= -i[H,\rho] + \frac{1}{2} \sum_{\alpha=1}^{d^2-1} \lambda_{\alpha} \Bigl( [G_\alpha,\rho G_\alpha^\dagger]  + [G_\alpha \rho, G_\alpha^\dagger]\Bigr) \nonumber  \\
&= -i[H,\rho] + \frac{1}{2} \sum_{\alpha=1}^{d^2-1} \lambda_{\alpha} \Bigl( 2 G_\alpha\rho G_\alpha^\dagger - G_\alpha^\dagger G_\alpha \rho - \rho  G_\alpha^\dagger G_\alpha \Bigr).
\end{align}
See footnote~\ref{ft:Diag} for the same derivation.
Here $\lambda_\alpha$ is an eigenvalue of the Hermitian matrix $C$, and hence is real.
It should be noted, however, that some of the $\lambda_\alpha$ may be negative at this stage, since only Hermiticity and trace preservation of the semigroup have been assumed.
\end{Rem}

\begin{Lem}\label{lem:cfmop}
Let
$\{\hat{P}^{(\alpha)}\}_{\alpha=1,\ldots,n}$ be a family of operators on
$\LA(\HA)\otimes\LA(\HA)$ of the form
\begin{equation}
\hat{P}^{(\alpha)}
=
\sum_{i,j=1}^{d}P^{(\alpha)}_{ij}\otimes E_{ij},
\end{equation}
where $E_{ij} = \ketbra{i}{j}$ is a matrix unit. 
Then $\{\hat{P}^{(\alpha)}\}_{\alpha=1,\ldots,n}$ is a family of mutually orthogonal projections if and only if
\begin{equation}
\left(P^{(\alpha)}_{ij}\right)^\dagger
=
P^{(\alpha)}_{ji},
\qquad
\sum_{l=1}^d
P^{(\alpha)}_{il}P^{(\beta)}_{lj}
=
\delta_{\alpha\beta}P^{(\alpha)}_{ij},
\qquad
(i,j=1,\ldots,d,\ \alpha,\beta=1,\ldots,n).
\label{eq:cfmop1}
\end{equation}
\end{Lem}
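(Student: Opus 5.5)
The plan is to rewrite the two defining conditions of a family of mutually orthogonal projections, namely $(\hat P^{(\alpha)})^\dagger=\hat P^{(\alpha)}$ and $\hat P^{(\alpha)}\hat P^{(\beta)}=\delta_{\alpha\beta}\hat P^{(\alpha)}$ for all $\alpha,\beta$, in terms of the blocks $P^{(\alpha)}_{ij}$. The tool is uniqueness of the block expansion: every $X$ on $\HA\otimes\HA$ can be written uniquely as $X=\sum_{i,j}X_{ij}\otimes E_{ij}$ with $X_{ij}\in\LA(\HA)$. This holds because $\{E_{ij}\}$ is a basis of $\LA(\HA)$, so $\LA(\HA)\otimes\LA(\HA)=\bigoplus_{i,j}\LA(\HA)\otimes E_{ij}$. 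Concretely, one can recover $X_{ij}=\Tr_2[X(\II\otimes E_{ji})]$ by using $\Tr(E_{ij}E_{kl})=\delta_{jk}\delta_{il}$.

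First I will compute the adjoint. Using $(A\otimes B)^\dagger=A^\dagger\otimes B^\dagger$ and $E_{ij}^\dagger=E_{ji}$, we get
\[
(\hat P^{(\alpha)})^\dagger=\sum_{i,j}(P^{(\alpha)}_{ij})^\dagger\otimes E_{ji}=\sum_{i,j}(P^{(\alpha)}_{ji})^\dagger\otimes E_{ij}.
\]
By uniqueness of the expansion, self-adjointness is therefore equivalent to $(P^{(\alpha)}_{ji})^\dagger=P^{(\alpha)}_{ij}$ for all $i,j$. Taking adjoints and relabelling gives the first condition in \eqref{eq:cfmop1}.

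Next I will compute the product. Using $(A_1\otimes B_1)(A_2\otimes B_2)=A_1A_2\otimes B_1B_2$ and $E_{il}E_{kj}=\delta_{lk}E_{ij}$, we get
\[
\hat P^{(\alpha)}\hat P^{(\beta)}=\sum_{i,l,k,j}P^{(\alpha)}_{il}P^{(\beta)}_{kj}\otimes E_{il}E_{kj}=\sum_{i,j}\Bigl(\sum_l P^{(\alpha)}_{il}P^{(\beta)}_{lj}\Bigr)\otimes E_{ij}.
\]
Comparing block coefficients with $\delta_{\alpha\beta}\hat P^{(\alpha)}=\sum_{i,j}\delta_{\alpha\beta}P^{(\alpha)}_{ij}\otimes E_{ij}$ then gives exactly the second condition in \eqref{eq:cfmop1}. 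The case $\alpha=\beta$ encodes idempotency, and the case $\alpha\neq\beta$ encodes mutual orthogonality. Since every step is an equivalence, both directions follow at once.

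The only point that needs care, which I would state explicitly, is the uniqueness of the block decomposition. Everything else is a routine index computation with matrix units.
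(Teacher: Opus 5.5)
Your proposal is correct and takes essentially the same route as the paper. Like the paper, you compute the adjoint and the products blockwise using $E_{ij}^\dagger=E_{ji}$ and $E_{il}E_{kj}=\delta_{lk}E_{ij}$, then compare coefficients by uniqueness of the expansion in the matrix units; the differences are cosmetic: you treat idempotency and orthogonality together via $\hat P^{(\alpha)}\hat P^{(\beta)}=\delta_{\alpha\beta}\hat P^{(\alpha)}$ and justify uniqueness with an explicit partial-trace formula, whereas the paper handles the two cases separately and cites Fact~\ref{fact:PartTen}.
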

{\it Proof}. Note that an element of the form $\hat{P}=\sum_{i,j=1}^{d} P_{ij}\otimes E_{ij}$ is a projection if and only if
\[
P_{ij}^\dagger=P_{ji}
\quad\text{and}\quad
\sum_{l=1}^d P_{il}P_{lj}=P_{ij}
\qquad
(i,j=1,\ldots,d).
\]
Indeed, using $(A\otimes B)^\dagger = A^\dagger \otimes B^\dagger$ and $E_{ij}^\dagger=E_{ji}$, we have
\[
\hat{P}^\dagger
=
\left(\sum_{i,j=1}^{d} P_{ij}\otimes E_{ij}\right)^\dagger
=
\sum_{i,j=1}^{d} P_{ij}^\dagger\otimes E_{ji}
=
\sum_{i,j=1}^{d} P_{ji}^\dagger\otimes E_{ij}.
\]
Since $\{E_{ij}\}$ forms an orthonormal basis of $\LA(\HA)$, the expansion of $\hat{P}$ with respect to this basis is unique.
Therefore, $\hat{P}$ is self-adjoint if and only if $P_{ij}^\dagger=P_{ji}$ for all $i,j$. (See Fact~\ref{fact:PartTen} below.)
Moreover,
\[
\hat{P}^2
=
\left(\sum_{i,l=1}^{d} P_{il}\otimes E_{il}\right)
\left(\sum_{m,j=1}^{d} P_{mj}\otimes E_{mj}\right)
=
\sum_{i,l,m,j=1}^{d} P_{il}P_{mj}\otimes E_{il}E_{mj}.
\]
Since $E_{il}E_{mj}=\delta_{lm}E_{ij}$, this becomes
\[
\hat{P}^2
=
\sum_{i,j=1}^{d}
\left(
\sum_{l=1}^{d}P_{il}P_{lj}
\right)
\otimes E_{ij}.
\]
Hence, $\hat{P}^2=\hat{P}$ if and only if $\sum_{l=1}^{d}P_{il}P_{lj}=P_{ij}$ for all $i,j$.
Thus $\hat{P}$ is a projection if and only if the two stated conditions hold.

Note also that two such projections $\hat{P}$ and $\hat{Q}=\sum_{i,j=1}^{d} Q_{ij}\otimes E_{ij}$ are orthogonal $\hat{P}\hat{Q}=0$ if and only if
\[
\sum_{l=1}^d P_{il}Q_{lj}=0
\qquad
(i,j=1,\ldots,d).
\]
Indeed, 
\[
\hat{P}\hat{Q}
=
\left(\sum_{i,l=1}^{d}P_{il}\otimes E_{il}\right)
\left(\sum_{m,j=1}^{d}Q_{mj}\otimes E_{mj}\right)
=
\sum_{i,l,m,j=1}^{d}P_{il}Q_{mj}\otimes E_{il}E_{mj}.
\]
Since $E_{il}E_{mj}=\delta_{lm}E_{ij}$, this becomes
\[
\hat{P}\hat{Q}
=
\sum_{i,j=1}^{d}
\left(
\sum_{l=1}^{d}P_{il}Q_{lj}
\right)
\otimes E_{ij}.
\]
Thus $\hat{P}\hat{Q}=0$ if and only if $\sum_{l=1}^{d}P_{il}Q_{lj}=0$ for all $i,j$.

Applying these observations to the family $\{\hat{P}^{(\alpha)}\}_{\alpha=1,\ldots,n}$ proves the assertion.
\hfill $\square$

\begin{F}\label{fact:PartTen}
Let $\HA_1,\HA_2$ be Hilbert spaces of dimensions $d_1$ and $d_2$, respectively.
Let $\{\phi_i\}_{i=1}^{d_2}$ be an orthonormal basis of $\HA_2$.
If $\psi_i,\psi'_i\in\HA_1$, $i=1,\ldots,d_2$, satisfy
\[
\sum_{i=1}^{d_2}\psi_i\otimes\phi_i
=
\sum_{i=1}^{d_2}\psi'_i\otimes\phi_i,
\]
then $\psi_i=\psi'_i$ for all $i=1,\ldots,d_2$.
\end{F}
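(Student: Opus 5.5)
The statement to prove is Fact~\ref{fact:PartTen}: the decomposition $\sum_i \psi_i\otimes\phi_i$ with respect to a fixed orthonormal basis $\{\phi_i\}$ of $\HA_2$ has uniquely determined coefficients $\psi_i$. By bilinearity of the tensor product, subtracting the two sides reduces the claim to the following: if $\sum_{i=1}^{d_2}\chi_i\otimes\phi_i=0$ with $\chi_i:=\psi_i-\psi'_i$, then every $\chi_i=0$.

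To show this, I will extract the coefficient $\chi_j$ by taking inner products with product vectors. Fix an arbitrary $\xi\in\HA_1$ and an index $j$. Pair the vanishing sum with $\xi\otimes\phi_j$. Using the defining formula for the inner product on product vectors, extended by linearity in the second argument, we get
\[
0=\Bigl\langle \xi\otimes\phi_j \Bigm| \sum_i \chi_i\otimes\phi_i\Bigr\rangle=\sum_i\braket{\xi}{\chi_i}\braket{\phi_j}{\phi_i}=\braket{\xi}{\chi_j},
\]
where the last step uses orthonormality $\braket{\phi_j}{\phi_i}=\delta_{ij}$. Since $\xi$ is arbitrary, choosing $\xi=\chi_j$ gives $\|\chi_j\|^2=0$, so $\chi_j=0$, i.e.\ $\psi_j=\psi'_j$.

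An alternative route avoids inner products on the tensor space. Expand each $\chi_i=\sum_k c_{ki}e_k$ in an orthonormal basis $\{e_k\}$ of $\HA_1$. The vanishing sum then becomes $\sum_{k,i}c_{ki}\,e_k\otimes\phi_i=0$. Since $\{e_k\otimes\phi_i\}$ is an orthonormal basis of $\HA_1\otimes\HA_2$, as stated in the preliminaries, and is therefore linearly independent, all $c_{ki}=0$.

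The only point requiring care is that the sesquilinear extension of the inner product is well defined, so that the linearity step in the display is legitimate. This is part of the standard construction recalled in the preliminaries and can be taken as given, so I expect no real obstacle.
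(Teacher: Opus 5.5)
Your proof is correct and is essentially the paper's argument. The paper subtracts the two sides and applies the partial contraction $I_{\HA_1}\otimes\langle\phi_i|$ to read off $\psi_i-\psi'_i=0$ directly; you instead pair with $\xi\otimes\phi_j$ and use nondegeneracy of the inner product on $\HA_1$, which is the weak form of the same contraction, and your alternative via linear independence of the product basis $\{e_k\otimes\phi_i\}$ is also valid.
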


\begin{Lem}\label{lem:FEF}
Let $\{F_\alpha\}_{\alpha=1,\ldots,n}$ be an orthonormal system in $\LA(\HA)$ with respect to the Hilbert--Schmidt inner product, i.e.,
$\Tr F_\alpha^\dagger F_\beta=\delta_{\alpha\beta}$.
Define
\begin{equation}
\hat{P}_\alpha
=
\sum_{i,j=1}^{d} P^{(\alpha)}_{ij}\otimes E_{ij},
\qquad
P^{(\alpha)}_{ij}
=
F_\alpha E_{ij}F_\alpha^\dagger.
\end{equation}
Then $\{\hat{P}_\alpha\}_{\alpha=1,\ldots,n}$ is a family of mutually orthogonal projections on $\LA(\HA)\otimes\LA(\HA)$.
\end{Lem}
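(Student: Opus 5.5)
The plan is to verify the two conditions of Lemma~\ref{lem:cfmop} for the operators $P^{(\alpha)}_{ij}=F_\alpha E_{ij}F_\alpha^\dagger$. By that lemma, it suffices to show
\[
\left(P^{(\alpha)}_{ij}\right)^\dagger=P^{(\alpha)}_{ji},
\qquad
\sum_{l=1}^d P^{(\alpha)}_{il}P^{(\beta)}_{lj}=\delta_{\alpha\beta}P^{(\alpha)}_{ij}.
\]

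The first condition is immediate. Since $(ABC)^\dagger=C^\dagger B^\dagger A^\dagger$ and $E_{ij}^\dagger=E_{ji}$, we get $(F_\alpha E_{ij}F_\alpha^\dagger)^\dagger=F_\alpha E_{ji}F_\alpha^\dagger$.

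For the second condition, I would write out the sum directly:
\[
\sum_{l}F_\alpha E_{il}F_\alpha^\dagger F_\beta E_{lj}F_\beta^\dagger
=\sum_l F_\alpha\ketbra{i}{l}F_\alpha^\dagger F_\beta\ketbra{l}{j}F_\beta^\dagger .
\]
The middle factors combine to
\[
\ketbra{i}{l}F_\alpha^\dagger F_\beta\ketbra{l}{j}=\braket{l}{F_\alpha^\dagger F_\beta\, l}\,\ketbra{i}{j}.
\]
Summing over $l$ therefore gives $\Tr(F_\alpha^\dagger F_\beta)\,E_{ij}=\delta_{\alpha\beta}E_{ij}$, using the definition of the trace \eqref{eq:tr} together with the orthonormality assumption. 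The whole expression then equals $\delta_{\alpha\beta}F_\alpha E_{ij}F_\beta^\dagger$, which is $\delta_{\alpha\beta}P^{(\alpha)}_{ij}$ because the Kronecker delta forces $\beta=\alpha$.

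Both conditions of Lemma~\ref{lem:cfmop} hold, so $\{\hat P_\alpha\}$ is a family of mutually orthogonal projections. There is no real obstacle here. The only step that needs care is the contraction $\sum_l \ketbra{i}{l}X\ketbra{l}{j}=(\Tr X)\ketbra{i}{j}$, which is the same mechanism used in the proof of Lemma~\ref{lem:idF}.

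Alternatively, one could observe that $\hat P_\alpha=(\Gamma_{\alpha}\otimes\I_d)(d\,\ketbra{\Psi_M}{\Psi_M})$, where $\Gamma_\alpha(A)=F_\alpha A F_\alpha^\dagger$. This equals $\ketbra{v_\alpha}{v_\alpha}$ with $v_\alpha=\sum_i F_\alpha e_i\otimes e_i$. Then $\braket{v_\alpha}{v_\beta}=\Tr F_\alpha^\dagger F_\beta=\delta_{\alpha\beta}$, which gives the same conclusion, but I would present the computation via Lemma~\ref{lem:cfmop} to match the paper's setup.
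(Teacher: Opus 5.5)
Your proposal is correct and follows the paper's proof essentially verbatim: you reduce to the two conditions of Lemma~\ref{lem:cfmop}, check the adjoint relation directly, and use the contraction $\sum_l \ketbra{i}{l}F_\alpha^\dagger F_\beta\ketbra{l}{j}=\Tr(F_\alpha^\dagger F_\beta)E_{ij}$, just as the paper does. Your alternative observation that $\hat P_\alpha=\ketbra{v_\alpha}{v_\alpha}$ with $\braket{v_\alpha}{v_\beta}=\delta_{\alpha\beta}$ is also correct and gives a shorter argument, though the paper does not use it.
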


\begin{proof}
By Lemma~\ref{lem:cfmop}, it suffices to show that
$P^{(\alpha)}_{ij}=F_\alpha E_{ij}F_\alpha^\dagger$ satisfies the conditions in \eqref{eq:cfmop1}.
First,
\begin{equation}
(P^{(\alpha)}_{ij})^\dagger
=
(F_\alpha E_{ij}F_\alpha^\dagger)^\dagger
=
F_\alpha E_{ji}F_\alpha^\dagger
=
P^{(\alpha)}_{ji}.
\end{equation}
Moreover,
\begin{align}
\sum_{l=1}^d P^{(\alpha)}_{il}P^{(\beta)}_{lj}
&=
\sum_{l=1}^d
(F_\alpha E_{il}F_\alpha^\dagger)
(F_\beta E_{lj}F_\beta^\dagger) =
\sum_{l=1}^d
(F_\alpha \ketbra{\phi_i}{\phi_l}F_\alpha^\dagger)
(F_\beta \ketbra{\phi_l}{\phi_j}F_\beta^\dagger)
\nonumber\\
&=
\Tr(F_\alpha^\dagger F_\beta)
F_\alpha \ketbra{\phi_i}{\phi_j}F_\beta^\dagger=
\delta_{\alpha\beta}P^{(\alpha)}_{ij}.
\end{align}
Therefore the conditions in \eqref{eq:cfmop1} hold, and Lemma~\ref{lem:cfmop} proves the assertion.
\end{proof}

\begin{Lem}\label{lem:pos-matrix-trace}
Let $C=(c_{\alpha\beta})_{\alpha,\beta=1}^n$ be a positive matrix.
Then, for arbitrary operators $X_1,\ldots,X_n \in \LA(\HA)$,
\begin{equation}
\sum_{\alpha,\beta=1}^n
c_{\alpha\beta}
\Tr\left(X_\alpha X_\beta^\dagger\right)
\geq 0.
\end{equation}
\end{Lem}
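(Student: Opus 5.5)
We diagonalize the positive matrix $C$, which turns the sum into a nonnegative combination of squared Hilbert--Schmidt norms.

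\emph{Step 1: diagonalize $C$.} Since $C$ is positive semidefinite, as in footnote~\ref{ft:Diag} there exist a unitary $U=(U_{\alpha\gamma})$ and eigenvalues $\lambda_\gamma\geq 0$ such that
\[
c_{\alpha\beta}=\sum_{\gamma=1}^n\lambda_\gamma U_{\alpha\gamma}\overline{U_{\beta\gamma}}.
\]
An equivalent route is to write $C=B B^\dagger$ with $B=U\sqrt{D}$, where $D=\mathrm{diag}(\lambda_1,\ldots,\lambda_n)$.

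\emph{Step 2: substitute into the sum.} Using bilinearity of the trace, and noting that the complex conjugate of a scalar pulls through $\dagger$ as $\overline{U_{\beta\gamma}}X_\beta^\dagger=(U_{\beta\gamma}X_\beta)^\dagger$, we get
\[
\sum_{\alpha,\beta}c_{\alpha\beta}\Tr(X_\alpha X_\beta^\dagger)
=\sum_\gamma\lambda_\gamma\Tr\Bigl[\Bigl(\sum_\alpha U_{\alpha\gamma}X_\alpha\Bigr)\Bigl(\sum_\beta U_{\beta\gamma}X_\beta\Bigr)^\dagger\Bigr].
\]
Set $Y_\gamma:=\sum_\alpha U_{\alpha\gamma}X_\alpha$. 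The $\gamma$-th term is then $\lambda_\gamma\Tr(Y_\gamma Y_\gamma^\dagger)$.

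\emph{Step 3: conclude.} We have $\Tr(Y_\gamma Y_\gamma^\dagger)=\Tr(Y_\gamma^\dagger Y_\gamma)=\|Y_\gamma\|_{\rm HS}^2\geq 0$. Alternatively, $Y_\gamma Y_\gamma^\dagger\geq 0$, and Fact~\ref{ex:TrAB} with the identity gives the same conclusion. Multiplying by $\lambda_\gamma\geq0$ and summing over $\gamma$ yields the claim.

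The only point requiring care is the index and conjugation bookkeeping in Step 2, namely confirming that the conjugate lands on the second factor so that each term is a genuine $YY^\dagger$. There is no real obstacle beyond this.
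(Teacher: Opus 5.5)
Your proposal is correct and is essentially the paper's own (first) proof: the paper also diagonalizes $C$ as $c_{\alpha\beta}=\sum_r\lambda_r u^{(r)}_\alpha\overline{u^{(r)}_\beta}$ and rewrites the sum as $\sum_r\lambda_r\Tr(Y_rY_r^\dagger)$ with $Y_r=\sum_\alpha u^{(r)}_\alpha X_\alpha$, concluding from $\lambda_r\geq 0$ and $\Tr(YY^\dagger)\geq 0$. The paper additionally sketches a second route, writing the sum as $\Tr(CY)\geq 0$ where $Y_{\alpha\beta}=\Tr(X_\beta X_\alpha^\dagger)$ is a positive Gram matrix, but your argument matches its main one.
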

\begin{proof}
Using an eigenvalue decomposition of the positive matrix $C$,
\[
c_{\alpha\beta}
=
\sum_{r=1}^n
\lambda_r u^{(r)}_\alpha \overline{u^{(r)}_\beta},
\]

\begin{align}
\sum_{\alpha,\beta=1}^n
c_{\alpha\beta}
\Tr\left(X_\alpha X_\beta^\dagger\right)
&=
\sum_{r=1}^n
\lambda_r
\sum_{\alpha,\beta=1}^n
u^{(r)}_\alpha \overline{u^{(r)}_\beta}
\Tr\left(X_\alpha X_\beta^\dagger\right)
\nonumber\\
&=
\sum_{r=1}^n
\lambda_r
\Tr\left[
\left(
\sum_{\alpha=1}^n u^{(r)}_\alpha X_\alpha
\right)
\left(
\sum_{\beta=1}^n u^{(r)}_\beta X_\beta
\right)^\dagger
\right]\geq 0.
\end{align}
The last inequality follows from $\lambda_r\geq 0$ and
$\Tr(YY^\dagger)\geq 0$ for any operator $Y$.

Alternatively, define $Y=(Y_{\alpha\beta})$ by
$Y_{\alpha\beta}:=\Tr(X_\beta X_\alpha^\dagger)$.
Since $Y$ is a positive matrix, Fact~\ref{ex:TrAB} implies
\[
\sum_{\alpha,\beta=1}^n
c_{\alpha\beta}
\Tr\left(X_\alpha X_\beta^\dagger\right)
=
\Tr(CY)
\geq 0.
\]
\end{proof}

{\it Proof of Theorem \ref{thm:GKS}}.
[If part]: 
Let $\{\Lambda_t\}_{t \geq 0} := \exp(t\mathfrak{L})$ be the semigroup generated by a generator of the form \eqref{eq:GKSform}.
By Lemma~\ref{lem:cpDp}, it is enough to show that $\mathfrak{L} \otimes \I_d$ satisfies the conditions of Theorem~\ref{thm:K}.
Since $\mathfrak{L} \otimes \I_d$ satisfies the trace-preservation condition \eqref{cond2}, it remains to verify the positivity condition \eqref{cond1} for mutually orthogonal projections of the form
\[
\hat{P}_1 = \sum_{i,j=1}^{d} P^{(1)}_{ij} \otimes E_{ij},
\qquad
\hat{P}_2 = \sum_{i,j=1}^{d} P^{(2)}_{ij} \otimes E_{ij},
\]
subject to the conditions \eqref{eq:cfmop1}.
We now observe that
\begin{align}
&\Tr \Bigl[ \hat{P}_1 (\mathfrak{L}\otimes \I_d) \hat{P}_2 \Bigr] =
\Tr \Biggl[
\left(\sum_{i,j=1}^{d} P^{(1)}_{ij} \otimes E_{ij}\right)
\left(\sum_{k,l=1}^{d} \mathfrak{L}(P^{(2)}_{kl}) \otimes E_{kl}\right)
\Biggr] =
\sum_{i,j=1}^{d}
\Tr \bigl[P^{(1)}_{ij}\mathfrak{L}(P^{(2)}_{ji})\bigr] \nonumber\\
&=
-i \sum_{i,j=1}^{d}
\Tr \bigl(P^{(1)}_{ij}[H,P^{(2)}_{ji}]\bigr)
+
\frac{1}{2}
\sum_{i,j=1}^{d}
\sum_{\alpha,\beta=1}^{d^2-1}
c_{\alpha\beta}
\Tr \Bigl[
P^{(1)}_{ij}
\bigl(
2F_\alpha P^{(2)}_{ji}F_\beta^\dagger
-
F_\beta^\dagger F_\alpha P^{(2)}_{ji}
-
P^{(2)}_{ji}F_\beta^\dagger F_\alpha
\bigr)
\Bigr] \nonumber\\
&\overset{\heartsuit}{=}
\sum_{i,j=1}^{d}
\sum_{\alpha,\beta=1}^{d^2-1}
c_{\alpha\beta}
\Tr \bigl(
P^{(1)}_{ij}F_\alpha P^{(2)}_{ji}F_\beta^\dagger
\bigr) \nonumber\\
&\overset{\spadesuit}{=} 
\sum_{i,j,k,l=1}^{d}
\sum_{\alpha,\beta=1}^{d^2-1}
c_{\alpha\beta}
\Tr \bigl(
P^{(1)}_{ik}P^{(1)}_{kj}
F_\alpha
P^{(2)}_{jl}P^{(2)}_{li}
F_\beta^\dagger
\bigr) \nonumber\\
&=
\sum_{k,l=1}^{d}
\sum_{\alpha,\beta=1}^{d^2-1}
c_{\alpha\beta}
\Tr \Biggl[
\left(
\sum_{j=1}^{d}
P^{(1)}_{kj}F_\alpha P^{(2)}_{jl}
\right)
\left(
\sum_{i=1}^{d}
P^{(1)}_{ki}F_\beta P^{(2)}_{il}
\right)^\dagger
\Biggr]
\geq 0.
\label{eq:ineq}
\end{align}
Here, in the fourth equality $\heartsuit$, we used the orthogonality relation in \eqref{eq:cfmop1} and the cyclic property of the trace.
In the fifth equality $\spadesuit$, we used the idempotence relation in \eqref{eq:cfmop1}.
The last inequality follows from Lemma~\ref{lem:pos-matrix-trace}, since $(c_{\alpha\beta})$ is a positive matrix.

\bigskip 

[Only if part]: Assume that $\mathfrak{L}$ generates a completely positive dynamical semigroup.
Since $\mathfrak{L}$ also generates a trace-preserving and Hermiticity-preserving semigroup, we can use Proposition~\ref{prop:GKSformHerm} and write $\mathfrak{L}$ in the diagonal form \eqref{eq:GKSdiaHerm} with an orthonormal basis $G_\alpha$ $(\alpha=1,\ldots,d^2-1)$ and $G_{d^2}=\frac{1}{\sqrt{d}}\II$.
It remains to show that all $\lambda_\alpha$ are nonnegative.
Applying Lemma~\ref{lem:FEF} to $\{G_\alpha\}_{\alpha=1,\ldots,d^2}$, we obtain a complete family of mutually orthogonal projections on $\LA(\HA)\otimes\LA(\HA)$ of the form
\begin{equation}\label{eq:PGForm}
\hat{P}_\alpha
=
\sum_{i,j=1}^{d} P^{(\alpha)}_{ij}\otimes E_{ij},
\qquad
P^{(\alpha)}_{ij}
=
G_\alpha E_{ij}G_\alpha^\dagger.
\end{equation}

By Lemma~\ref{lem:cpDp} and Theorem~\ref{thm:K}, for any $\alpha=1,\ldots,d^2-1$, we have
\begin{equation}
\Tr\left[\hat{P}_\alpha(\mathfrak{L}\otimes \I_d)\hat{P}_{d^2}\right] \geq 0.
\end{equation}
On the other hand, the left-hand side can be computed explicitly.
Indeed, using \eqref{eq:PGForm} in the diagonal form \eqref{eq:GKSdiaHerm}, the same computation as in the if part, up to the equality marked $\heartsuit$, gives
\begin{equation}
\Tr\left[\hat{P}_\alpha(\mathfrak{L}\otimes \I_d)\hat{P}_{d^2}\right]
=
\sum_{i,j=1}^{d}
\sum_{\gamma=1}^{d^2-1}
\lambda_\gamma
\Tr \bigl(
P^{(\alpha)}_{ij}G_\gamma P^{(d^2)}_{ji}G_\gamma^\dagger
\bigr).
\end{equation}
Therefore, using
$P^{(\alpha)}_{ij}=G_\alpha E_{ij}G_\alpha^\dagger$ and
$P^{(d^2)}_{ji}=G_{d^2}E_{ji}G_{d^2}^\dagger=\frac{1}{d}E_{ji}$, we obtain
\begin{align}
\Tr\left[\hat{P}_\alpha(\mathfrak{L}\otimes \I_d)\hat{P}_{d^2}\right]
&=
\frac{1}{d}
\sum_{\gamma=1}^{d^2-1}
\lambda_\gamma
\sum_{i,j=1}^{d}
\Tr\left[
G_\alpha E_{ij}G_\alpha^\dagger
G_\gamma E_{ji}G_\gamma^\dagger
\right] \nonumber\\
&=
\frac{1}{d}
\sum_{\gamma=1}^{d^2-1}
\lambda_\gamma
\sum_{i,j=1}^{d}
\Tr\left[
G_\alpha \ketbra{i}{j}G_\alpha^\dagger
G_\gamma \ketbra{j}{i}G_\gamma^\dagger
\right] \nonumber\\
&=
\frac{1}{d}
\sum_{\gamma=1}^{d^2-1}
\lambda_\gamma
\left(
\sum_{j=1}^{d}
\langle j | G_\alpha^\dagger G_\gamma | j\rangle
\right)
\left(
\sum_{i=1}^{d}
\langle i | G_\gamma^\dagger G_\alpha | i\rangle
\right) \nonumber\\
&=
\frac{1}{d}
\sum_{\gamma=1}^{d^2-1}
\lambda_\gamma
\Tr(G_\alpha^\dagger G_\gamma)
\Tr(G_\gamma^\dagger G_\alpha)
=
\frac{1}{d}\lambda_\alpha.
\end{align}
Hence $\lambda_\alpha\geq 0$ for every $\alpha=1,\ldots,d^2-1$, completing the proof.
\hfill $\square$

\subsection{Kossakowski's characterization of the generators of positive dynamical semigroups}

In this subsection, we prove Theorem~\ref{thm:K}, namely Kossakowski's characterization of the generators of positive dynamical semigroups.
We first present some basic lemmas in Sec.~\ref{BLem}.
We then review contractive semigroups and the Lumer--Phillips theorem
\cite{LumerPhillips1961} in Sec.~\ref{LPThm}.
Next, we discuss Kossakowski's observation that positive dynamical semigroups are trace-preserving contractive semigroups, as stated in Theorem~\ref{thm:PosCont}.
We then prove Theorem~\ref{thm:K} in Sec.~\ref{KosThm}.
Finally, in Sec.~\ref{sec:Kpr}, we add a few remarks on the Kossakowski product.

\subsubsection{Basic Lemmas}\label{BLem}

In this section, we shall frequently use the basic facts summarized in Sec.~\ref{sec:MP}, together with a few additional facts introduced below\footnote{If the logic of a proof is unclear, the reader is encouraged to refer back to that section as needed. }.
\begin{F}\label{ex:NormEigChar}
For a self-adjoint operator $A \in \LA_{\rm sa}(\HA)$, 
\begin{equation}
\Tr A = \sum_{i=1}^d a_i, \ 
\|A\|_1 = \sum_{i=1}^d |a_i|, \ 
\|A\|_{\rm HS} = \sqrt{\sum_{i=1}^d a_i^2}, \ \|A\|_\infty = \max_{i=1,\ldots, d} |a_i|
\end{equation}
where $a_i \ (i=1,\ldots,d)$ are eigenvalues of $A$. 
\end{F}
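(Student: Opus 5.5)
The plan is to reduce everything to the eigenvalue decomposition of a self-adjoint operator. By the spectral theorem recalled in Sec.~\ref{sec:MP}, write $A=\sum_{i=1}^d a_i\ketbra{\phi_i}{\phi_i}$ with $\{\phi_i\}$ an ONB and $a_i\in\R$. Each of the four identities then becomes a computation in this eigenbasis, using the fact that the trace may be evaluated in any ONB.

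For the trace, evaluate \eqref{eq:tr} in the eigenbasis: $\Tr A=\sum_i\braket{\phi_i}{A\phi_i}=\sum_i a_i$. For the trace norm, compute $A^\dagger A=A^2=\sum_i a_i^2\ketbra{\phi_i}{\phi_i}$. By the definition of $f(A)$ for normal operators, $|A|=\sqrt{A^2}=\sum_i|a_i|\ketbra{\phi_i}{\phi_i}$, and the trace formula just proved gives $\|A\|_1=\sum_i|a_i|$. The Hilbert--Schmidt norm is similar: $\|A\|_{\rm HS}^2=\Tr(A^\dagger A)=\Tr(A^2)=\sum_i a_i^2$.

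The operator norm is the only step needing an inequality. Expand $\psi=\sum_i c_i\phi_i$ with $\sum_i|c_i|^2=1$. Then $\|A\psi\|^2=\sum_i a_i^2|c_i|^2\le\max_i a_i^2$, so $\|A\|_\infty\le\max_i|a_i|$. Equality is attained at $\psi=\phi_{i_0}$, where $i_0$ maximizes $|a_i|$.

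The main (minor) obstacle is justifying that $|A|$, defined through the positive square root of $A^\dagger A$, really equals $\sum_i|a_i|\ketbra{\phi_i}{\phi_i}$. This follows because $A^2$ is diagonal in the same ONB with nonnegative eigenvalues $a_i^2$, and $\sqrt{\cdot}$ is applied eigenvalue-wise, giving $\sqrt{a_i^2}=|a_i|$. Since the paper defines $f(A)$ via an eigenvalue decomposition, I will note that this value does not depend on which decomposition is chosen: grouping equal eigenvalues shows it coincides with the spectral-projection form $\sum_\mu f(\alpha_\mu)P_\mu$.
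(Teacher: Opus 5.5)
Your proposal is correct and follows the paper's proof essentially step by step: you evaluate the trace in the eigenbasis, obtain $|A|=\sum_i |a_i|\ketbra{\phi_i}{\phi_i}$ and the Hilbert--Schmidt norm from $A^\dagger A=A^2$, and bound the operator norm by expanding $\psi=\sum_i c_i\phi_i$, with equality attained at a maximizing eigenvector. Your added remark that $f(A)$ does not depend on the chosen eigenvalue decomposition is a small point the paper leaves implicit.
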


\begin{Lem}\label{lem:otnorm}
For any $A,B\in\LA_{\rm sa}(\HA)$,
\[
|\Tr AB|\leq \|A\|_\infty \|B\|_1.
\]
\end{Lem}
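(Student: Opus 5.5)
The plan is to diagonalize $B$ and reduce the trace to a weighted sum of diagonal entries of $A$. Since $B$ is self-adjoint, it is normal. By the spectral facts of Sec.~\ref{sec:MP}, it admits an eigenvalue decomposition
\[
B=\sum_{i=1}^d b_i\ketbra{\phi_i}{\phi_i},
\]
with real eigenvalues $b_i$ and an orthonormal basis $\{\phi_i\}_{i=1}^d$. Using linearity of the trace together with the identity $\Tr(\ketbra{\psi}{\phi}A)=\braket{\phi}{A\psi}$ (and cyclicity), we obtain
\[
\Tr AB=\sum_{i=1}^d b_i\,\braket{\phi_i}{A\phi_i}.
\]

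Next we bound each diagonal entry by the operator norm. By the Cauchy--Schwarz inequality and the definition of $\|\cdot\|_\infty$,
\[
|\braket{\phi_i}{A\phi_i}|\leq \|\phi_i\|\,\|A\phi_i\|\leq \|A\|_\infty ,
\]
since each $\phi_i$ is a unit vector. The triangle inequality then gives
\[
|\Tr AB|\leq \|A\|_\infty\sum_{i=1}^d |b_i|.
\]
By Fact~\ref{ex:NormEigChar}, $\sum_i|b_i|=\|B\|_1$, which is the claimed bound.

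There is no real obstacle here. The only point requiring a little care is to justify $\Tr AB=\sum_i b_i\braket{\phi_i}{A\phi_i}$ cleanly. This follows by evaluating the trace in the eigenbasis of $B$, using \eqref{eq:tr} and the fact that $B\phi_i=b_i\phi_i$. Self-adjointness of $A$ is not actually needed; it is required only for $B$, so that $B$ can be diagonalized with an orthonormal eigenbasis and Fact~\ref{ex:NormEigChar} applies.
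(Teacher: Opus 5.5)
Your proposal is correct and follows essentially the same route as the paper: diagonalize $B$, write $\Tr AB=\sum_i b_i\langle\phi_i|A\phi_i\rangle$, and bound each term by Cauchy--Schwarz and $\|A\|_\infty$ to obtain $\|A\|_\infty\sum_i|b_i|=\|A\|_\infty\|B\|_1$. Your remark that only the self-adjointness of $B$ is used agrees with the paper's own comment that the inequality holds more generally.
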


\begin{proof}
This inequality holds for arbitrary operators, but the proof is particularly simple for self-adjoint operators. 
Let $B=\sum_i b_i\ketbra{b_i}{b_i}$ be an eigenvalue decomposition of $B$. 
Then
\[
|\Tr AB|
=
\left|\sum_i b_i \langle b_i | Ab_i\rangle\right|
\leq
\sum_i |b_i||\langle b_i | Ab_i\rangle|
\leq
\sum_i |b_i|\|Ab_i\|
\leq
\|A\|_\infty \sum_i |b_i|
=
\|A\|_\infty \|B\|_1.
\]
Here we used the Cauchy--Schwarz inequality and the definition of the operator norm. 
\end{proof}

\begin{Lem}\label{lem:ptr}
For $\rho\in\LA_{\rm sa}(\HA)$, the following statements hold:
\begin{enumerate}
\item[(i)] $\rho\geq 0$ if and only if $\|\rho\|_1=\Tr\rho$.
\item[(ii)] $|\Tr\rho|\leq \Tr|\rho|=\|\rho\|_1$.
\end{enumerate}
\end{Lem}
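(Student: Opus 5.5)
The plan is to work entirely with an eigenvalue decomposition of $\rho$ and to use Fact~\ref{ex:NormEigChar}. Since $\rho\in\LA_{\rm sa}(\HA)$ is self-adjoint, it is normal. So we may write $\rho=\sum_{i=1}^d r_i\ketbra{\phi_i}{\phi_i}$ with real eigenvalues $r_i$ and an orthonormal basis $\{\phi_i\}$. By Fact~\ref{ex:NormEigChar}, $\Tr\rho=\sum_i r_i$ and $\|\rho\|_1=\sum_i|r_i|$. Moreover, $|\rho|=\sqrt{\rho^\dagger\rho}=\sqrt{\rho^2}=\sum_i|r_i|\ketbra{\phi_i}{\phi_i}$, so $\Tr|\rho|=\|\rho\|_1$ holds by definition and agrees with this formula. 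Both statements then reduce to elementary inequalities between the real numbers $\sum_i r_i$ and $\sum_i |r_i|$.

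For (ii), apply the triangle inequality for real numbers: $|\Tr\rho|=|\sum_i r_i|\leq\sum_i|r_i|=\Tr|\rho|=\|\rho\|_1$.

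For (i), the forward direction goes as follows. If $\rho\geq 0$, the spectral characterization of positive normal operators from Sec.~\ref{sec:MP} gives $r_i\geq 0$ for all $i$. Hence $|r_i|=r_i$, and summing gives $\|\rho\|_1=\Tr\rho$.

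For the converse, assume $\sum_i|r_i|=\sum_i r_i$. Then $\sum_i(|r_i|-r_i)=0$, and each term $|r_i|-r_i$ is nonnegative. Therefore every term vanishes, i.e.\ $r_i=|r_i|\geq 0$ for all $i$. By the same spectral characterization, $\rho\geq 0$.

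No step is a genuine obstacle. The only point needing a little care is justifying $|\rho|=\sum_i|r_i|\ketbra{\phi_i}{\phi_i}$. This follows because $\rho^2=\sum_i r_i^2\ketbra{\phi_i}{\phi_i}$, so the positive square root defined via the eigenvalue decomposition gives $\sqrt{r_i^2}=|r_i|$. Alternatively, one can simply cite Fact~\ref{ex:NormEigChar} directly for $\|\rho\|_1$.
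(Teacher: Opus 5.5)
Your proposal is correct and follows essentially the same approach as the paper: reduce to the real eigenvalues $r_i$ via Fact~\ref{ex:NormEigChar}, get (ii) from the triangle inequality, and get (i) from the spectral characterization of positivity together with the observation that $\sum_i(|r_i|-r_i)=0$ forces every $r_i\geq 0$. Your extra remark justifying $|\rho|=\sum_i|r_i|\ketbra{\phi_i}{\phi_i}$ is a harmless addition, since the paper simply takes it from Fact~\ref{ex:NormEigChar}.
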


\begin{proof}
Let $r_i$, $i=1,\ldots,d$, be the eigenvalues of the self-adjoint operator
$\rho$.

(i) The condition $\rho\geq 0$ is equivalent to $r_i\geq 0$ for all $i$. If
$\rho\geq 0$, then $\|\rho\|_1=\sum_i |r_i|=\sum_i r_i=\Tr\rho$. Conversely, if
$\|\rho\|_1=\Tr\rho$, then $\sum_i(|r_i|-r_i)=0$. Since each $|r_i|-r_i$ is
nonnegative, we have $|r_i|=r_i$ for all $i$, and hence $r_i\geq 0$ for all
$i$. Thus $\rho\geq 0$.

(ii) Since $\Tr\rho=\sum_i r_i$ and $\Tr|\rho|=\sum_i |r_i|$, the statement
follows from the triangle inequality
$|\sum_i r_i|\leq \sum_i |r_i|$.
\end{proof}

For a self-adjoint operator $\rho\in\LA_{\rm sa}(\HA)$ with eigenvalue
decomposition $\rho=\sum_i r_i\ketbra{r_i}{r_i}$, the positive part $\rho_+$ and the negative part $\rho_-$ are defined by
\[
\rho_+:=\sum_{i:r_i>0} r_i\ketbra{r_i}{r_i},
\qquad
\rho_-:=\sum_{i:r_i<0} |r_i|\ketbra{r_i}{r_i},
\]
respectively. Then $\rho_+$ and $\rho_-$ are positive and orthogonal to each other, that is, $\rho_\pm\geq 0$ and $\rho_+\rho_-=\rho_-\rho_+=0$. From the definitions, we immediately obtain the following decomposition.
\begin{Lem}\label{lem:posneg}
Any self-adjoint $\rho\in\LA_{\rm sa}(\HA)$ has a positive-negative decomposition
\begin{equation}
\rho = \rho_+ - \rho_-,
\label{decPM}
\end{equation}
where $\rho_+,\rho_-\geq 0$ and $\rho_+\rho_-=0$. Moreover,
\begin{equation}
\|\rho\|_1 = \|\rho_+\|_1 + \|\rho_-\|_1.
\label{decPMTNorm}
\end{equation}
\end{Lem}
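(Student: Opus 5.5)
The statement has three parts: $\rho_\pm\ge0$, $\rho_+\rho_-=0$, and the trace-norm identity \eqref{decPMTNorm}. I will fix an eigenvalue decomposition $\rho=\sum_i r_i\ketbra{r_i}{r_i}$ with $\{r_i\}$ an orthonormal basis of eigenvectors and work directly from the definitions of $\rho_+$ and $\rho_-$ given just before the lemma.

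\emph{The decomposition.} Split the index set into $I_+=\{i:r_i>0\}$, $I_-=\{i:r_i<0\}$ and $I_0=\{i:r_i=0\}$. Then
\[
\rho=\sum_{i\in I_+}r_i\ketbra{r_i}{r_i}+\sum_{i\in I_-}r_i\ketbra{r_i}{r_i}=\rho_+-\rho_-,
\]
because terms with $r_i=0$ contribute nothing and $r_i=-|r_i|$ on $I_-$.

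\emph{Positivity and orthogonality.} Both $\rho_\pm$ are written as eigenvalue decompositions with nonnegative eigenvalues, so they are positive by the spectral characterization of positivity. Orthogonality follows from $\ketbra{r_i}{r_i}\ketbra{r_j}{r_j}=\delta_{ij}\ketbra{r_i}{r_j}$ and the disjointness of $I_+$ and $I_-$. Hence every cross term in $\rho_+\rho_-$ and in $\rho_-\rho_+$ vanishes.

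\emph{The norm identity.} I will apply Fact~\ref{ex:NormEigChar} three times, once each to $\rho$, $\rho_+$ and $\rho_-$. The eigenvalues of $\rho_+$ are the $r_i$ with $i\in I_+$ together with zeros, and similarly $\rho_-$ has eigenvalues $|r_i|$ for $i\in I_-$ together with zeros. Alternatively, Lemma~\ref{lem:ptr}(i) gives $\|\rho_\pm\|_1=\Tr\rho_\pm$ directly. Either way,
\[
\|\rho_+\|_1+\|\rho_-\|_1=\sum_{i\in I_+}r_i+\sum_{i\in I_-}|r_i|=\sum_{i=1}^d|r_i|=\|\rho\|_1 .
\]

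The only point needing care is identifying the spectra of $\rho_\pm$, including the zero eigenvalues. This is immediate because each is diagonal in the same orthonormal basis $\{r_i\}$, so there is no real obstacle in this argument.
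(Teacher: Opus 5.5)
Your proof is correct and follows essentially the same approach as the paper. The paper gives no separate argument and simply says the decomposition and the norm identity follow immediately from the spectral definitions of $\rho_\pm$. Your write-up supplies exactly those details: disjoint index sets give $\rho_+\rho_-=0$, and the eigenvalue formula for the trace norm (or $\|\rho_\pm\|_1=\Tr\rho_\pm$) gives the norm identity.
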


\bigskip

\subsubsection{Contractive semigroups and the Lumer--Phillips Theorem}\label{LPThm}

In this subsection, we temporarily let $V$ denote an arbitrary finite-dimensional normed space over $\mathbb{K}=\R$ or $\CA$, equipped with a norm $\|\cdot\|$. 
This convention is used only here and in \ref{app:semigroup}. For our purposes, it is enough to keep in mind the case $V=\LA_{\rm sa}(\HA)$ equipped with the trace norm $\|\cdot\|_1$.

\begin{Def}\label{def:CntSemi}
A one-parameter family of linear maps $\{\Lambda_t\}_{t\geq 0}$ on $V$ is called a \emph{contractive semigroup} if it satisfies the following conditions:
\begin{enumerate}
\item[(i)] \emph{Contractivity}: $\|\Lambda_t(v)\|\leq \|v\|$ for all
$v\in V$ and all $t\geq 0$;
\item[(ii)] \emph{Semigroup law}: $\Lambda_{t+s}=\Lambda_t\circ\Lambda_s$ for all $t,s\geq 0$;
\item[(iii)] \emph{Strong continuity}:
$\lim_{t\downarrow 0}\|\Lambda_t(v)-v\|=0$ for all $v\in V$.
\end{enumerate}
\end{Def}

\begin{Def}\label{def:SIP}
An L-semi-inner product on $V$ \cite{Lumer1961} is a binary operation $V\times V\to\mathbb{K}$, denoted by $\<v,v'\>$, satisfying the following conditions:\begin{subequations}\label{eq:si}
\begin{align}
\<v,\alpha v'+\beta v''\>
&=
\alpha\<v,v'\>+\beta\<v,v''\>, \label{eq:sia}\\
\<v,v\>& = \|v\|^2 \label{eq:sib}\\
|\<v,v'\>|&\leq
\|v\|\|v'\|, \label{eq:sic}
\end{align}
\end{subequations}
for all $v,v',v''\in V$ and $\alpha,\beta\in\mathbb{K}$.
\end{Def}
\begin{Rem}
An L-semi-inner product may be viewed as a generalization of an inner product. 
In particular, the inequality \eqref{eq:sic} is the analogue of the
Cauchy--Schwarz inequality. In Lumer's original paper \cite{Lumer1961}, it is simply called a semi-inner product. 
However, the term ``semi-inner product'' is also often used for an inner product without the positive-definiteness condition $\langle v,v\rangle=0 \Rightarrow v=0$. To avoid possible confusion, we call the present object an L-semi-inner product. 

More precisely, in Lumer's original paper, a semi-inner product is first introduced on a vector space, with the positivity condition $\langle v,v\rangle>0$ for $v\neq 0$ in place of \eqref{eq:sib}. 
The norm is then defined by $\|v\|:=\sqrt{\langle v,v\rangle}$, and the resulting space becomes a normed space. Moreover, Lumer showed that every normed space admits an L-semi-inner product compatible with the given norm. Such an L-semi-inner product is not necessarily unique.

In \cite{Kossakowski1972b}, by contrast, Kossakowski started with a normed space $(V,\|\cdot\|)$ and introduced a semi-scalar product compatible with the given norm in the sense of \eqref{eq:sib}.
\end{Rem}

\begin{Def}\label{def:dissipative}
Let $V$ be a normed space equipped with an L-semi-inner product. 
A linear map $\mathfrak{L}$ on $V$ is called dissipative if
\begin{equation}
\Re\<v,\mathfrak{L}v\>\leq 0
\quad
\text{for all } v\in V.
\end{equation}
\end{Def}

The Lumer--Phillips theorem \cite{LumerPhillips1961} is a celebrated result in semigroup theory, characterizing the generators of contractive semigroups in terms of dissipativity. In the finite-dimensional setting needed here, it takes the following simple form.

\begin{Thm}[Lumer--Phillips theorem, finite-dimensional form]\label{thm:LP}
Let $V$ be a finite-dimensional normed space equipped with an L-semi-inner product. 
A linear map $\mathfrak{L}:V\to V$ generates a contractive semigroup
$\{e^{t\mathfrak{L}}\}_{t\geq 0}$ if and only if $\mathfrak{L}$ is dissipative, that is,
\[
\operatorname{Re}\<v,\mathfrak{L}v\>\leq 0,
\qquad v\in V.
\]
\end{Thm}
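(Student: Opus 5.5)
For necessity, I will assume that $\{e^{t\mathfrak{L}}\}_{t\geq 0}$ is contractive and fix $v\in V$. Set
\[
f(t):=\Re\<v,e^{t\mathfrak{L}}v\>.
\]
By \eqref{eq:sic} and contractivity, $f(t)\leq \|v\|\,\|e^{t\mathfrak{L}}v\|\leq\|v\|^2$. By \eqref{eq:sib}, $f(0)=\|v\|^2$. Hence $(f(t)-f(0))/t\leq 0$ for every $t>0$.

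Next I use linearity \eqref{eq:sia} in the second argument:
\[
\frac{f(t)-f(0)}{t}=\Re\Bigl\<v,\frac{e^{t\mathfrak{L}}v-v}{t}\Bigr\>.
\]
As $t\downarrow 0$, the vector $(e^{t\mathfrak{L}}v-v)/t$ converges to $\mathfrak{L}v$, since $V$ is finite-dimensional and the exponential series converges. The map $w\mapsto\<v,w\>$ is linear by \eqref{eq:sia} and bounded by $\|v\|\,\|w\|$ by \eqref{eq:sic}, so it is continuous. The limit therefore passes inside, and $\Re\<v,\mathfrak{L}v\>\leq 0$.

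For sufficiency, I will assume $\mathfrak{L}$ is dissipative. In finite dimensions $e^{t\mathfrak{L}}$ exists, satisfies the semigroup law, and is norm continuous. So only contractivity needs proof, and this is the main obstacle, because the L-semi-inner product is not assumed continuous in its first argument. Rather than differentiating $\|e^{t\mathfrak{L}}v\|$, I will first show the resolvent-type bound
\[
\|(\II-h\mathfrak{L})w\|\geq\|w\| \qquad (h>0,\ w\in V).
\]
This follows from
\[
\|w\|^2=\<w,w\>\leq \Re\<w,w\>-h\Re\<w,\mathfrak{L}w\>=\Re\<w,(\II-h\mathfrak{L})w\>\leq \|w\|\,\|(\II-h\mathfrak{L})w\|,
\]
where the first inequality uses dissipativity and the last uses \eqref{eq:sic}; then I divide by $\|w\|$ when $w\neq 0$.

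This bound shows that $\II-h\mathfrak{L}$ is injective. Since $V$ is finite-dimensional, it is then invertible, and its inverse satisfies $\|(\II-h\mathfrak{L})^{-1}\|\leq 1$ in operator norm.

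Finally I use the exponential formula
\[
e^{t\mathfrak{L}}=\lim_{n\to\infty}\Bigl(\II-\tfrac{t}{n}\mathfrak{L}\Bigr)^{-n}.
\]
It holds in finite dimensions: for large $n$ both sides can be expanded as power series in $\mathfrak{L}$ via the Neumann series, or one can compare logarithms. Each factor has norm at most $1$, so the whole product does too. Passing to the limit gives $\|e^{t\mathfrak{L}}\|\leq 1$, which is contractivity, and this completes the plan.
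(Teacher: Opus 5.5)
Your necessity argument is the same as the paper's. Your key sufficiency estimate $\|(\I-h\mathfrak{L})w\|\geq\|w\|$ is also the paper's bound $\|(\lambda\I-\mathfrak{L})v\|\geq\lambda\|v\|$ with $\lambda=1/h$, and it yields the same resolvent estimate $\|(\lambda\I-\mathfrak{L})^{-1}\|_{\rm op}\leq 1/\lambda$.

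You depart from the paper in passing from the resolvent to $e^{t\mathfrak{L}}$. The paper uses the Yosida approximation $\mathfrak{L}_\lambda=\lambda\mathfrak{L}(\lambda\I-\mathfrak{L})^{-1}=\lambda^2(\lambda\I-\mathfrak{L})^{-1}-\lambda\I$. It shows $\|e^{t\mathfrak{L}_\lambda}\|_{\rm op}\leq e^{-\lambda t}e^{\lambda t}=1$ via $\|e^{A}\|_{\rm op}\leq e^{\|A\|_{\rm op}}$. It then needs only $\mathfrak{L}_\lambda\to\mathfrak{L}$, which is one line from the resolvent bound, plus continuity of the exponential.

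You instead use the implicit-Euler formula $e^{t\mathfrak{L}}=\lim_n(\I-\frac{t}{n}\mathfrak{L})^{-n}$. Contractivity of the approximants is then immediate from submultiplicativity, with no exponential norm estimate and no algebra with $\mathfrak{L}_\lambda$. The cost is that the analytic content moves into the convergence of the product formula, which you only gesture at. Your power-series route does work: the coefficient of $t^k\mathfrak{L}^k$ in $(\I-\frac{t}{n}\mathfrak{L})^{-n}$ is $\frac{1}{k!}\prod_{j<k}(1+\frac{j}{n})$. This decreases monotonically to $\frac{1}{k!}$, which supplies the domination needed to take the limit termwise.

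A cleaner way to close the step is a telescoping identity. With $A_n:=(\I-\frac{t}{n}\mathfrak{L})^{-1}$ and $B_n:=e^{\frac{t}{n}\mathfrak{L}}$,
\[
A_n^n-e^{t\mathfrak{L}}=A_n^n-B_n^n=\sum_{k=0}^{n-1}A_n^{k}(A_n-B_n)B_n^{\,n-1-k}.
\]
Here $\|A_n^k\|_{\rm op}\leq 1$ by your resolvent bound, and $\|B_n^{j}\|_{\rm op}\leq e^{t\|\mathfrak{L}\|_{\rm op}}$ for $j\leq n$. For $n>t\|\mathfrak{L}\|_{\rm op}$ the Neumann series gives $\|A_n-B_n\|_{\rm op}\leq (t\|\mathfrak{L}\|_{\rm op}/n)^2/(1-t\|\mathfrak{L}\|_{\rm op}/n)$, so the difference is $O(1/n)$.

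The two arguments are finite-dimensional versions of the two classical proofs of the Hille--Yosida generation theorem: Yosida approximants versus Hille's exponential formula. Yours makes it more transparent \emph{why} contractivity survives the limit. The paper's has the more elementary limiting step.
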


For completeness, the proof is provided in \ref{app:LP}.

\subsubsection{Proof of Theorem \ref{thm:K}}\label{KosThm} 

One of Kossakowski's key ideas is to connect the contraction property with the positivity-preserving property.
\begin{Thm}\label{thm:PosCont}
A trace-preserving linear map $\Lambda$ on $\LA_{\rm sa}(\HA)$ is positive if and only if it is contractive with respect to the trace norm.
\end{Thm}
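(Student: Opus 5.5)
We prove the two implications separately, in both cases reducing the statement to Lemma~\ref{lem:ptr} and the positive--negative decomposition of Lemma~\ref{lem:posneg}. Throughout, $\Lambda$ is real-linear on $\LA_{\rm sa}(\HA)$ and trace-preserving, and contractivity means $\|\Lambda\|_{1\to 1}\leq 1$, i.e.\ $\|\Lambda(X)\|_1\leq\|X\|_1$ for all $X\in\LA_{\rm sa}(\HA)$.

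\emph{Positive $\Rightarrow$ contractive.} Fix $X\in\LA_{\rm sa}(\HA)$ and write $X=X_+-X_-$ with $X_\pm\geq 0$ as in Lemma~\ref{lem:posneg}. By positivity, $\Lambda(X_\pm)\geq 0$. Lemma~\ref{lem:ptr}(i) together with trace preservation then gives
\[
\|\Lambda(X_\pm)\|_1=\Tr\Lambda(X_\pm)=\Tr X_\pm=\|X_\pm\|_1 .
\]
The triangle inequality and \eqref{decPMTNorm} now yield
\[
\|\Lambda(X)\|_1\leq\|\Lambda(X_+)\|_1+\|\Lambda(X_-)\|_1=\|X_+\|_1+\|X_-\|_1=\|X\|_1 .
\]

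\emph{Contractive $\Rightarrow$ positive.} Let $A\geq 0$ and set $B:=\Lambda(A)\in\LA_{\rm sa}(\HA)$. Combining Lemma~\ref{lem:ptr}(ii), trace preservation, contractivity and Lemma~\ref{lem:ptr}(i) gives the chain
\[
\Tr B\leq|\Tr B|\leq\|B\|_1\leq\|A\|_1=\Tr A=\Tr B .
\]
All inequalities are therefore equalities, so $\|B\|_1=\Tr B$, and Lemma~\ref{lem:ptr}(i) gives $B\geq 0$. Hence $\Lambda$ is positive.

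Neither direction presents a real obstacle; the only care needed is bookkeeping. The positive--negative decomposition supplies the norm additivity \eqref{decPMTNorm}, and the characterization $\rho\geq 0\iff\|\rho\|_1=\Tr\rho$ is what turns the squeezed chain of inequalities back into positivity. The point to check is that $\Lambda$ maps self-adjoint operators to self-adjoint operators, since Lemma~\ref{lem:ptr} is stated for self-adjoint $\rho$. This holds automatically because $\Lambda$ is defined on $\LA_{\rm sa}(\HA)$.
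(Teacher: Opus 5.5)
Your proof is correct and takes the same route as the paper. One direction uses the positive--negative decomposition, the triangle inequality and Lemma~\ref{lem:ptr}(i). The other uses the squeezed chain $\Tr\Lambda(A)\leq\|\Lambda(A)\|_1\leq\|A\|_1=\Tr A$ to force $\|\Lambda(A)\|_1=\Tr\Lambda(A)$ and hence $\Lambda(A)\geq 0$.
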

\begin{proof}
First suppose that $\Lambda$ is positive. 
For any $\rho\in\LA_{\rm sa}(\HA)$, write $\rho=\rho_+-\rho_-$ with $\rho_+,\rho_-\geq 0$, as in \eqref{decPM}.
Since $\Lambda$ is positive, both $\Lambda(\rho_+)$ and $\Lambda(\rho_-)$ are positive. 
Hence, we have
\begin{align*}
\|\Lambda(\rho)\|_1
&=
\|\Lambda(\rho_+)-\Lambda(\rho_-)\|_1 \quad \qquad \text{(by \eqref{decPM} and linearity of $\Lambda$)} \\
&\leq 
\|\Lambda(\rho_+)\|_1+\|\Lambda(\rho_-)\|_1 \ \quad \text{(by triangle inequality)} \\
&=
\Tr\Lambda(\rho_+)+\Tr\Lambda(\rho_-) \qquad \text{(by Lemma \ref{lem:ptr} (i) and $\Lambda(\rho_\pm) \ge 0$) } \\
&=
\Tr\rho_+ + \Tr\rho_- \qquad\qquad\quad \text{(by trace preserving property)} \\
&=
\|\rho_+\|_1+\|\rho_-\|_1 =
\|\rho\|_1 \quad \text{(by Lemma \ref{lem:ptr} (i) and \eqref{decPMTNorm}). }
\end{align*}
Thus $\Lambda$ is contractive.

Conversely, suppose that $\Lambda$ is contractive. We show that $\Lambda$ is positive. 
Let $\rho\geq 0$. By Lemma~\ref{lem:ptr}, the trace-preserving property of $\Lambda$, and contractivity, we have
\begin{align*}
\|\rho\|_1 = \Tr\rho = \Tr\Lambda(\rho) \leq \|\Lambda(\rho)\|_1 \leq \|\rho\|_1.
\end{align*}
Therefore all inequalities are equalities, and in particular
$\|\Lambda(\rho)\|_1=\Tr\Lambda(\rho)$. By Lemma~\ref{lem:ptr} (i), this implies
$\Lambda(\rho)\geq 0$. Hence $\Lambda$ is positive.
\end{proof}
\bigskip 

Thus, a positive dynamical semigroup is equivalently a trace-preserving contractive semigroup.
\begin{Cor}\label{cor:DyCont}
A one-parameter family $\{\Lambda_t\}_{t\geq 0}$ on $\LA_{\rm sa}(\HA)$ is a positive dynamical semigroup if and only if it is a trace-preserving contractive semigroup with respect to the trace norm.
\end{Cor}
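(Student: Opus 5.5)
The plan is to apply Theorem~\ref{thm:PosCont} to each $\Lambda_t$ separately. The semigroup law and strong continuity are common to both definitions, so only condition (i) of each needs to be matched.

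Suppose first that $\{\Lambda_t\}_{t\geq 0}$ is a positive dynamical semigroup. Then every $\Lambda_t$ is a positive, trace-preserving, real-linear map on $\LA_{\rm sa}(\HA)$. The ``only if'' direction of Theorem~\ref{thm:PosCont} therefore gives $\|\Lambda_t(\rho)\|_1\leq\|\rho\|_1$ for all $\rho\in\LA_{\rm sa}(\HA)$ and all $t\geq 0$. This is exactly the contractivity condition (i) of Definition~\ref{def:CntSemi} for $V=\LA_{\rm sa}(\HA)$ equipped with the trace norm. Condition (ii), the semigroup law, is identical in both definitions. Condition (iii), strong continuity in the trace norm, is likewise the same statement in both definitions. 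Trace preservation is part of the hypothesis, so $\{\Lambda_t\}$ is a trace-preserving contractive semigroup.

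Conversely, suppose $\{\Lambda_t\}$ is a trace-preserving contractive semigroup with respect to $\|\cdot\|_1$. Each $\Lambda_t$ is then a trace-preserving linear map that contracts the trace norm. The ``if'' direction of Theorem~\ref{thm:PosCont} shows that each $\Lambda_t$ is positive, so together with trace preservation, condition (i) of the definition of a dynamical semigroup holds. Conditions (ii) and (iii) transfer verbatim, as before.

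The argument is essentially bookkeeping. The only point needing care is that Theorem~\ref{thm:PosCont} is applied pointwise in $t$, and that both definitions are stated on the same real normed space $(\LA_{\rm sa}(\HA),\|\cdot\|_1)$ with the same notion of strong continuity. I expect no step to be a real obstacle. The remaining task is to check that nothing beyond condition (i) differs between the two definitions.
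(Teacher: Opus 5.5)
Your proposal is correct and takes the same approach as the paper: the paper states this corollary as an immediate consequence of Theorem~\ref{thm:PosCont}, applied to each $\Lambda_t$ separately. The semigroup law and trace-norm strong continuity are word-for-word the same in the two definitions when $V=\LA_{\rm sa}(\HA)$ is given the trace norm, so only condition (i) needs to be matched, as you do.
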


To apply the Lumer--Phillips theorem, Kossakowski introduced the following L-semi-inner product on $\LA_{\rm sa}(\HA)$ with respect to the trace norm.
\begin{Def}
For $A,B\in \LA_{\rm sa}(\HA)$, the \emph{Kossakowski product} is defined by
\begin{equation}\label{KosProd}
\<A,B\>_K:=\|A\|_1 \Tr\left[\operatorname{sgn}(A)B\right],
\end{equation}
where the sign operator $\operatorname{sgn}(A)\in\LA_{\rm sa}(\HA)$ is defined by
\begin{equation}\label{eq:sgn}
\operatorname{sgn}(A):= P_+-P_-,
\end{equation}
where $P_+$ and $P_-$ denote the projections onto the positive and negative eigenspaces of $A$, respectively. Equivalently, if $A=\sum_\mu \lambda_\mu P_\mu$ is the spectral decomposition of $A$, $\operatorname{sgn}(A)=\sum_\mu \operatorname{sgn}(\lambda_\mu)P_\mu$, where the scalar sign function $\operatorname{sgn}(x)$ is given by
\[
\operatorname{sgn}(x)
:=
\begin{cases}
1 & x>0,\\
0 & x=0,\\
-1 & x<0.
\end{cases}
\]
\end{Def}

\begin{Prop}\label{prop:KossakowskiProduct}
The Kossakowski product is a real L-semi-inner product on $\LA_{\rm sa}(\HA)$ equipped with the trace norm.
\end{Prop}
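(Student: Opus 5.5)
We verify the three axioms \eqref{eq:sia}--\eqref{eq:sic} of Definition~\ref{def:SIP} directly, with $\mathbb{K}=\R$ and $V=\LA_{\rm sa}(\HA)$ equipped with $\|\cdot\|_1$. First we check that the product is real-valued. Indeed, $\operatorname{sgn}(A)$ is self-adjoint, being a real combination of spectral projections, and $B$ is self-adjoint. Hence $\Tr[\operatorname{sgn}(A)B]\in\R$ by Fact~\ref{ex:TrAB}, and $\<A,B\>_K$ is a real number.

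For \eqref{eq:sia}, observe that $\operatorname{sgn}(A)$ depends only on the first argument. Therefore $B\mapsto \|A\|_1\Tr[\operatorname{sgn}(A)B]$ is real-linear by linearity of the trace and of multiplication by a fixed operator.

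For \eqref{eq:sib}, write the eigenvalue decomposition $A=\sum_i a_i\ketbra{a_i}{a_i}$, so that $\operatorname{sgn}(A)=\sum_i \operatorname{sgn}(a_i)\ketbra{a_i}{a_i}$. Then $\operatorname{sgn}(A)A=\sum_i |a_i|\ketbra{a_i}{a_i}=|A|$, and hence $\Tr[\operatorname{sgn}(A)A]=\sum_i|a_i|=\|A\|_1$ by Fact~\ref{ex:NormEigChar}. Equivalently, in the notation of Lemma~\ref{lem:posneg}, $\operatorname{sgn}(A)A=\rho_++\rho_-$ with $\rho=A$. It follows that $\<A,A\>_K=\|A\|_1^2$. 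This includes the case $A=0$, where $\operatorname{sgn}(0)=0$.

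For \eqref{eq:sic}, apply Lemma~\ref{lem:otnorm} with the self-adjoint pair $\operatorname{sgn}(A)$ and $B$:
\[
|\<A,B\>_K|=\|A\|_1\,|\Tr[\operatorname{sgn}(A)B]|\leq \|A\|_1\,\|\operatorname{sgn}(A)\|_\infty\,\|B\|_1 .
\]
The eigenvalues of $\operatorname{sgn}(A)$ lie in $\{-1,0,1\}$, so $\|\operatorname{sgn}(A)\|_\infty\le 1$ by Fact~\ref{ex:NormEigChar}. This gives $|\<A,B\>_K|\le\|A\|_1\|B\|_1$.

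None of these steps is a real obstacle. The only point that needs care is the spectral bookkeeping behind $\operatorname{sgn}(A)A=|A|$. Zero eigenvalues must be handled consistently, which works because $\operatorname{sgn}(0)=0$ and such eigenvalues contribute nothing to $\|A\|_1$. The identity then follows from functional calculus applied to commuting functions of the same normal operator. We note that no homogeneity or linearity in the first argument is required, since Definition~\ref{def:SIP} does not ask for it.
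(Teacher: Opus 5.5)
Your proof is correct. For realness, linearity and the normalization $\<A,A\>_K=\|A\|_1^2$ it is the paper's argument: the paper computes $\Tr[(P_+-P_-)A]=\sum_i|a_i|$, which is just the trace of your identity $\operatorname{sgn}(A)A=|A|$.

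The only real difference is in \eqref{eq:sic}. You reduce it to the H\"older-type bound of Lemma~\ref{lem:otnorm} together with $\|\operatorname{sgn}(A)\|_\infty\le 1$. The paper instead redoes the estimate by hand: it splits $\operatorname{sgn}(A)=P_+-P_-$, applies the triangle inequality in an eigenbasis $\{b_j\}$ of $B$, and bounds $\bra{b_j}P_+\ket{b_j}+\bra{b_j}P_-\ket{b_j}\le\bra{b_j}(P_++P_-+P_0)\ket{b_j}=1$. The mechanics coincide, since the paper's computation is essentially the proof of Lemma~\ref{lem:otnorm} specialized to $\operatorname{sgn}(A)$.

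Your packaging has a conceptual advantage: it isolates exactly what is needed about $\operatorname{sgn}(A)$. It is self-adjoint, it satisfies $\|\operatorname{sgn}(A)\|_\infty\le 1$, and $\Tr[\operatorname{sgn}(A)A]=\|A\|_1$. It is then immediate that replacing $\operatorname{sgn}(A)$ by any self-adjoint contraction $S$ with $\Tr[SA]=\|A\|_1$ still gives an L-semi-inner product. Such a replacement amounts to a different choice of the supporting functional on $\ker A$, cf.\ Sec.~\ref{sec:Kpr}. Your route also puts to use Lemma~\ref{lem:otnorm}, which the paper states but does not invoke in this proof. The paper's version keeps everything at the level of explicit spectral bookkeeping, which makes the role of the completeness relation $P_++P_0+P_-=\II$ visible.
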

\begin{proof} Since \( \operatorname{sgn}(A) \) is self-adjoint and the trace of a product of self-adjoint operators is real, we have \( \< A, B \>_K\in \mathbb{R} \). 
Let \( P_+ \), \( P_- \), and \( P_0 \) denote the spectral projectors of \( A \) corresponding to the positive, negative, and zero eigenvalues, respectively. By completeness of the spectral decomposition, we have \( P_+ + P_0 + P_- = \II \). 
We now verify that the Kossakowski product satisfies the three properties of a semi-inner product in Definition~\ref{def:SIP}:

First, linearity~\eqref{eq:sia} follows immediately from the linearity of the trace. 
Second, normalization~\eqref{eq:sib} is shown by observing 
\[
\Tr\left[ \operatorname{sgn}(A)\, A \right] 
= \Tr\left[(P_+ - P_-) A \right] 
= \sum_{i: a_i > 0} a_i - \sum_{i: a_i < 0} a_i 
= \sum_{i} |a_i| = \|A\|_1, 
\]
where $a_1,\ldots,a_d$ are eigenvalues of $A$.  Hence
\[
\<A,A\>_K
=
\|A\|_1\Tr\left[\operatorname{sgn}(A)A\right]
=
\|A\|_1^2.
\]
Finally, we prove the Cauchy--Schwarz-type inequality~\eqref{eq:sic}. 
Using the triangle inequality and an eigenvalue decomposition \( B = \sum_j b_j \ketbra{b_j}{b_j} \), we observe
\begin{align*}
|\Tr[(P_+ - P_-) B]| &\le |\Tr[P_+ B]| + |\Tr[P_- B]| \\
&= \left| \sum_j b_j \bra{b_j} P_+ \ket{b_j} \right| 
   + \left| \sum_j b_j \bra{b_j} P_- \ket{b_j} \right|\\
&\le \sum_j |b_j| \left( \bra{b_j} P_+ \ket{b_j} + \bra{b_j} P_- \ket{b_j} + \bra{b_j} P_0 \ket{b_j} \right) \\
&= \sum_j |b_j| \bra{b_j} (P_+ + P_- + P_0) \ket{b_j} \\
&= \sum_j |b_j| \bra{b_j} \II \ket{b_j} = \sum_j |b_j| = \|B\|_1.
\end{align*}

Therefore, we get 
\[
|\< A, B \>_K| = \|A\|_1 \left| \Tr\left[ (P_+ - P_-)\, B \right] \right| \le \|A\|_1 \|B\|_1,
\]
as required.
\end{proof}

We are now ready to prove Theorem~\ref{thm:K} by applying the Lumer--Phillips theorem with the Kossakowski product. Before entering the proof, however, we need one more simple observation concerning the passage to the limit.

The trace is a continuous linear functional on the finite-dimensional space $\LA_{\rm sa}(\HA)$. Hence the trace may be interchanged with limits. 
Namely, if $A_t\to A$ in trace norm as $t\downarrow 0$, then
\[
\lim_{t\downarrow 0}\Tr A_t=\Tr A
=
\Tr\left(\lim_{t\downarrow 0}A_t\right).
\]
(Here $t\downarrow 0$ reflects the semigroup setting $t\geq 0$; the same
argument applies to any trace-norm limit $A_t\to A$.)
In particular, if $A_t$ is differentiable in trace norm, then\footnote{Indeed, using $|\Tr A|\leq \|A\|_1$ for $A\in\LA_{\rm sa}(\HA)$ (see Lemma \ref{lem:ptr}), we have
\[
\left|
\frac{\Tr A_{t+h}-\Tr A_t}{h}
-
\Tr\left(\frac{d}{dt}A_t\right)
\right|
=
\left|
\Tr\left(
\frac{A_{t+h}-A_t}{h}
-
\frac{d}{dt}A_t
\right)
\right|
\leq
\left\|
\frac{A_{t+h}-A_t}{h}
-
\frac{d}{dt}A_t
\right\|_1
\to 0
\]
as $h\to 0$.}
\[
\frac{d}{dt}\Tr A_t
=
\Tr\left(\frac{d}{dt}A_t\right).
\]

\bigskip 

{\it Proof of Theorem~\ref{thm:K}.} We first show the necessity of (i) and (ii) in Theorem~\ref{thm:K}. 
Suppose that $\mathfrak{L}$ is a generator of a dynamical (i.e., trace preserving and positive) semigroup $\{\Lambda_t\}$. 

Since the limit defining $\mathfrak{L}$ in \eqref{eq:L} exists (as shown in \ref{app:semigroup}), and since the trace is continuous,
we have, for any $A\in\LA_{\rm sa}(\HA)$,
\[
\Tr\mathfrak{L}(A)
=
\lim_{t\downarrow 0}
\Tr\left[
\frac{\Lambda_t(A)-A}{t}
\right]
=
\lim_{t\downarrow 0}
\frac{\Tr\Lambda_t(A)-\Tr A}{t}
=
0.
\]
Here the last equality follows from the trace-preserving property. 
Since this argument applies to every self-adjoint $A\in\LA_{\rm sa}(\HA)$, it applies in particular to every projection $P$. 
Therefore condition (ii) in Theorem~\ref{thm:K} is satisfied.

Next, let $P$ and $Q$ be orthogonal projections. Then, we have
\begin{equation}\label{eq:TrPLQ}
\Tr\left[P\mathfrak{L}(Q)\right]
=
\lim_{t\downarrow 0}
\Tr\left[
P\frac{\Lambda_t(Q)-Q}{t}
\right]
=
\lim_{t\downarrow 0}
\frac{1}{t}
\Tr\left[P\Lambda_t(Q)\right],
\end{equation}
where we have used $PQ=0$. 
Since $\Lambda_t$ is positive and $Q\geq 0$, we have $\Lambda_t(Q)\geq 0$ and $\Tr\left[P\Lambda_t(Q)\right]\geq 0$ for every $t>0$ (see Fact \ref{ex:TrAB}). Thus, the left-hand side of \eqref{eq:TrPLQ} is also non-negative. 
Therefore condition (i) in Theorem~\ref{thm:K} is satisfied.

Conversely, suppose that conditions (i) and (ii) in Theorem~\ref{thm:K} hold.
We show that the semigroup generated by $\mathfrak{L}$ is a positive dynamical semigroup.

First, we note that \eqref{cond2}, assumed for every projection $P$, extends by linearity to every self-adjoint operator $A\in\LA_{\rm sa}(\HA)$. 
Indeed, if $A=\sum_\mu \lambda_\mu P_\mu$ is the spectral decomposition of $A$, then
\[
\Tr\mathfrak{L}(A)
=
\sum_\mu \lambda_\mu \Tr\mathfrak{L}(P_\mu)
=
0.
\]
This immediately implies that $\Lambda_t:=e^{t\mathfrak{L}}$ is trace
preserving. 
Indeed, for $A\in\LA_{\rm sa}(\HA)$, since $\frac{d}{dt}\Lambda_t(A)=\mathfrak{L}(\Lambda_t(A))$ and $\Lambda_t(A)\in\LA_{\rm sa}(\HA)$, and since, as observed above, the trace commutes with differentiation, we have
\[
\frac{d}{dt}\Tr\Lambda_t(A)
=
\Tr\mathfrak{L}(\Lambda_t(A))
=
0.
\]
Hence $\Tr\Lambda_t(A)$ is constant in $t$. Since $\Lambda_0=\id$, we obtain $\Tr\Lambda_t(A)=\Tr A$ for all $t\geq 0$. Thus $\Lambda_t$ is trace preserving.

Next, we verify that $\mathfrak{L}$ is dissipative with respect to the Kossakowski product:
\begin{equation}\label{cond:dis}
\< A,\mathfrak{L}(A) \>_K
=
\Re \< A,\mathfrak{L}(A) \>_K
\leq 0
\qquad
(\forall A\in \LA_{\rm sa}(\HA)).
\end{equation}
To show this, let $A=\sum_\mu \lambda_\mu P_\mu$ be the spectral decomposition of $A$, where the $P_\mu$ are mutually orthogonal projections with $\sum_\mu P_\mu=\II$. 
Then $\operatorname{sgn}(A)=\sum_\mu \operatorname{sgn}(\lambda_\mu)P_\mu$, and hence
\begin{align}
\< A,\mathfrak{L}(A) \>_K
&=
\|A\|_1
\Tr\left[
\left(\sum_\mu \operatorname{sgn}(\lambda_\mu)P_\mu\right)
\mathfrak{L}\left(\sum_\nu \lambda_\nu P_\nu\right)
\right]
\nonumber\\
&=
\|A\|_1
\sum_\mu \operatorname{sgn}(\lambda_\mu)\lambda_\mu
\Tr[P_\mu\mathfrak{L}(P_\mu)] + \|A\|_1
\sum_\mu\sum_{\nu\neq\mu}
\operatorname{sgn}(\lambda_\nu)\lambda_\mu
\Tr[P_\nu\mathfrak{L}(P_\mu)].
\label{eq:ala}
\end{align}
Since $\sum_\nu P_\nu=\II$, condition (ii) in Theorem~\ref{thm:K} gives
\[
0
=
\Tr[\mathfrak{L}(P_\mu)]
=
\sum_\nu\Tr[P_\nu\mathfrak{L}(P_\mu)]
=
\Tr[P_\mu\mathfrak{L}(P_\mu)]
+
\sum_{\nu\neq\mu}\Tr[P_\nu\mathfrak{L}(P_\mu)].
\]
Hence
\[
\Tr[P_\mu\mathfrak{L}(P_\mu)]
=
-\sum_{\nu\neq\mu}\Tr[P_\nu\mathfrak{L}(P_\mu)].
\]
Substituting this into \eqref{eq:ala}, we obtain
\begin{equation}\label{ALAii}
\< A,\mathfrak{L}(A) \>_K
=
-\|A\|_1
\sum_\mu\sum_{\nu\neq\mu}
\alpha_{\mu\nu}
\Tr[P_\nu\mathfrak{L}(P_\mu)],
\end{equation}
where
\[
\alpha_{\mu\nu}
:=
\left(
\operatorname{sgn}(\lambda_\mu)
-
\operatorname{sgn}(\lambda_\nu)
\right)\lambda_\mu.
\]
Since condition (i) in Theorem~\ref{thm:K} guarantees
$\Tr[P_\nu\mathfrak{L}(P_\mu)]\geq 0$ in \eqref{ALAii}, it remains only to show that $\alpha_{\mu\nu}\geq 0$. 
This is immediate if $\lambda_\mu=0$, since then $\alpha_{\mu\nu}=0$.
Suppose therefore that $\lambda_\mu\neq 0$.
Then $\lambda_\mu=\operatorname{sgn}(\lambda_\mu)|\lambda_\mu|$ and
$\operatorname{sgn}(\lambda_\mu)^2=1$.
Thus
\[
\begin{aligned}
\alpha_{\mu\nu}
&=
\left(
\operatorname{sgn}(\lambda_\mu)
-
\operatorname{sgn}(\lambda_\nu)
\right)
\lambda_\mu \\
&=
\left(
\operatorname{sgn}(\lambda_\mu)
-
\operatorname{sgn}(\lambda_\nu)
\right)
\operatorname{sgn}(\lambda_\mu)|\lambda_\mu| \\
&=
\left(
1-\operatorname{sgn}(\lambda_\mu)\operatorname{sgn}(\lambda_\nu)
\right)
|\lambda_\mu|. 
\end{aligned}
\]
Since $\operatorname{sgn}(x)\in\{-1,0,1\}$, the factor $1-\operatorname{sgn}(\lambda_\mu)\operatorname{sgn}(\lambda_\nu)$ is non-negative. 
Thus $\alpha_{\mu\nu}\geq 0$, and hence
$\<A,\mathfrak{L}(A)\>_K\leq 0$. This proves the dissipative condition \eqref{cond:dis}. 

Therefore, since $\mathfrak{L}$ is dissipative with respect to the Kossakowski product, the Lumer--Phillips theorem, Theorem~\ref{thm:LP}, implies that $\Lambda_t=e^{t\mathfrak{L}}$ is a contraction semigroup. By Corollary~\ref{cor:DyCont}, $\Lambda_t$ is therefore a positive dynamical semigroup. 
This completes the proof. \hfill $\square$ 

\begin{Rem}
One might wonder whether the usual Hilbert--Schmidt inner product could be used instead of the Kossakowski product.
This is not the case. 
Let us illustrate this point already in a purely classical two-dimensional example.
Consider the generator
\begin{equation}
\mathfrak{L}
\begin{pmatrix}
a & 0\\
0 & b
\end{pmatrix}
=
\begin{pmatrix}
b & 0\\
0 & -b
\end{pmatrix}, 
\end{equation}
or equivalently
\[
\frac{d}{dt}
\begin{pmatrix}
a(t)\\
b(t)
\end{pmatrix}
=
\begin{pmatrix}
b(t)\\
-b(t)
\end{pmatrix}.
\]

Solving this differential equation, we obtain
\begin{equation}
\Lambda_t
\begin{pmatrix}
a & 0\\
0 & b
\end{pmatrix}
=
\begin{pmatrix}
a+(1-e^{-t})b & 0\\
0 & e^{-t}b
\end{pmatrix}.
\end{equation}
This is a positive trace-preserving semigroup on diagonal Hermitian matrices, since it preserves $a+b$ and maps $a,b\geq 0$ to nonnegative diagonal entries.
Hence, by Theorem \ref{thm:PosCont}, it is contractive with respect to the trace norm.

Now take, for example, a positive matrix $A=\begin{pmatrix}2&0\\0&1\end{pmatrix}$.
The Hilbert--Schmidt inner product gives
\begin{equation}
\operatorname{Tr}[A\mathfrak{L}(A)]
=
\operatorname{Tr}
\left[
\begin{pmatrix}
2 & 0\\
0 & 1
\end{pmatrix}
\begin{pmatrix}
1 & 0\\
0 & -1
\end{pmatrix}
\right]
=
1>0.
\end{equation}
Thus the Hilbert--Schmidt dissipativity condition would fail, even though the semigroup is trace-norm contractive.
\end{Rem}

\subsection{A comment on the Kossakowski product}\label{sec:Kpr}

We end this section with a brief comment on the Kossakowski product \eqref{KosProd}.
The following discussion is meant to provide an intuitive and heuristic way of understanding the origin of the Kossakowski product.
It is partly interpretative, and therefore should not be read as a historical claim about Kossakowski's actual line of thought.
Nevertheless, in view of the role played by the trace norm and the Lumer--Phillips theorem, it is natural to think that this underlying idea was at least implicitly present in Kossakowski's thinking.

Suppose that $\Lambda_t=e^{t\mathfrak{L}}=\I+t\mathfrak{L}+O(t^2)$ is a contraction semigroup on the real Banach space $\mathcal{L}_{\mathrm{sa}}(\mathcal{H})$ equipped with the trace norm.
Then, for each $A\in\mathcal{L}_{\mathrm{sa}}(\mathcal{H})$, one expects the trace norm not to increase infinitesimally:
\begin{equation}\label{ContCond}
\left.
\frac{d}{dt}
\|A+t\mathfrak{L}(A)\|_1
\right|_{t=0+}
\leq 0
\end{equation}
for all $A\in\mathcal{L}_{\mathrm{sa}}(\mathcal{H})$.

The key observation is that, for $A\in\mathcal{L}_{\mathrm{sa}}(\mathcal{H})$ with no zero eigenvalues and for all $B\in\mathcal{L}_{\mathrm{sa}}(\mathcal{H})$, one has
\begin{equation}\label{ContKPr}
\left.
\frac{d}{dt}
\|A+tB\|_1
\right|_{t=0}
=
\operatorname{Tr}[\operatorname{sgn}(A)B].
\end{equation}
Multiplication by the normalization factor $\|A\|_1$ gives precisely the structure captured by the Kossakowski product \eqref{KosProd}.

To show \eqref{ContKPr}, let $A=\sum_i a_i \ketbra{\phi_i}{\phi_i}$ be the eigenvalue decomposition of $A$, where $a_i\neq 0$ for all $i$.
For simplicity, we first assume that $A$ has no degenerate eigenvalues.
For sufficiently small $t$, the eigenvalues of $A+tB$ remain separated from zero.
Let $a_i(t)$ denote the eigenvalue of $A+tB$ satisfying $a_i(0)=a_i$.

Note that, if $x(t)$ is differentiable at $t=0$ and $x(0)\neq 0$, then
\begin{equation}
\left.
\frac{d}{dt}
|x(t)|
\right|_{t=0}
=
\operatorname{sgn}(x(0))
\left.
\frac{d}{dt}
x(t)
\right|_{t=0},
\end{equation}
because the sign of $x(t)$ is constant for sufficiently small $t$.
By first-order perturbation theory, we have\footnote{
Indeed, let $\ket{\phi_i(t)}$ be a normalized eigenvector of $A+tB$ corresponding to $a_i(t)$, chosen so that $\ket{\phi_i(0)}=\ket{\phi_i}$.
Then
\begin{equation}
(A+tB)\ket{\phi_i(t)}=a_i(t)\ket{\phi_i(t)}.
\end{equation}
Differentiating both sides at $t=0$, we obtain
\begin{equation}
B\ket{\phi_i}
+
A
\left.
\frac{d}{dt}\ket{\phi_i(t)}
\right|_{t=0}
=
\left.
\frac{d}{dt}a_i(t)
\right|_{t=0}
\ket{\phi_i}
+
a_i
\left.
\frac{d}{dt}\ket{\phi_i(t)}
\right|_{t=0}.
\end{equation}
Multiplying this equation from the left by $\bra{\phi_i}$ and using $\bra{\phi_i}A=a_i\bra{\phi_i}$, the terms containing $\left.d\ket{\phi_i(t)}/dt\right|_{t=0}$ cancel and one gets \eqref{PEig}.}
\begin{equation}\label{PEig}
\left.
\frac{d}{dt}
a_i(t)
\right|_{t=0}
=
\bra{\phi_i}B\ket{\phi_i}
=
\operatorname{Tr}\left[\ketbra{\phi_i}{\phi_i}B\right].
\end{equation}
Since $A+tB$ is Hermitian, its trace norm is the sum of the absolute values of its eigenvalues.
Therefore,
\begin{align}
\left.
\frac{d}{dt}
\|A+tB\|_1
\right|_{t=0}&=
\left.
\frac{d}{dt}
\sum_i |a_i(t)|
\right|_{t=0} =
\sum_i \operatorname{sgn}(a_i)
\left.
\frac{d}{dt}
a_i(t)
\right|_{t=0} \nonumber\\
&=
\sum_i \operatorname{sgn}(a_i)
\operatorname{Tr}\left[\ketbra{\phi_i}{\phi_i}B\right] =
\operatorname{Tr}[\operatorname{sgn}(A)B].
\end{align}
The degenerate case can be treated similarly.
Indeed, within each degenerate eigenspace, the first-order corrections are given by the eigenvalues of the restriction of $B$ to that eigenspace, and their sum gives the trace of this restriction.
Therefore the same formula remains valid.

Taking $B=\mathfrak{L}(A)$ in \eqref{ContKPr}, the infinitesimal contraction condition \eqref{ContCond} becomes
\begin{equation}
\operatorname{Tr}[\operatorname{sgn}(A)\mathfrak{L}(A)]\leq 0.
\end{equation}
Multiplying both sides by $\|A\|_1$, this is equivalent to
\begin{equation}
\< A,\mathfrak{L}(A)\>_K \leq 0.
\end{equation}
Thus the Kossakowski product packages the differential form of trace-norm contractivity into the algebraic dissipativity condition appearing in the Lumer--Phillips theorem.

From this viewpoint, the Kossakowski product is not an ad hoc device.
Rather, it is a concrete realization, for the trace norm on Hermitian operators, of the supporting functional that appears in the Lumer--Phillips theory.
It converts the infinitesimal contractivity of a trace-norm contraction semigroup into an algebraic condition on its generator.

When $A$ has zero eigenvalues, however, the supporting functional of the trace norm at $A$ is no longer uniquely determined.
This non-uniqueness indeed reflects the general non-uniqueness of L-semi-inner products compatible with a given norm \cite{Lumer1961,LumerPhillips1961}.
In the trace-norm case, the ambiguity appears in the choice of the action of the supporting functional on the kernel of $A$.
Kossakowski's choice corresponds to taking $\operatorname{sgn}(A)$ to be zero on $\ker A$.
This choice is arguably the most natural one, since it is canonical, treats the positive and negative spectral subspaces symmetrically, and depends only on the spectral decomposition of $A$.

\section{Personal recollections}

From 2003 to 2004, I stayed at Nicolaus Copernicus University in Toru\'n, Poland, under the supervision of Professor Kossakowski\footnote{Strictly speaking, from October 2003 to March 2004, I stayed there as a doctoral student, and from April to September 2004, as a JSPS postdoctoral fellow under a commissioned research arrangement. I would like to express my sincere gratitude to Professor Hiromichi Nakazato of Waseda University, who was my supervisor in Japan at that time, and to the Japan Society for the Promotion of Science for their support.}. As far as I remember, I was the only postdoc in his laboratory at that time, and I had the extremely precious opportunity to discuss physics and mathematics with him almost every day.

Looking back, I now realize more deeply how special that situation was. I was still only a young researcher. Nevertheless, I was allowed to learn directly from one of the founders of the theory of open quantum systems.

The first thing I would like to convey about Professor Kossakowski is his warm and gentle personality. He was almost always smiling, and with his slightly playful manner he naturally created a bright and relaxed atmosphere around him.

I still remember vividly the day I first arrived in Toru\'n. At that time, I had never travelled abroad for research, and I went to Poland alone. He had kindly arranged to meet me at the bus stop. However, the bus from Warsaw to Toru\'n was delayed. I did not have a mobile phone, and so I had no way of contacting him. In fact, I was not even fully sure whether I was on the right bus.

I think the bus was more than two hours late. When I finally arrived at the bus stop, however, Professor Kossakowski was still there waiting for me. He greeted me with a smile and gave me a firm but warm hug.

There was something almost mysterious about Professor Kossakowski's warmth. It was something I had never quite felt in anyone else. It is difficult to explain this in words. Yet this warmth was evident not only in my own interactions with him, but also in the way he related to other researchers.

Relations among researchers can take many forms. Researchers may be competitors, colleagues, friends or sometimes even comrades. 
Precisely because they take their research seriously, however, there can also be a certain sharpness, a kind of tension, in their interactions. 
Even when a disagreement is purely intellectual, a passionate scientific debate may sometimes make the atmosphere tense.

With Professor Kossakowski, however, this never seemed to happen. Those who spoke with him were naturally drawn into his warm smile, and before long they too would begin to smile. 
There was something remarkable in his presence, a mysterious power that seemed quietly to embrace those around him and make them feel cheerful and at ease. 
Indeed, I often saw, in the faces of those who spoke with him, a softening of their caution, and a gentleness and respect different from their usual expression.

I would also like to say a little about his way of thinking and his attitude toward research. 
On weekdays, Professor Kossakowski would usually come to the room where I was staying at around ten or eleven in the morning. 
We would then discuss physics and mathematics, and I would report to him what I had been thinking about at that time. Before I knew it, this had become our daily routine.

Almost every morning, he would ask me with a smile, ``Any news?" To be honest, for me at that time, this was a kind of pressure, though in a positive sense. 
But afterwards he would always say, again with a smile, ``Are you sleeping well and eating properly? I have a responsibility to make Gen survive."

Professor Kossakowski would almost always bring with him some calculations he had done himself. In his usual modest way, he would say something like, ``I tried a little calculation." 
Many of the calculations he showed me looked very simple and elementary, using only $2\times 2$ or $3 \times 3$ matrices. 
At first, I thought that he was deliberately choosing simple examples out of consideration for me, since I did not understand difficult mathematics very well.

\begin{figure}[ht]
\centering
\includegraphics[width=10cm]{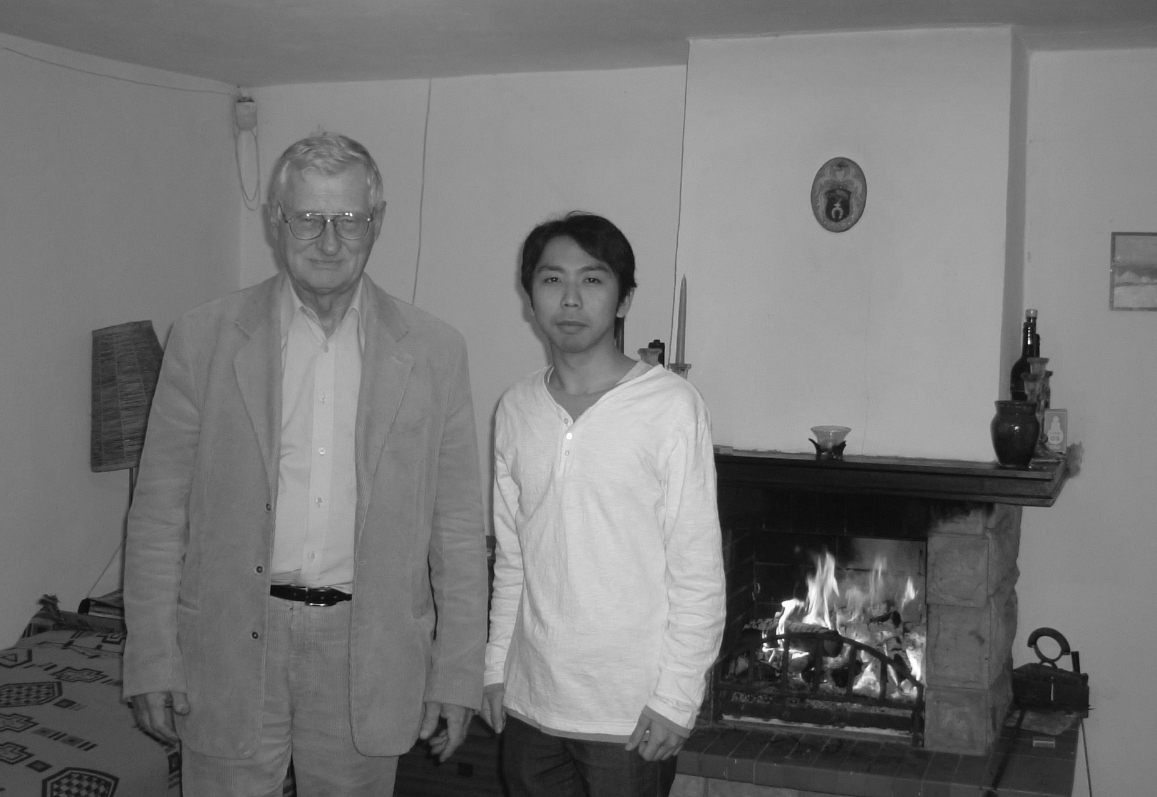}
\caption{With Professor Andrzej Kossakowski in Toru\'n, September 2004.}
\label{fig:KosKim}
\end{figure}
One day, however, Professor Kossakowski and his wife, Aniela, kindly invited me to their cottage and treated me to sushi they had made themselves.
This remains one of my most precious memories from my stay in Toru\'n; see Fig.~\ref{fig:KosKim}. On that occasion, he also showed me his study. 
There I saw a large number of sheets of calculation paper, many of them filled with calculations involving, again, small matrices.

When I saw this, I realized that he was not necessarily showing me elementary examples simply in order to adapt to my level. 
Rather, I think he himself regarded the thorough investigation of small and simple examples as an essential starting point for research.

As a small digression, I also had the chance to meet his beloved dog, Sara, on that occasion. I still remember, with great affection, how he gently murmured, ``She understands everything. She just does not tell me." 
I myself am also a dog lover and have rescue dogs. Even now, when I face a difficult problem, I sometimes imitate him and ask my dogs, Tsubaki, Yuzu and Rin. 
I must admit, however, that so far they have not given me an answer either.

Returning to the main point, at that time I was struggling to understand quantum mechanics in a physically intuitive way. 
To be honest, this is something I still struggle with even now. 
It was for this reason that I began to study mathematics on my own, hoping at least to understand the mathematical structure of the theory.
For someone like me, Professor Kossakowski, who made major contributions to the rigorous mathematical foundations of open quantum systems, including the infinite-dimensional case, by using highly advanced mathematics, was almost a heroic figure. 
For this reason, I was deeply moved to learn that behind such a grand and sophisticated theory there lay an accumulation of such concrete and elementary calculations.

Thinking about it now, I realize that the core structure of a theory is often condensed precisely in simple and beautiful examples. At that time, I do not think I understood this sufficiently. 
What Professor Kossakowski was looking at through his calculations with small matrices was not merely a collection of special cases. Rather, it was an essential structure leading toward a general theory. 
This also seems deeply connected with the fact that Gorini, Kossakowski, and Sudarshan were able, within a finite-dimensional framework, to grasp the essence of the generators of quantum Markovian dynamics.

Advanced theories are not born in an abstract form from the very beginning. Rather, they are born from calculating simple examples again and again, and from patiently observing the structures hidden within them. 
This was one of the most important lessons I learned from Professor Kossakowski.

At the same time, I also learned that the finite-dimensional world itself contains many nontrivial and profound problems. This is particularly evident in the theory of positive maps. 
On one occasion, Professor Kossakowski told me about the theory of positive maps: ``$2 \times 2$ matrices are easy, but as soon as you go to $3 \times 3$ matrices, things suddenly become nontrivial!" 
The fact that, even in finite dimensions, an entirely different world can appear simply by changing the dimension continues to fascinate me even now.

On another day, when we were talking about the relation between mathematics and physics, he said with a smile, half jokingly, ``Mathematicians live in a birdcage and never try to get out." 
I was surprised, because, as I mentioned above, in my limited understanding at the time, he had seemed to me almost like a mathematician. 
But thinking back on it now, I believe what he wanted to convey was that one should not remain confined within mathematics for its own sake. 
Rather, one must keep one's eyes firmly on the problems of physics, and when necessary, have the courage to move forward boldly, even beyond the safe region of a fully established mathematical framework. 
I think this is what he was trying to teach me.

This episode seems to show how deeply he valued physics. At the same time, it also shows that he was a researcher who elevated physical insight into something solid through logic and mathematics, while still having the courage to move forward when necessary.

There is still so much more I could say about Professor Kossakowski, but there is no end to it, so I will stop here for the moment. 
Professor Kossakowski passed away on February 1, 2021\footnote{Wikipedia \cite{WikipediaKossakowski} gives January 31 as the date of his passing, but according to what I was told by his wife and daughter, the actual date was February 1. 
Here I follow the date given by his family.}. I feel a deep sadness that my beloved mentor is no longer with us, that I can no longer see his smile in person, and that I can no longer receive his strong, warm hug. 
Even now, however, I feel a certain warmth, as if he were still nearby, quietly helping me with my research. 

\section{Concluding remarks}

In this note, I have tried to present the route to the GKLS equation followed by Kossakowski and by Gorini, Kossakowski, and Sudarshan in as logically self-contained a manner as possible.
The main point was not merely to reproduce the final form of the generator, but to make visible the structural logic behind it: positivity and trace preservation first lead to Kossakowski's infinitesimal characterization, and the additional requirement of complete positivity then singles out the positive Kossakowski matrix in the GKS representation.
In this sense, the GKLS equation is not just a convenient formula for open quantum dynamics, but the endpoint of a remarkably clear line of thought connecting operational requirements, semigroup theory, and the algebraic structure of finite-dimensional quantum mechanics.

I have also tried to emphasize the role of the Kossakowski product in this logical route.
Although it may look at first like a technical device introduced for the proof, it expresses in a concrete form the supporting functional of the trace norm and thereby connects positivity-preserving dynamics with contractive semigroups.
This observation, at least in the finite-dimensional setting discussed here, helps clarify why the Lumer--Phillips viewpoint fits so naturally with Kossakowski's characterization of positive dynamical semigroups.

The personal recollections included in the latter part of this note are not separate from this mathematical story.
What I learned from Professor Kossakowski was not only a theorem or a formalism, but a way of approaching physics: to look carefully at simple examples, to search for the structure hidden behind them, and to turn that structure into a clear mathematical statement.
The GKLS equation itself may be regarded as a beautiful realization of this attitude.

From this perspective, Kossakowski's legacy lies not only in the formula that bears his name, but also in a way of thinking that brought physics and mathematics together in a sound and fruitful balance.
It also lies in the example he set, not only as a scientist, but as a person of warmth, integrity, and deep humanity.
If this note succeeds even modestly in conveying both the mathematical clarity and the human warmth of that legacy, it will have achieved its purpose.

\bigskip

\section*{Acknowledgements}
I would like to express my deepest gratitude to Professor Kossakowski for the generosity with which he accepted me and for the warmth with which he guided me. His guidance has remained with me to this day, and I believe it will continue to accompany me in the years ahead. I offer my heartfelt prayers for his soul. 
I am deeply grateful to Professor Dariusz Chru\'sci\'nski, both as a collaborator and as a dear friend in Poland. 
I also thank Professor Paolo Muratore-Ginanneschi and Professor Saverio Pascazio for carefully reading the manuscript and for giving me valuable advice.
Finally, I would like to express my sincere gratitude to Professor Kossakowski's wife, Aniela Kossakowska, and his daughter, Aleksandra (Ola) Kossakowska, for allowing me to include some private memories in this note. Above all, I am profoundly grateful for the warmth with which they treat me like a member of their family, and for the support they have given me over the years.

\appendix
\renewcommand{\thesection}{Appendix \Alph{section}}

\section{Characterizations of Complete Positivity}\label{app:cp}

[Proof of Theorem \ref{thm:cp}]

\begin{proof}
By definition, (i) implies (ii).
If $\Lambda$ is $d$-positive, then $\Lambda\otimes\id_d$ maps positive operators on $\HA\otimes\CA^d$ to positive operators.
Since $\ketbra{\Psi_M}{\Psi_M}\geq 0$, this gives (iii).

It remains to show that (iii) implies (i).
Let 
\[
C_\Lambda:=d(\Lambda\otimes\id_d)(\ketbra{\Psi_M}{\Psi_M})=\sum_{i,j=1}^d \Lambda(E_{ij})\otimes E_{ij}
\]
be the Choi matrix of $\Lambda$.
By assumption, $C_\Lambda\geq 0$.
Therefore, by the eigenvalue decomposition, there exist vectors $v_\alpha\in\HA\otimes\CA^d$ such that
\[
C_\Lambda
=
\sum_\alpha
\ketbra{v_\alpha}{v_\alpha}.
\]
For each $\alpha$, write
\[
v_\alpha
=
\sum_{i=1}^d
f_{\alpha i}\otimes e_i
\qquad
(f_{\alpha i}\in\HA).
\]
Define an operator $M_\alpha\in\LA(\HA)$ by
\[
M_\alpha e_i=f_{\alpha i}
\qquad
(i=1,\ldots,d).
\]
Then
\[
\ketbra{v_\alpha}{v_\alpha}
=
\sum_{i,j=1}^d
\ketbra{f_{\alpha i}}{f_{\alpha j}}\otimes E_{ij}
=
\sum_{i,j=1}^d
M_\alpha E_{ij}M_\alpha^\dagger\otimes E_{ij}.
\]
Hence
\[
C_\Lambda
=
\sum_{i,j=1}^d
\left(
\sum_\alpha M_\alpha E_{ij}M_\alpha^\dagger
\right)
\otimes E_{ij}.
\]
Since $\{E_{ij}\}_{i,j=1}^d$ is a basis of $\LA(\CA^d)$, the expansion with respect to this basis is unique.
Therefore,
\[
\Lambda(E_{ij})
=
\sum_\alpha M_\alpha E_{ij}M_\alpha^\dagger
\qquad
(i,j=1,\ldots,d).
\]
By linearity, for every $A\in\LA(\HA)$,
\begin{equation}\label{eq:KrausRepPr}
\Lambda(A)
=
\sum_\alpha M_\alpha A M_\alpha^\dagger.
\end{equation}
This representation immediately implies complete positivity. 
Indeed, for any $n\in\mathbb{N}$ and any positive operator $X\in\LA(\HA\otimes\CA^n)$, we have
\[
(\Lambda\otimes\id_n)(X)
=
\sum_\alpha
(M_\alpha\otimes\II_n)X(M_\alpha^\dagger\otimes\II_n)
\geq 0.
\]
(See Fact \ref{XposYXY} below). 
Thus $\Lambda$ is completely positive.
This proves (iii) implies (i), and hence the three conditions are equivalent.
\end{proof}
\begin{F}\label{XposYXY}
If $X$ is a positive operator, then $YXY^\dagger$ is also positive for any operator $Y$.
\end{F}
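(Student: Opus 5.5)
The statement is Fact~\ref{XposYXY}: if $X\geq 0$, then $YXY^\dagger\geq 0$ for every operator $Y$. The plan is to check the definition of positivity directly, using only the adjoint identity $\braket{\psi}{A\phi}=\braket{A^\dagger\psi}{\phi}$ from Sec.~\ref{sec:MP}.

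Fix an arbitrary vector $\psi$ and set $\phi:=Y^\dagger\psi$. The definition of the adjoint, applied with $A=Y$, moves $Y$ from the ket to the bra:
\[
\braket{\psi}{YXY^\dagger\psi}
=\braket{Y^\dagger\psi}{XY^\dagger\psi}
=\braket{\phi}{X\phi}.
\]
Since $X\geq 0$, the right-hand side is nonnegative for every vector, in particular for $\phi$. As $\psi$ was arbitrary, $YXY^\dagger\geq 0$.

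The argument should be stated for arbitrary finite-dimensional Hilbert spaces, not only $\HA$. In the proof of Theorem~\ref{thm:cp} the fact is applied on $\HA\otimes\CA^n$ with $Y=M_\alpha\otimes\II_n$. There $Y^\dagger=M_\alpha^\dagger\otimes\II_n$ holds by $(A\otimes B)^\dagger=A^\dagger\otimes B^\dagger$, and the computation above is unchanged.

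If $Y$ maps between different spaces, the same computation applies with $\phi=Y^\dagger\psi$ in the domain space. Only the adjoint identity is needed.

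A sum of positive operators is positive, since $\braket{\psi}{(A+B)\psi}=\braket{\psi}{A\psi}+\braket{\psi}{B\psi}$. Hence the Kraus sum $\sum_\alpha (M_\alpha\otimes\II_n)X(M_\alpha^\dagger\otimes\II_n)$ is positive, as needed in the proof of Theorem~\ref{thm:cp}.

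No step is a genuine obstacle. The only point requiring care is the order of arguments in the adjoint identity, since the paper's inner product is conjugate-linear in the first slot; the identity $\braket{\psi}{A\phi}=\braket{A^\dagger\psi}{\phi}$ is used exactly as written.
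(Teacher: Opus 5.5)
Your proposal is correct and follows the paper's own proof: both verify positivity directly through the identity $\langle \psi|YXY^\dagger\psi\rangle=\langle Y^\dagger\psi|X\,Y^\dagger\psi\rangle\geq 0$. Your extra remarks about applying it on $\HA\otimes\CA^n$ and about sums of positive operators are correct, but the paper treats them as immediate and does not include them.
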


\newpage 
\noindent{[Proof of Theorem~\ref{thm:cp2}]}

\begin{proof}
In the above proof of Theorem~\ref{thm:cp}, we have already shown that complete positivity of $\Lambda$ is equivalent to the existence of a Kraus representation \eqref{eq:KrausRepPr}. 
It remains to prove the trace-preserving condition.

Suppose that $\Lambda$ admits the Kraus representation \eqref{eq:KrausRepPr}. 
Then, for every $A\in\LA(\HA)$, using the linearity and cyclic property of the trace, we obtain
\[
\Tr\Lambda(A)
=
\sum_k \Tr(M_k A M_k^\dagger)
=
\sum_k \Tr(M_k^\dagger M_k A)
=
\Tr\left[\left(\sum_k M_k^\dagger M_k\right)A\right].
\]
Therefore, $\Lambda$ is trace-preserving, namely $\Tr\Lambda(A)=\Tr A$ for all $A\in\LA(\HA)$, if and only if
\[
\Tr\left[\left(\sum_k M_k^\dagger M_k\right)A\right]
=
\Tr(\II A)
\qquad
(A\in\LA(\HA)).
\]
By Fact~\ref{f:TrAB} below, this is equivalent to $\sum_k M_k^\dagger M_k=\II.$
\end{proof}
\begin{F}\label{f:TrAB}
If $\Tr XA = \Tr XB$ for all $X \in \LA(\HA)$, then $A = B$. 
\end{F}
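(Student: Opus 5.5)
The statement to prove is Fact~\ref{f:TrAB}: if $\Tr XA=\Tr XB$ for all $X\in\LA(\HA)$, then $A=B$. I will reduce it to a statement about a single operator. Set $D:=A-B$. Linearity of the trace gives $\Tr(XD)=\Tr XA-\Tr XB=0$ for every $X\in\LA(\HA)$. It therefore suffices to show that an operator $D$ with $\Tr(XD)=0$ for all $X$ must vanish. This is exactly Fact~\ref{ex:TrAB=0}, with the order of the factors swapped by cyclicity, $\Tr(XD)=\Tr(DX)$. To keep the argument self-contained, I will prove it directly rather than just cite it.

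I will use two choices of test operator $X$.

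\begin{itemize}
\item \textbf{Hilbert--Schmidt choice.} Take $X=D^\dagger$. This gives $0=\Tr(D^\dagger D)=\|D\|_{\rm HS}^2$. Definiteness of the Hilbert--Schmidt norm, via $\Tr D^\dagger D=\sum_i\|D\phi_i\|^2$ for an orthonormal basis $\{\phi_i\}$, then forces $D=0$.
\item \textbf{Matrix-unit choice.} Take $X=E_{ji}=\ketbra{j}{i}$. The identity $\Tr(\ketbra{\psi}{\phi}A)=\braket{\phi}{A\psi}$ from Sec.~\ref{sec:MP} gives $\Tr(E_{ji}D)=\bra{i}D\ket{j}=0$ for all $i,j$. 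So every matrix element of $D$ vanishes, and $D=\sum_{i,j}\bra{i}D\ket{j}E_{ij}=0$.
\end{itemize}

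Either choice finishes the proof, and I would present the first one for brevity.

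There is no real obstacle. The only point to check is that the test operators used, $D^\dagger$ or the $E_{ji}$, are admissible. They are, because the hypothesis quantifies over all of $\LA(\HA)$, not only self-adjoint operators.
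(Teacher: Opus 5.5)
Your proof is correct and follows the paper's route. The paper likewise forms $A-B$, uses linearity and cyclicity to obtain $\Tr((A-B)X)=0$ for all $X$, and then invokes Fact~\ref{ex:TrAB=0}, whose proof is essentially your matrix-unit argument with rank-one test operators $\ketbra{\phi}{\psi}$. Your preferred choice $X=D^\dagger$ is an equally valid one-line shortcut via definiteness of the Hilbert--Schmidt norm, and because you test with $X$ on the left directly, you do not even need the cyclicity step.
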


\section{Semigroups in finite-dimensional normed spaces}\label{app:semigroup}

In Kossakowski's original papers, the characterization of positive dynamical semigroups, including the infinite-dimensional case, was formulated within the framework of the Hille--Yosida theorem and the Lumer--Phillips theorem. Since the present note is meant to remain within finite dimensions, we give in this appendix simple proofs tailored to the finite-dimensional setting: one for the existence of the generator of a semigroup, and the other for the Lumer--Phillips theorem.

In what follows, let $V$ be a finite-dimensional normed vector space over $\mathbb{K} =\R, \CA$, equipped with a norm $\|\cdot\|$.
For the applications in the main text, it is enough to take $V=\LA_{\rm sa}(\HA)$, the real vector space of self-adjoint operators on $\HA$, equipped with the trace norm $\|\cdot\|_1$. 
For a linear map $\Lambda$ on $V$, we define its operator norm by
\begin{equation}
\|\Lambda\|_{\mathrm{op}}
:=
\sup_{\substack{v\in V\\ v\neq 0}}
\frac{\|\Lambda v\|}{\|v\|}.
\end{equation}
\begin{F}\label{ex:SMul}
Let $V$ be a normed vector space.
For linear maps $\Lambda$ and $\Gamma$ on $V$, the operator norm is submultiplicative:
\[
\|\Lambda\Gamma\|_{\rm op}
\leq
\|\Lambda\|_{\rm op}\|\Gamma\|_{\rm op}.
\]
Moreover, if $\I$ denotes the identity map on $V$, then
\[
\|\I\|_{\rm op}=1.
\]
\end{F}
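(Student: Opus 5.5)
The statement to prove is Fact~\ref{ex:SMul}: the operator norm on linear maps of a normed space $V$ is submultiplicative, and $\|\I\|_{\rm op}=1$. I will work directly from the definition of $\|\cdot\|_{\rm op}$ as a supremum of ratios. The first step is the pointwise bound
\[
\|\Lambda v\|\leq \|\Lambda\|_{\rm op}\|v\|\qquad (v\in V).
\]
For $v\neq 0$ this holds because the ratio $\|\Lambda v\|/\|v\|$ is one of the quantities over which the supremum is taken. For $v=0$ both sides vanish, since $\Lambda 0=0$ by linearity.

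For submultiplicativity, fix $v\neq 0$ and apply the pointwise bound twice, first to $\Lambda$ acting on the vector $\Gamma v$ and then to $\Gamma$ acting on $v$:
\[
\|\Lambda\Gamma v\|\leq \|\Lambda\|_{\rm op}\|\Gamma v\|\leq \|\Lambda\|_{\rm op}\|\Gamma\|_{\rm op}\|v\|.
\]
The first inequality uses the $v=0$ case if $\Gamma v=0$, so no separate case split is needed. Dividing by $\|v\|$ shows that $\|\Lambda\|_{\rm op}\|\Gamma\|_{\rm op}$ is an upper bound for every ratio $\|\Lambda\Gamma v\|/\|v\|$. Taking the supremum over $v\neq 0$ then gives $\|\Lambda\Gamma\|_{\rm op}\leq\|\Lambda\|_{\rm op}\|\Gamma\|_{\rm op}$.

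For the identity map, $\|\I v\|/\|v\|=1$ for every $v\neq 0$, so the supremum equals $1$. This implicitly assumes $V\neq\{0\}$; otherwise the supremum is over the empty set, and I would note that this trivial case is excluded.

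The only point requiring care is finiteness of the supremum, so that the pointwise bound is a genuine inequality between real numbers. In finite dimensions this follows from continuity of $v\mapsto\|\Lambda v\|$ on the compact unit sphere; alternatively, expanding in a basis and using equivalence of norms gives a finite bound. Beyond this there is no real obstacle.
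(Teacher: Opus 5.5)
Your proof is correct and follows essentially the same route as the paper: bound $\|\Lambda\Gamma v\|\le\|\Lambda\|_{\rm op}\|\Gamma v\|\le\|\Lambda\|_{\rm op}\|\Gamma\|_{\rm op}\|v\|$, take the supremum, and use $\|\I v\|=\|v\|$. The paper phrases this with the supremum over unit vectors rather than ratios, which is equivalent, and it leaves implicit your side remarks on $V\neq\{0\}$ and on finiteness of the supremum (its appendix works in finite dimensions).
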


\begin{Def}
A family $\{\Lambda_t\}_{t\geq 0}$ of linear operators on $V$ is called a
one-parameter semigroup if it satisfies the following conditions:
\begin{enumerate}
\item {\it Initial condition}: $\Lambda_0=\I$;
\item {\it Semigroup law}: $\Lambda_{t+s}=\Lambda_t\Lambda_s$ for all $t,s\geq 0$.
\end{enumerate}
Moreover, it is called a strongly continuous one-parameter semigroup if it
further satisfies
\begin{enumerate}
\setcounter{enumi}{2}
\item {\it Strong continuity}:
\[
    \lim_{t\downarrow 0}\|\Lambda_t v-v\|=0
    \qquad \text{for all } v\in V.
\]
\end{enumerate}
\end{Def}
Note that, under the semigroup law, condition 3, strong continuity, already
implies condition 1, namely $\Lambda_0=\I$.
Since $V$ is finite-dimensional, strong continuity and uniform continuity are
equivalent. Thus the third condition may equivalently be replaced by
\begin{enumerate}
\setcounter{enumi}{3}
\item {\it Uniform continuity}:
\[
\lim_{t\downarrow 0}\|\Lambda_t-\I\|_{\rm op}=0.
\]
\end{enumerate}

The right-continuity condition in 3 or 4 is stated only at $t=0$.
However, this is enough to imply continuity of the semigroup at every
$t_0>0$, both from the right and from the left. Here we briefly verify this
under the uniform continuity condition 4. (The proof under the strong
continuity condition 3 is essentially the same; only the notation becomes
slightly more cumbersome. In any case, as noted above, in finite dimensions
all norms induce the same topology, so this distinction is immaterial for the
present discussion.)

First, by the uniform continuity 4, there exists $\delta>0$ such that $\|\Lambda_t-\I\|_{\rm op}\leq 1$ for
$0\leq t\leq \delta$. Hence $\|\Lambda_t\|_{\rm op}\leq \|\Lambda_t-\I\|_{\rm op}+\|\I\|_{\rm op}\leq 2$ for $0\leq t\leq \delta$. By the semigroup law, this implies that $\|\Lambda_t\|_{\rm op}$ is bounded on every compact interval $[0,T]$.
Indeed, for any $t\in[0,T]$, write $t=k\delta+r$ with an integer $k\geq 0$ and $0\leq r<\delta$. Then
\[
\Lambda_t=(\Lambda_\delta)^k\Lambda_r,
\qquad
\|\Lambda_t\|_{\rm op}
\leq
\|\Lambda_\delta\|^k_{\rm op}\|\Lambda_r\|_{\rm op}
\leq
2^{k+1}
\leq
2^{\lceil T/\delta\rceil+1},
\qquad
0\leq t\leq T.
\]
Now fix $t_0>0$. For $h>0$, the semigroup law gives $\Lambda_{t_0+h}-\Lambda_{t_0}= \Lambda_{t_0}(\Lambda_h-\id)$, and hence
\[
    \|\Lambda_{t_0+h}-\Lambda_{t_0}\|_{\rm op}
    \leq
    \|\Lambda_{t_0}\|_{\rm op}\|\Lambda_h-\id\|_{\rm op}
    \to 0
    \qquad
    (h\downarrow 0).
\]
Thus $\Lambda_t$ is right-continuous at $t_0$.
Similarly, for $0<h<t_0$, we have $\Lambda_{t_0}=\Lambda_{t_0-h}\Lambda_h$, 
so that
\[
    \Lambda_{t_0-h}-\Lambda_{t_0}
    =
    \Lambda_{t_0-h}(\id-\Lambda_h).
\]
Since $\|\Lambda_{t_0-h}\|_{\rm op}$ is bounded for $h$ sufficiently small, it follows that
\[
    \|\Lambda_{t_0-h}-\Lambda_{t_0}\|_{\rm op}
    \leq
    \|\Lambda_{t_0-h}\|_{\rm op}\|\Lambda_h-\id\|_{\rm op}
    \to 0
    \qquad
    (h\downarrow 0).
\]
Hence $\Lambda_t$ is also left-continuous at $t_0$. Therefore the continuity
condition at $t=0$ implies right-continuity at $t=0$ and continuity at every
$t>0$.

\subsection{Existence of the generator of a semigroup}\label{app:Gen}

Following the idea of \cite{EngelNagel2000}, we prove the existence of the
generator $\mathfrak{L}$ of a semigroup $\{\Lambda_t\}_{t\geq 0}$ on $V$.
The key observation is that, for a semigroup, strong continuity already forces
the existence of an infinitesimal generator. 
In the proof, we shall use the uniform-continuity formulation (mentioned in condition 4 above), only in order to keep the proof short and transparent. 
\begin{Thm}\label{thm:finite_dim_semigroup}
Let $\{\Lambda_t\}_{t\geq 0}$ be a uniform (or equivalently strongly) continuous one-parameter semigroup of linear maps on a finite-dimensional normed space $V$, with
$\Lambda_0=\I$. Then the limit
\[
    \mathfrak{L}v
    :=
    \lim_{t\downarrow 0}
    \frac{\Lambda_t v-v}{t}
\]
exists for every $v\in V$. The resulting linear map
$\mathfrak{L}:V\to V$ is called the generator of the semigroup
$\{\Lambda_t\}_{t\geq 0}$.
Moreover, $\Lambda_t$ is differentiable for every $t>0$ (and right-differentiable
at $t=0$) and satisfies
\[
    \frac{d}{dt}\Lambda_t v
    =
    \mathfrak{L}\Lambda_t v
    =
    \Lambda_t\mathfrak{L}v
    \qquad (t\geq 0,\ v\in V).
\]
Consequently,
\[
    \Lambda_t=\exp(t\mathfrak{L})
    \qquad (t\geq 0).
\]
\end{Thm}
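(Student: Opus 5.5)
The plan follows the ``integrated semigroup'' idea of \cite{EngelNagel2000}. Uniform continuity gives $\|\Lambda_s-\I\|_{\rm op}\to 0$ as $s\downarrow 0$, and by the preceding discussion $t\mapsto\Lambda_t$ is continuous in $\|\cdot\|_{\rm op}$ on $[0,\infty)$. So for small $h>0$ the averaged operator
\[
V_h:=\frac{1}{h}\int_0^h\Lambda_s\,ds
\]
(a Riemann integral of a continuous function with values in the finite-dimensional space of linear maps) satisfies $\|V_h-\I\|_{\rm op}\le\sup_{0\le s\le h}\|\Lambda_s-\I\|_{\rm op}<1$. A Neumann series then shows that $V_h$ is invertible. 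Fix such an $h$ and set $W:=\int_0^h\Lambda_s\,ds=hV_h$, which is also invertible.

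The key computation uses the semigroup law and a change of variables:
\[
\Lambda_t W=\int_t^{t+h}\Lambda_s\,ds,
\qquad
\frac{\Lambda_t-\I}{t}\,W=\frac{1}{t}\int_h^{h+t}\Lambda_s\,ds-\frac{1}{t}\int_0^t\Lambda_s\,ds .
\]
By continuity of $s\mapsto\Lambda_s$, the right-hand side converges in operator norm to $\Lambda_h-\I$ as $t\downarrow 0$, by the mean-value estimate for integrals of continuous functions. Multiplying on the right by $W^{-1}$ shows that $(\Lambda_t-\I)/t$ converges in $\|\cdot\|_{\rm op}$ to
\[
\mathfrak{L}:=(\Lambda_h-\I)W^{-1}.
\]
In particular the limit exists pointwise for every $v$, and $\mathfrak{L}$ is linear.

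For differentiability at $t>0$, I write the difference quotients from both sides. The right quotient is
\[
\frac{\Lambda_{t+k}-\Lambda_t}{k}=\Lambda_t\,\frac{\Lambda_k-\I}{k}=\frac{\Lambda_k-\I}{k}\,\Lambda_t,
\]
which tends to $\Lambda_t\mathfrak{L}=\mathfrak{L}\Lambda_t$. The left quotient is
\[
\frac{\Lambda_t-\Lambda_{t-k}}{k}=\Lambda_{t-k}\,\frac{\Lambda_k-\I}{k},
\]
which tends to $\Lambda_t\mathfrak{L}$ because $\Lambda_{t-k}\to\Lambda_t$ (left continuity) and the quotient is bounded. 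Commutation follows since $\Lambda_t$ commutes with $\Lambda_k$.

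To identify $\Lambda_t$ with $\exp(t\mathfrak{L})$, set $\Phi(s):=\Lambda_s\exp((t-s)\mathfrak{L})$ for $0\le s\le t$. By the product rule and termwise differentiation of the exponential series, $\Phi'(s)=\Lambda_s\mathfrak{L}e^{(t-s)\mathfrak{L}}-\Lambda_s\mathfrak{L}e^{(t-s)\mathfrak{L}}=0$. Hence $\Phi$ is constant, and $\Lambda_t=\Phi(t)=\Phi(0)=e^{t\mathfrak{L}}$.

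The main obstacle is the first step: justifying invertibility of $W$ and the limit of the integral averages. This needs the Riemann integral of a norm-continuous operator-valued function and the estimate $\|\int f\|\le\int\|f\|$, both standard in finite dimensions. The equivalence with strong continuity (boundedness of $\|\Lambda_t-\I\|_{\rm op}$ from pointwise convergence on a basis) I would take from the remark preceding the theorem.
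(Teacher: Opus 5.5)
Your proposal is correct and follows essentially the paper's route: the Engel--Nagel argument in which $\int_0^{h}\Lambda_s\,ds$ is shown to be invertible for small $h$ via a Neumann series, combined with the identity $\Lambda_t R(t_0)=R(t+t_0)-R(t)$. The differences are minor. You read off the explicit formula $\mathfrak{L}=(\Lambda_h-\I)W^{-1}$ directly from the difference quotient at $0$, whereas the paper first obtains differentiability of $\Lambda_t$ from $\Lambda_t=\{R(t+t_0)-R(t)\}R(t_0)^{-1}$. You also replace the appeal to ODE uniqueness by the constancy of $\Lambda_s e^{(t-s)\mathfrak{L}}$, which is just the proof of that uniqueness.
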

\begin{proof}
By uniform continuity, the map $t\mapsto\Lambda_t$ is continuous from $[0,\infty)$ into the finite-dimensional normed space $\LA(V)$ equipped with the operator norm.

For $t\geq 0$, define
\[
    R(t):=\int_0^t \Lambda_s ds .
\]
(The integral may be understood componentwise after choosing a matrix representation of $\LA(V)$.)
Then $R$ is differentiable for every $t>0$, right-differentiable at $t=0$,
and
\[
    \frac{d}{dt}R(t)=\Lambda_t .
\]
Indeed, for $h\neq 0$ with $t+h\geq 0$,
\[
    \frac{R(t+h)-R(t)}{h}-\Lambda_t
    =
    \frac{1}{h}\int_t^{t+h}(\Lambda_s-\Lambda_t)\,ds .
\]
Hence
\[
    \left\|
    \frac{R(t+h)-R(t)}{h}-\Lambda_t
    \right\|_{\rm op}
    \leq
    \sup_{s\in I_{t,h}}\|\Lambda_s-\Lambda_t\|_{\rm op},
\]
where $I_{t,h}$ denotes the interval between $t$ and $t+h$. By uniform
continuity, the right-hand side tends to $0$ as $h\to 0$ whenever $t>0$,
and as $h\downarrow 0$ when $t=0$. Hence $R'(t)=\Lambda_t$.

In particular,
\[
    \frac{1}{t}R(t)
    =
    \frac{1}{t}\int_0^t\Lambda_s\,ds
    \longrightarrow
    \Lambda_0=\I
    \qquad (t\downarrow 0).
\]
Hence, for some sufficiently small $t_0>0$, $t_0^{-1}R(t_0)$ is sufficiently
close to $\I$, and is therefore invertible\footnote{
Intuitively, this follows from the fact that an operator sufficiently close to
the invertible operator $\I$ is again invertible. If one wants to justify this
point rigorously, one may argue as follows. Since $t^{-1}R(t)\to\I$ as $t\downarrow 0$, we can choose $t_0>0$
sufficiently small so that
\[
\|\I-t_0^{-1}R(t_0)\|_{\rm op}<1.
\]
Set $B:=\I-t_0^{-1}R(t_0)$. Then $\|B\|_{\rm op}<1$, and hence, by the Neumann series
(see Fact~\ref{ex:NeumannSeries} below), the inverse exists and is given by
\[
\sum_{n=0}^{\infty}B^n
=
(\I-B)^{-1}
=
(t_0^{-1}R(t_0))^{-1}.
\]
}. Hence $R(t_0)$ is also invertible.

Using the semigroup law, for every $t\geq 0$ we have
\[
    \Lambda_t R(t_0)
    =
    \int_0^{t_0}\Lambda_t\Lambda_s\,ds
    =
    \int_0^{t_0}\Lambda_{t+s}\,ds
    =
    \int_t^{t+t_0}\Lambda_r\,dr
    =
    R(t+t_0)-R(t).
\]
Since $R(t_0)$ is invertible, it follows that
\[
    \Lambda_t
    =
    \{R(t+t_0)-R(t)\}R(t_0)^{-1}.
\]
The right-hand side is differentiable for every $t>0$ and right-differentiable
at $t=0$. Hence the map $t\mapsto\Lambda_t$ is differentiable for every
$t>0$ and right-differentiable at $t=0$.

We now define
\[
    \mathfrak{L}
    :=
    \left.\frac{d}{dt}\Lambda_t\right|_{t=0+}.
\]
Equivalently, for each $v\in V$,
\[
    \mathfrak{L}v
    =
    \lim_{t\downarrow 0}
    \frac{\Lambda_t v-v}{t}.
\]
Thus the limit appearing in the statement exists for every $v\in V$.
Since the derivative at $t=0+$ is taken in the vector space $\LA(V)$, the
resulting map $\mathfrak{L}:V\to V$ is linear.

Next, for $t>0$, using the semigroup law and the differentiability already
established, we obtain
\[
    \frac{d}{dt}\Lambda_t
    =
    \lim_{h\downarrow 0}
    \frac{\Lambda_{t+h}-\Lambda_t}{h}
    =
    \lim_{h\downarrow 0}
    \frac{\Lambda_t\Lambda_h-\Lambda_t}{h}
    =
    \Lambda_t
    \lim_{h\downarrow 0}
    \frac{\Lambda_h-\I}{h}
    =
    \Lambda_t\mathfrak{L}.
\]
Similarly,
\[
    \frac{d}{dt}\Lambda_t
    =
    \lim_{h\downarrow 0}
    \frac{\Lambda_{h+t}-\Lambda_t}{h}
    =
    \lim_{h\downarrow 0}
    \frac{\Lambda_h\Lambda_t-\Lambda_t}{h}
    =
    \mathfrak{L}\Lambda_t.
\]
Therefore
\[
    \frac{d}{dt}\Lambda_t
    =
    \mathfrak{L}\Lambda_t
    =
    \Lambda_t\mathfrak{L}
    \qquad (t>0).
\]
At $t=0$, the same formula holds as a right derivative, because
$\Lambda_0=\I$.

Thus $\Lambda_t$ satisfies the linear differential equation
\[
    \frac{d}{dt}\Lambda_t
    =
    \mathfrak{L}\Lambda_t,
    \qquad
    \Lambda_0=\I.
\]
On the other hand, the operator exponential
\[
    \exp(t\mathfrak{L})
    :=
    \sum_{n=0}^{\infty}\frac{t^n\mathfrak{L}^n}{n!}
\]
satisfies the same differential equation with the same initial condition.
By uniqueness of solutions to linear ordinary differential equations in the
finite-dimensional space $\LA(V)$, we conclude that
\[
    \Lambda_t=\exp(t\mathfrak{L})
    \qquad (t\geq 0).
\]

Finally, the generator is unique. Indeed, if another linear map
$\mathfrak{L}'$ satisfies
\[
    \Lambda_t=\exp(t\mathfrak{L}')
    \qquad (t\geq 0),
\]
then differentiating at $t=0+$ gives
\[
    \mathfrak{L}'
    =
    \left.\frac{d}{dt}\exp(t\mathfrak{L}')\right|_{t=0+}
    =
    \left.\frac{d}{dt}\Lambda_t\right|_{t=0+}
    =
    \mathfrak{L}.
\]
This proves uniqueness.

For $v_t:=\Lambda_t v$, we also have
\[
    \frac{d}{dt}v_t
    =
    \frac{d}{dt}\Lambda_t v
    =
    \mathfrak{L}\Lambda_t v
    =
    \mathfrak{L}v_t.
\]
This completes the proof.
\end{proof}

\begin{F}[Neumann series]\label{ex:NeumannSeries}
Let $B$ be a linear map on a finite-dimensional normed vector space $V$.
If $\|B\|_{\rm op}<1$, then $\id-B$ is invertible and
\[
(\id-B)^{-1}
=
\sum_{n=0}^{\infty}B^n.
\]
\end{F}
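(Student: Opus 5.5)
The statement is the Neumann series: if $\|B\|_{\rm op}<1$ on a finite-dimensional normed space $V$, then $\id-B$ is invertible with inverse $\sum_{n\ge0}B^n$. I would first show that the series converges in the operator norm, and then verify algebraically that its sum is a two-sided inverse of $\id-B$.

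\textbf{Convergence.} Set $q:=\|B\|_{\rm op}<1$. By submultiplicativity (Fact~\ref{ex:SMul}) and induction, $\|B^n\|_{\rm op}\le q^n$; for $n=0$ this uses $\|\id\|_{\rm op}=1$. For $m>k$ the partial sums $S_k:=\sum_{n=0}^{k}B^n$ then satisfy
\[
\|S_m-S_k\|_{\rm op}\le \sum_{n=k+1}^{m}q^n\le \frac{q^{k+1}}{1-q}.
\]
Hence $(S_k)$ is Cauchy. Since $\LA(V)$ is finite-dimensional, it is complete; concretely, one can pass to matrix entries, using equivalence of norms and completeness of $\R$ or $\CA$. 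So $S:=\lim_k S_k$ exists.

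\textbf{Inverse.} The telescoping identity gives
\[
(\id-B)S_k=S_k(\id-B)=\id-B^{k+1}.
\]
Moreover $\|B^{k+1}\|_{\rm op}\le q^{k+1}\to0$. Composition with a fixed operator is continuous in the operator norm, again by submultiplicativity. Letting $k\to\infty$ therefore yields
\[
(\id-B)S=S(\id-B)=\id,
\]
so $\id-B$ is invertible and $(\id-B)^{-1}=S$.

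The only step that needs care is completeness of $\LA(V)$, i.e.\ that a Cauchy sequence of operators has a limit. In finite dimensions this is standard, since all norms are equivalent and coordinates converge, so I do not expect any real obstacle.
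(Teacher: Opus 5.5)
Your proof is correct and follows essentially the same route as the paper: bound $\|B^n\|_{\rm op}\le\|B\|_{\rm op}^n$ by submultiplicativity to get convergence, then use continuity of composition in the operator norm to show that the sum is a two-sided inverse of $\id-B$. Your version is somewhat more explicit than the paper's, which manipulates the infinite series directly; you instead work with partial sums, the telescoping identity $(\id-B)S_k=\id-B^{k+1}$, and completeness of $\LA(V)$.
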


\subsection{Proof of the Lumer--Phillips Theorem}\label{app:LP}

For completeness, we give a proof of the Lumer--Phillips theorem in the finite-dimensional setting.

\bigskip 
\noindent {\it Proof of Theorem \ref{thm:LP}.}
Suppose first that $\mathfrak{L}$ generates a contraction semigroup on $V$. Then
\[
\|e^{t\mathfrak{L}}v\|\leq \|v\|,
\qquad t\geq 0,
\]
for every $v\in V$. By the Cauchy--Schwarz property of the $L$-semi-inner product,
we have
\[
\operatorname{Re}\< v,e^{t\mathfrak{L}}v\>
\leq
|\< v,e^{t\mathfrak{L}}v\>|
\leq
\|v\|\|e^{t\mathfrak{L}}v\|
\leq
\|v\|^2.
\]
Since $\< v,v\>=\|v\|^2$, this implies
\[
\operatorname{Re}
\< v,e^{t\mathfrak{L}}v-v\>
\leq 0,
\qquad t\geq 0.
\]
By dividing by $t>0$ and using the linearity of the $L$-semi-inner product in the second argument, we obtain
\[
\operatorname{Re}
\< v,\frac{e^{t\mathfrak{L}}v-v}{t}\>
\leq 0.
\]
Note that, for fixed $v\in V$, the map $w\mapsto \< v,w\>$ is continuous, because $|\< v,w\>| \leq \|v\|\|w\|$. 
Therefore, taking the limit $t\downarrow 0$, we obtain
\[
\operatorname{Re}\< v,\mathfrak{L}v\>\leq 0.
\]
Thus $\mathfrak{L}$ is dissipative.

Conversely, suppose that $\mathfrak{L}$ is dissipative. 
For every $\lambda>0$ and every $v\in V$, using the linearity of the $L$-semi-inner product and the identity $\< v,v\>=\|v\|^2$, we obtain
\begin{align}
\operatorname{Re}\< v,(\lambda\I-\mathfrak{L})v\>
&=
\operatorname{Re}\left(\lambda\< v,v\>-\< v,\mathfrak{L}v\>\right) \\
&=
\lambda\|v\|^2
-
\operatorname{Re}\< v,\mathfrak{L}v\> \\
&\geq
\lambda\|v\|^2,
\end{align}
where the last inequality follows from the dissipativity of $\mathfrak{L}$.
By the Cauchy--Schwarz property again,
\[
\operatorname{Re}\< v,(\lambda\I-\mathfrak{L})v\>
\leq
|\< v,(\lambda\I-\mathfrak{L})v\>|
\leq
\|v\|\|(\lambda\I-\mathfrak{L})v\|.
\]
Hence, for $v\neq 0$, 
\[
\|(\lambda\I-\mathfrak{L})v\| \geq \lambda\|v\|.
\]
In particular, $\lambda\I-\mathfrak{L}$ is injective, i.e., $(\lambda\I-\mathfrak{L})v = 0$ implies $v = 0$. 
Since $V$ is finite-dimensional, $\lambda\I-\mathfrak{L}$ is invertible (see Fact \ref{InjInv}). Hence $(\lambda\I-\mathfrak{L})^{-1}$ is well-defined for every $\lambda>0$.
Let $w=(\lambda\I-\mathfrak{L})v$. Then $v=(\lambda\I-\mathfrak{L})^{-1}w$, and the above inequality gives
\[
\|w\|
\geq
\lambda\|(\lambda\I-\mathfrak{L})^{-1}w\|.
\]
Therefore,
\[
\|(\lambda\I-\mathfrak{L})^{-1}w\|
\leq
\frac{1}{\lambda}\|w\|.
\]
Taking the supremum over all $w\neq 0$, we obtain
\begin{equation}\label{ILm1}
\|(\lambda\I-\mathfrak{L})^{-1}\|_{\rm op}
\leq
\frac{1}{\lambda}.
\end{equation}
Now define
\[
\mathfrak{L}_\lambda
:=
\lambda\mathfrak{L}(\lambda\I-\mathfrak{L})^{-1},
\]
which is known as the Yosida approximation of $\mathfrak{L}$ \cite{EngelNagel2000}.

We first show that $e^{t\mathfrak{L}_\lambda}$ is contractive for every
$t\geq 0$. From
\[
\I
=
(\lambda\I-\mathfrak{L})(\lambda\I-\mathfrak{L})^{-1}
=
\lambda(\lambda\I-\mathfrak{L})^{-1}
-
\mathfrak{L}(\lambda\I-\mathfrak{L})^{-1},
\]
we obtain
\begin{equation}\label{identityL}
\mathfrak{L}(\lambda\I-\mathfrak{L})^{-1}
=
\lambda(\lambda\I-\mathfrak{L})^{-1}-\I.
\end{equation}
Therefore, 
\[
\mathfrak{L}_\lambda
=
\lambda^2(\lambda\I-\mathfrak{L})^{-1} - \lambda\I.
\]
Since $(\lambda\I-\mathfrak{L})^{-1}$ commutes with $\I$, we have
\[
e^{t\mathfrak{L}_\lambda}
= e^{t\lambda^2(\lambda\I-\mathfrak{L})^{-1} - t \lambda\I} = 
e^{-\lambda t}
e^{\lambda^2t(\lambda\I-\mathfrak{L})^{-1}}.
\]
Using the elementary estimate $\|e^A\|_{\rm op}\leq e^{\|A\|_{\rm op}}$ (see Fact \ref{expA}), and \eqref{ILm1}, we get
\begin{align}
\|e^{t\mathfrak{L}_\lambda}\|_{\rm op}
&\leq
e^{-\lambda t}
e^{\lambda^2t\|(\lambda\I-\mathfrak{L})^{-1}\|_{\rm op}} \\
&\leq
e^{-\lambda t}
e^{\lambda^2t\lambda^{-1}}
= 1.
\end{align}
Thus $e^{t\mathfrak{L}_\lambda}$ is contractive for every $t\geq 0$.

It remains to pass to the limit $\lambda\to\infty$. 
Using identity \eqref{identityL}, the submultiplicativity of the operator norm (see Fact \ref{ex:SMul}), and \eqref{ILm1}, we obtain
\[
\|\lambda(\lambda\I-\mathfrak{L})^{-1}-\I\|_{\rm op}
\leq
\|\mathfrak{L}\|_{\rm op}\|(\lambda\I-\mathfrak{L})^{-1}\|_{\rm op}
\leq
\frac{\|\mathfrak{L}\|_{\rm op}}{\lambda}.
\]
Therefore, $\lambda(\lambda\I-\mathfrak{L})^{-1} \to \I$ as $\lambda\to\infty$.
Since $\mathfrak{L}_\lambda = \mathfrak{L}\lambda(\lambda\I-\mathfrak{L})^{-1}$,
we obtain
\[
\mathfrak{L}_\lambda\to\mathfrak{L}.
\]
Finally, since $V$ is finite-dimensional, the exponential map is continuous.
Thus, for each $t\geq 0$,
\[
e^{t\mathfrak{L}_\lambda}
\to
e^{t\mathfrak{L}}.
\]
Consequently,
\[
\|e^{t\mathfrak{L}}\|_{\rm op}
=
\lim_{\lambda\to\infty}
\|e^{t\mathfrak{L}_\lambda}\|_{\rm op}
\leq 1.
\]
Hence $\mathfrak{L}$ generates a contraction semigroup.
\hfill $\square$ 

\begin{F}\label{InjInv}
Let $V$ be a finite-dimensional vector space and let $T:V\to V$ be a linear map.
If $T$ is injective, then $T$ is invertible.
\end{F}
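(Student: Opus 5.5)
The statement is Fact~\ref{InjInv}: an injective linear map $T:V\to V$ on a finite-dimensional vector space is invertible. I would prove it with the rank--nullity theorem, or, to stay self-contained, by a direct basis argument.

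Fix a basis $\{v_1,\ldots,v_n\}$ of $V$, where $n=\dim V$. First I show that the images $Tv_1,\ldots,Tv_n$ are linearly independent. Suppose $\sum_k c_k Tv_k=0$. Linearity gives $T\bigl(\sum_k c_k v_k\bigr)=0$, and injectivity then gives $\sum_k c_k v_k=0$. Since the $v_k$ form a basis, all $c_k=0$.

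Next, $n$ linearly independent vectors in an $n$-dimensional space form a basis. This is the one genuinely nontrivial ingredient, the Steinitz exchange lemma, or equivalently the fact that all bases have the same cardinality; I will take it as standard linear algebra. So $\{Tv_k\}$ spans $V$, and $T$ is surjective.

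Finally, a bijective linear map has an inverse map $T^{-1}$, and that inverse is automatically linear. To check this, apply $T$ to $\alpha T^{-1}x+\beta T^{-1}y$: by linearity of $T$ the result is $\alpha x+\beta y$. Injectivity of $T$ then forces $T^{-1}(\alpha x+\beta y)=\alpha T^{-1}x+\beta T^{-1}y$.

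The only step needing care is the dimension argument in the second paragraph. Everything else is routine verification. The argument is identical for scalar field $\R$ or $\CA$, which covers the real space $\LA_{\rm sa}(\HA)$ used in the proof of the Lumer--Phillips theorem.
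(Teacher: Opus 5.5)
Your proof is correct, but it takes a different route from the paper's. The paper fixes a basis and passes to the matrix $A$ of $T$. Injectivity means $Ax=0$ has only the trivial solution, so the columns of $A$ are linearly independent. The paper then appeals to determinant theory: independent columns give $\det A\neq 0$, hence $A$, and so $T$, is invertible. You instead stay coordinate-free: injectivity sends a basis to $n$ linearly independent vectors, the exchange lemma makes these a basis, so $T$ is surjective, and you then check by hand that the set-theoretic inverse is linear. The paper's version is shorter if determinants are taken for granted, and it yields an explicit two-sided matrix inverse (e.g.\ via the adjugate). However, the ``elementary property of determinants'' it cites (independent columns imply $\det A\neq 0$) itself rests on the same dimension counting, or on row reduction, so the real content is hidden there. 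Your argument isolates that single nontrivial ingredient (invariance of dimension) explicitly and avoids determinants altogether. It also makes the linearity of $T^{-1}$ explicit, where the paper leaves it implicit in identifying $T$ with $A$.
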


\begin{F}\label{expA}
Let $A$ be a linear map on a finite-dimensional normed vector space $V$.
Then
\[
\|e^A\|_{\rm op} \leq e^{\|A\|_{\rm op}}.
\]
\end{F}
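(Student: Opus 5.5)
The statement is Fact~\ref{expA}: for a linear map $A$ on a finite-dimensional normed space $V$, $\|e^A\|_{\rm op}\le e^{\|A\|_{\rm op}}$. The plan is to work directly with the defining power series $e^A=\sum_{n=0}^\infty A^n/n!$ and combine three ingredients: the triangle inequality, submultiplicativity of the operator norm (Fact~\ref{ex:SMul}), and continuity of the norm under limits.

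First, I will bound the partial sums. For each $N$, set $S_N:=\sum_{n=0}^N A^n/n!$. The triangle inequality and homogeneity of the operator norm give
\[
\|S_N\|_{\rm op}\le\sum_{n=0}^N \frac{\|A^n\|_{\rm op}}{n!}.
\]
By induction using Fact~\ref{ex:SMul}, $\|A^n\|_{\rm op}\le\|A\|_{\rm op}^n$ for $n\ge1$. For $n=0$ we have $A^0=\I$ and $\|\I\|_{\rm op}=1=\|A\|_{\rm op}^0$, also by Fact~\ref{ex:SMul}. Hence
\[
\|S_N\|_{\rm op}\le\sum_{n=0}^N \frac{\|A\|_{\rm op}^n}{n!}\le e^{\|A\|_{\rm op}}.
\]

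Second, I will pass to the limit. The same estimate applied to tails, $\|S_M-S_N\|_{\rm op}\le\sum_{n=N+1}^M \|A\|_{\rm op}^n/n!$, shows that $(S_N)$ is Cauchy in the finite-dimensional, hence complete, space $\LA(V)$. So $S_N\to e^A$, which also confirms that the exponential is well defined. Since $X\mapsto\|X\|_{\rm op}$ is continuous (by the reverse triangle inequality), we get $\|e^A\|_{\rm op}=\lim_N\|S_N\|_{\rm op}\le e^{\|A\|_{\rm op}}$.

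The only step needing any care is this interchange of the norm with the infinite sum. I plan to handle it through the partial-sum bound and continuity of the norm, as above, rather than by applying the triangle inequality to an infinite series directly. Everything else is routine.
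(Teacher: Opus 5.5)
Your proof is correct and follows the same approach as the paper: the triangle inequality on the exponential series, submultiplicativity to get $\|A^n\|_{\rm op}\le\|A\|_{\rm op}^n$, and summation to $e^{\|A\|_{\rm op}}$. The one difference is that you bound the partial sums and then pass to the limit using continuity of the norm, whereas the paper applies the triangle inequality directly to the infinite series; your version simply makes that limit step explicit.
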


\section{Proofs of Facts}\label{app:Facts}

For the proof of Fact \ref{F:LExt}, we use the following elementary but useful lemma.

\begin{Lem}
Let $V$ be a real vector space, and let $S$ be a convex subset of $V$.
If $\Lambda:S\to S$ is an affine map, that is,
\[
\Lambda(ps_1+(1-p)s_2)
=
p\Lambda(s_1)+(1-p)\Lambda(s_2), \ (\forall s_1,s_2\in S, \forall p\in[0,1])
\]
then $\Lambda$ has a unique affine extension to the affine hull of $S$:
\[
{\rm aff}(S)
:=
\left\{
\sum_k \lambda_k s_k
\mid
s_k\in S,\ \sum_k\lambda_k=1
\right\}.
\]
\end{Lem}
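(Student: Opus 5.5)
The plan is to define the extension explicitly and then check that it is well defined, affine, and unique. For $x\in{\rm aff}(S)$ written as $x=\sum_k \lambda_k s_k$ with $s_k\in S$ and $\sum_k\lambda_k=1$, I would set
\[
\tilde\Lambda(x):=\sum_k \lambda_k \Lambda(s_k).
\]
Uniqueness is immediate once this is well defined. Any affine extension must preserve affine combinations, and every point of ${\rm aff}(S)$ is such a combination of points of $S$, on which the extension is prescribed. Likewise, affineness of $\tilde\Lambda$ on ${\rm aff}(S)$ follows by concatenating representations: an affine combination of affine combinations is again an affine combination of points of $S$.

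The main obstacle is well-definedness. Suppose $\sum_k \lambda_k s_k=\sum_l \mu_l t_l$ with both coefficient families summing to $1$. I would separate positive and negative coefficients and move the negative ones to the other side, which gives
\[
\sum_{k:\lambda_k>0}\lambda_k s_k+\sum_{l:\mu_l<0}|\mu_l| t_l
=
\sum_{l:\mu_l>0}\mu_l t_l+\sum_{k:\lambda_k<0}|\lambda_k| s_k .
\]
Because $\sum\lambda_k=\sum\mu_l$, the total weights $c$ on the two sides coincide, and $c>0$, since the positive coefficients on each side sum to at least $1$. Dividing by $c$ expresses the same point as two convex combinations of elements of $S$.

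The key sublemma is therefore that $\Lambda$ preserves finite convex combinations. I would prove this by induction from the two-point affine property, writing $\sum_{k=1}^n p_k s_k=p_n s_n+(1-p_n)\sum_{k<n}\frac{p_k}{1-p_n}s_k$ and handling the case $p_n=1$ separately; convexity of $S$ keeps the intermediate points in $S$. Applying this sublemma to both sides, multiplying back by $c$, and rearranging (all in the ambient vector space $V$) yields $\sum_k\lambda_k\Lambda(s_k)=\sum_l\mu_l\Lambda(t_l)$, which is exactly well-definedness.

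Finally, $\tilde\Lambda$ agrees with $\Lambda$ on $S$ by taking the trivial representation $s=1\cdot s$. Its values lie in ${\rm aff}(S)$ because $\Lambda(S)\subset S$, so $\tilde\Lambda$ maps ${\rm aff}(S)$ into itself.
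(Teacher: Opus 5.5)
Your proof is correct and follows the paper's argument essentially step by step: the same formula $\tilde\Lambda(x)=\sum_k\lambda_k\Lambda(s_k)$, the same splitting into positive and negative coefficients that turns well-definedness into an equality of two convex combinations with common total weight $c>0$, and the same arguments for affineness and uniqueness. Your induction showing that $\Lambda$ preserves finite convex combinations makes explicit a step the paper uses silently; the one remaining implicit step, which the paper shares, is that a map on ${\rm aff}(S)$ that is affine only for $p\in[0,1]$ preserves all affine combinations, and this follows for $z=px+(1-p)y$ with $p>1$ by writing $x=\frac{1}{p}z+\bigl(1-\frac{1}{p}\bigr)y$ (and symmetrically for $p<0$).
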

\begin{proof}
For $x\in{\rm aff}(S)$, choose an affine representation
\[
x=\sum_k\lambda_k s_k,
\qquad
s_k\in S,
\qquad
\sum_k\lambda_k=1,
\]
and define
\[
\widetilde{\Lambda}(x)
:=
\sum_k\lambda_k\Lambda(s_k).
\]
We show that this definition is independent of the chosen affine representation.

Suppose that
\[
\sum_{i=1}^m\lambda_i s_i
=
\sum_{j=1}^n\mu_j t_j,
\qquad
s_i,t_j\in S,
\qquad
\sum_{i=1}^m\lambda_i
=
\sum_{j=1}^n\mu_j
=
1.
\]
It is enough to prove that
\[
\sum_{i=1}^m\lambda_i\Lambda(s_i)
=
\sum_{j=1}^n\mu_j\Lambda(t_j).
\]

Set
\[
\lambda_i^+:=\max\{\lambda_i,0\},
\qquad
\lambda_i^-:=\max\{-\lambda_i,0\},
\]
and define $\mu_j^+$ and $\mu_j^-$ similarly.
Then $\lambda_i=\lambda_i^+-\lambda_i^-$ and $\mu_j=\mu_j^+-\mu_j^-$ and $\lambda^\pm_i, \mu^\pm_j \ge 0$. 
From
\[
\sum_{i=1}^m\lambda_i s_i
=
\sum_{j=1}^n\mu_j t_j
\]
we obtain
\[
\sum_{i=1}^m\lambda_i^+ s_i
+
\sum_{j=1}^n\mu_j^- t_j
=
\sum_{i=1}^m\lambda_i^- s_i
+
\sum_{j=1}^n\mu_j^+ t_j.
\]
Moreover, since
\[
\sum_{i=1}^m\lambda_i
=
\sum_{j=1}^n\mu_j
=
1,
\]
the sums of the coefficients on the two sides are equal:
\[
c:= \sum_{i=1}^m\lambda_i^+
+
\sum_{j=1}^n\mu_j^-
=
\sum_{i=1}^m\lambda_i^-
+
\sum_{j=1}^n\mu_j^+.
\]
If $c=0$, then all coefficients above are zero, which is impossible because
$\sum_i\lambda_i=1$.
Thus $c>0$.
Dividing by $c$, we get an equality between two convex combinations of points of $S$:
\[
\sum_{i=1}^m\frac{\lambda_i^+}{c}s_i
+
\sum_{j=1}^n\frac{\mu_j^-}{c}t_j
=
\sum_{i=1}^m\frac{\lambda_i^-}{c}s_i
+
\sum_{j=1}^n\frac{\mu_j^+}{c}t_j.
\]
By the affinity of $\Lambda$ on $S$, this implies
\[
\sum_{i=1}^m\frac{\lambda_i^+}{c}\Lambda(s_i)
+
\sum_{j=1}^n\frac{\mu_j^-}{c}\Lambda(t_j)
=
\sum_{i=1}^m\frac{\lambda_i^-}{c}\Lambda(s_i)
+
\sum_{j=1}^n\frac{\mu_j^+}{c}\Lambda(t_j).
\]
Multiplying by $c$ and rearranging terms, we obtain
\[
\sum_{i=1}^m\lambda_i\Lambda(s_i)
=
\sum_{j=1}^n\mu_j\Lambda(t_j).
\]
Hence $\widetilde{\Lambda}$ is well defined.

It is clear that $\widetilde{\Lambda}$ extends $\Lambda$.
Moreover, if
\[
x=\sum_i\lambda_i s_i,
\qquad
y=\sum_j\mu_j t_j,
\qquad
\sum_i\lambda_i=\sum_j\mu_j=1,
\]
then, for $p\in[0,1]$,
\[
px+(1-p)y
=
\sum_i p\lambda_i s_i+\sum_j(1-p)\mu_j t_j,
\]
where the coefficients on the right-hand side sum to $1$.
Therefore
\[
\widetilde{\Lambda}(px+(1-p)y)
=
p\widetilde{\Lambda}(x)+(1-p)\widetilde{\Lambda}(y).
\]
Thus $\widetilde{\Lambda}$ is affine on ${\rm aff}(S)$.

Finally, if $\Gamma:{\rm aff}(S)\to{\rm aff}(S)$ is another affine extension of $\Lambda$, then for every
\[
x=\sum_k\lambda_k s_k,
\qquad
\sum_k\lambda_k=1,
\]
we have
\[
\Gamma(x)
=
\sum_k\lambda_k\Gamma(s_k)
=
\sum_k\lambda_k\Lambda(s_k)
=
\widetilde{\Lambda}(x).
\]
Hence $\Gamma=\widetilde{\Lambda}$.
\end{proof}

\noindent {\bf Fact \ref{F:LExt}.} {\it 
An affine map $\Lambda:\SA(\HA)\to\SA(\HA)$ uniquely extends to a trace-preserving positive real-linear map on $\LA_{\rm sa}(\HA)$. }

\bigskip 

\noindent [{\it Proof of Fact \ref{F:LExt}} ] Since the affine hull of $\SA(\HA)$ is
\[
{\rm aff}(\SA(\HA))
=
\{A\in\LA_{\rm sa}(\HA):\Tr A=1\},
\]
the preceding lemma shows that the affine map $\Lambda$ has a unique affine extension $\tilde{\Lambda}$ to this affine hull.

Fix any $\rho_0\in\SA(\HA)$.
For each $X\in\LA_{\rm sa}(\HA)$, write
\[
X=(\Tr X)\rho_0+Y,
\qquad
\Tr Y=0.
\]
Since $\rho_0+Y\in{\rm aff}(\SA(\HA))$, we define
\[
\widehat{\Lambda}(X)
:=
(\Tr X)\tilde{\Lambda}(\rho_0)
+
\tilde{\Lambda}(\rho_0+Y)-\tilde{\Lambda}(\rho_0).
\]
This is well defined, because the decomposition
$X=(\Tr X)\rho_0+Y$ with $\Tr Y=0$ is unique.

We first check real-linearity.
Let $X_i=t_i\rho_0+Y_i$ with $t_i=\Tr X_i,\ \Tr Y_i=0, \ i=1,2.$
Then
\[
X_1+X_2
=
(t_1+t_2)\rho_0+(Y_1+Y_2),
\qquad
\Tr(Y_1+Y_2)=0.
\]
By definition,
\[
\widehat{\Lambda}(X_1+X_2)
=
(t_1+t_2)\Lambda(\rho_0)
+
\tilde{\Lambda}(\rho_0+Y_1+Y_2)-\Lambda(\rho_0).
\]
Since
\[
\rho_0+Y_1+Y_2
=
1\times (\rho_0+Y_1)+ 1 \times (\rho_0+Y_2) + (-1) \times \rho_0
\]
is an affine combination of points in ${\rm aff}(\SA(\HA))$, the affinity of $\tilde{\Lambda}$ gives
\[
\tilde{\Lambda}(\rho_0+Y_1+Y_2)
=
\tilde{\Lambda}(\rho_0+Y_1)
+
\tilde{\Lambda}(\rho_0+Y_2)
-
\tilde{\Lambda}(\rho_0).
\]
Since $\tilde{\Lambda}(\rho_0)=\Lambda(\rho_0)$, we obtain
\[
\begin{aligned}
\widehat{\Lambda}(X_1+X_2)
&=
(t_1+t_2)\Lambda(\rho_0)
+
\tilde{\Lambda}(\rho_0+Y_1)
+
\tilde{\Lambda}(\rho_0+Y_2)
-
2\Lambda(\rho_0)\\
&=
\bigl(t_1\Lambda(\rho_0)+\tilde{\Lambda}(\rho_0+Y_1)-\Lambda(\rho_0)\bigr)
+
\bigl(t_2\Lambda(\rho_0)+\tilde{\Lambda}(\rho_0+Y_2)-\Lambda(\rho_0)\bigr)\\
&=
\widehat{\Lambda}(X_1)+\widehat{\Lambda}(X_2).
\end{aligned}
\]

Next, let $a\in\mathbb{R}$ and write $X=t\rho_0+Y$ where $t=\Tr X, \ \Tr Y=0.$
Then
\[
aX=(at)\rho_0+aY,
\qquad
\Tr(aY)=0.
\]
By definition,
\[
\widehat{\Lambda}(aX)
=
(at)\Lambda(\rho_0)
+
\tilde{\Lambda}(\rho_0+aY)-\Lambda(\rho_0).
\]
Since
\[
\rho_0+aY
=
a(\rho_0+Y)+(1-a)\rho_0
\]
is an affine combination of points in ${\rm aff}(\SA(\HA))$, the affinity of $\tilde{\Lambda}$ gives
\[
\tilde{\Lambda}(\rho_0+aY)
=
a\tilde{\Lambda}(\rho_0+Y)+(1-a)\tilde{\Lambda}(\rho_0).
\]
Using again $\tilde{\Lambda}(\rho_0)=\Lambda(\rho_0)$, we obtain
\[
\begin{aligned}
\widehat{\Lambda}(aX)
&=
at\Lambda(\rho_0)
+
a\tilde{\Lambda}(\rho_0+Y)
+
(1-a)\Lambda(\rho_0)
-
\Lambda(\rho_0)\\
&=
a\bigl(t\Lambda(\rho_0)+\tilde{\Lambda}(\rho_0+Y)-\Lambda(\rho_0)\bigr) =
a\widehat{\Lambda}(X).
\end{aligned}
\]
Thus $\widehat{\Lambda}$ is real-linear.

We also have trace preservation:
For $X=t\rho_0+Y$ with $t=\Tr X$ and $\Tr Y=0$, we have
\[
\Tr\widehat{\Lambda}(X)
=
t\Tr\Lambda(\rho_0)
+
\Tr\tilde{\Lambda}(\rho_0+Y)-\Tr\Lambda(\rho_0).
\]
Since $\tilde{\Lambda}$ maps the affine hull of the state space into itself, both
$\Lambda(\rho_0)$ and $\tilde{\Lambda}(\rho_0+Y)$ have trace $1$.
Therefore
\[
\Tr\widehat{\Lambda}(X)=t=\Tr X.
\]

Next, we show that $\widehat{\Lambda}$ is positive.
Let $X\ge 0$.
If $X=0$, then $\widehat{\Lambda}(X)=0$.
If $X\neq 0$, set $t=\Tr X>0$ and $\rho=X/t$.
Then $\rho\in\SA(\HA)$, and hence
\[
\widehat{\Lambda}(X)
=
t\widehat{\Lambda}(\rho)
=
t\Lambda(\rho)
\ge 0,
\]
because $\Lambda(\rho)\in\SA(\HA)$.

It remains to show uniqueness.
Let $\Phi:\LA_{\rm sa}(\HA)\to\LA_{\rm sa}(\HA)$ be any trace-preserving positive real-linear map extending $\Lambda$.
Then the restriction of $\Phi$ to ${\rm aff}\SA(\HA)$ is an affine extension of $\Lambda$.
By the uniqueness of the affine extension to ${\rm aff}\SA(\HA)$, we have
\[
\Phi(A)=\tilde{\Lambda}(A)
\]
for all $A\in{\rm aff}\SA(\HA)$.

Now let $X=t\rho_0+Y$ with $t=\Tr X$ and $\Tr Y=0$.
By real-linearity,
\[
\Phi(X)
=
t\Phi(\rho_0)+\Phi(Y).
\]
Moreover,
\[
\Phi(Y)
=
\Phi(\rho_0+Y)-\Phi(\rho_0)
=
\tilde{\Lambda}(\rho_0+Y)-\Lambda(\rho_0).
\]
Therefore
\[
\Phi(X)
=
t\Lambda(\rho_0)
+
\tilde{\Lambda}(\rho_0+Y)-\Lambda(\rho_0)
=
\widehat{\Lambda}(X).
\]
Thus the extension is unique. \hfill $\square$
\bigskip

\noindent {\bf Fact \ref{ex:TrAB}.}
{\it If $A,B$ are self-adjoint operators, then $\Tr AB\in\R$.
If $A,B$ are positive operators, then $\Tr AB\geq 0$. }

\bigskip 

\begin{proof}
Let $B=\sum_k b_k |b_k\rangle\langle b_k|$ be an eigenvalue decomposition of $B$.
Then
\[
\Tr AB
=
\sum_k b_k \Tr A|b_k\rangle\langle b_k|
=
\sum_k b_k \langle b_k|A|b_k\rangle.
\]
If $A$ and $B$ are self-adjoint, then $b_k\in\mathbb{R}$ and
\[
\langle b_k|A|b_k\rangle\in\mathbb{R}
\]
for every $k$.
Hence $\Tr AB\in\mathbb{R}$.

If, moreover, $A$ and $B$ are positive, then $b_k\ge 0$ and
\[
\langle b_k|A|b_k\rangle\ge 0
\]
for every $k$.
Therefore
\[
\Tr AB
=
\sum_k b_k \langle b_k|A|b_k\rangle
\ge 0.
\]
This proves the claim. 
\end{proof}

\bigskip 

\noindent {\bf Fact \ref{exe:TrnTrn}}. {\it 
For $A,B \in \LA(\HA)$, $|A\otimes B|=|A|\otimes |B|$.
In particular, taking the trace gives $\|A\otimes B\|_1=\|A\|_1\|B\|_1$.
}

\begin{proof}
Recalling $|X|:=(X^\dagger X)^{1/2}$, 
\[
|A\otimes B|^2
=
(A\otimes B)^\dagger(A\otimes B)
=
(A^\dagger\otimes B^\dagger)(A\otimes B)
=
A^\dagger A\otimes B^\dagger B = |A|^2\otimes |B|^2 = (|A|\otimes |B|)^2.
\]
Moreover, $|A|\otimes |B|$ is positive, because $|A|$ and $|B|$ are positive.
By the uniqueness of the positive square root, we obtain
\[
|A\otimes B|
=
|A|\otimes |B|.
\]
\end{proof}
\bigskip

\noindent {\bf Fact \ref{ex:TrAB=0}}. {\it 
If $\Tr(BA)=0$ for all $A \in \LA(\HA)$, then $B=0$.
}

\begin{proof}
For arbitrary $\psi,\phi\in\HA$, take $A=|\phi\rangle\langle \psi|$.
Then
\[
0=\Tr(BA)
=
\Tr B|\phi\rangle\langle \psi|
=
\langle \psi|B|\phi\rangle.
\]
Hence all matrix elements of $B$ vanish.
Therefore $B=0$.
\end{proof}

\bigskip
\noindent {\bf Fact \ref{fact:PartTen}}. {\it 
Let $\HA_1,\HA_2$ be Hilbert spaces of dimensions $d_1$ and $d_2$, respectively.
Let $\{\phi_i\}_{i=1}^{d_2}$ be an orthonormal basis of $\HA_2$.
If $\psi_i,\psi'_i\in\HA_1$, $i=1,\ldots,d_2$, satisfy
\[
\sum_{i=1}^{d_2}\psi_i\otimes\phi_i
=
\sum_{i=1}^{d_2}\psi'_i\otimes\phi_i,
\]
then $\psi_i=\psi'_i$ for all $i=1,\ldots,d_2$.
}

\begin{proof}
    Subtracting the two sides, we have
\[
\sum_{j=1}^{d_2}(\psi_j-\psi'_j)\otimes\phi_j=0.
\]
Applying $I_{\HA_1}\otimes\langle\phi_i|$ to both sides, we obtain
\[
0
=
\sum_{j=1}^{d_2}(\psi_j-\psi'_j)\langle\phi_i|\phi_j\rangle
=
\psi_i-\psi'_i.
\]
Hence $\psi_i=\psi'_i$ for all $i=1,\ldots,d_2$.
\end{proof}
\bigskip

\bigskip

\noindent {\bf Fact \ref{ex:NormEigChar}}. {\it 
For a self-adjoint operator $A \in \LA_{\rm sa}(\HA)$, 
\begin{equation}
\Tr A = \sum_{i=1}^d a_i, \ 
\|A\|_1 = \sum_{i=1}^d |a_i|, \ 
\|A\|_{\rm HS} = \sqrt{\sum_{i=1}^d a_i^2}, \ \|A\|_\infty = \max_{i=1,\ldots, d} |a_i|
\end{equation}
where $a_i \ (i=1,\ldots,d)$ are eigenvalues of $A$. 
}

\begin{proof}
Let $A=\sum_{i=1}^d a_i |\phi_i\rangle\langle\phi_i|$ be an eigenvalue decomposition of $A$. 
First, by the definition of the trace,
\[
\Tr A
=
\sum_{i=1}^d \langle \phi_i|A|\phi_i\rangle
=
\sum_{i=1}^d a_i.
\]

Next, since
\[
|A|
=
(A^\dagger A)^{1/2}
=
\sum_{i=1}^d |a_i| |\phi_i\rangle\langle\phi_i|,
\]
we have
\[
\|A\|_1
=
\Tr |A|
=
\sum_{i=1}^d |a_i|.
\]

Moreover,
\[
A^\dagger A
=
A^2
=
\sum_{i=1}^d a_i^2 |\phi_i\rangle\langle\phi_i|.
\]
Therefore
\[
\|A\|_{\rm HS}
=
\sqrt{\Tr A^\dagger A}
=
\sqrt{\sum_{i=1}^d a_i^2}.
\]

Finally, we prove the formula for the operator norm from the definition
\[
\|A\|_\infty:=\sup_{\|\psi\|=1}\|A\psi\|.
\]
Let $\psi\in\HA$ with $\|\psi\|=1$.
Write
\[
\psi=\sum_{i=1}^d c_i\phi_i \quad (\sum_{i=1}^d |c_i|^2=1).
\]
Then 
\[
A\psi=\sum_{i=1}^d a_i c_i\phi_i,
\]
and hence 
\[
\|A\psi\|^2
=
\sum_{i=1}^d |a_i|^2 |c_i|^2
\leq
\left(\max_{i=1,\ldots,d}|a_i|^2\right)
\sum_{i=1}^d |c_i|^2
=
\max_{i=1,\ldots,d}|a_i|^2.
\]
Therefore
\[
\|A\psi\|
\leq
\max_{i=1,\ldots,d}|a_i|.
\]
Taking the supremum over all unit vectors $\psi$, we get
\[
\|A\|_\infty
\leq
\max_{i=1,\ldots,d}|a_i|.
\]

Conversely, choose $i_0$ such that
\[
|a_{i_0}|=\max_{i=1,\ldots,d}|a_i|.
\]
Since $\|\phi_{i_0}\|=1$, we have
\[
\|A\|_\infty
\geq
\|A\phi_{i_0}\|
=
\|a_{i_0}\phi_{i_0}\|
=
|a_{i_0}|
=
\max_{i=1,\ldots,d}|a_i|.
\]
Thus
\[
\|A\|_\infty
=
\max_{i=1,\ldots,d}|a_i|.
\] 
\end{proof}

\bigskip

\noindent {\bf Fact \ref{XposYXY}}. {\it 
If $X$ is a positive operator, then $YXY^\dagger$ is also positive for any operator $Y$.
}
\begin{proof}
For any vector $\psi\in\HA$, we have
\[
\langle \psi|YXY^\dagger|\psi\rangle
=
\langle Y^\dagger\psi|X|Y^\dagger\psi\rangle.
\]
Since $X$ is positive, the right-hand side is nonnegative.
Hence $YXY^\dagger$ is positive.
\end{proof}
\bigskip
\noindent {\bf Fact \ref{f:TrAB}}. {\it 
If $\Tr XA = \Tr XB$ for all $X \in \LA(\HA)$, then $A = B$. 
}
\begin{proof}
By linearity of the trace and its cyclicity,
\[
0=\Tr X(A-B)=\Tr (A-B)X
\]
for all $X\in\LA(\HA)$.
Hence, by Fact \ref{ex:TrAB=0}, we have $A-B = 0$.
\end{proof}

\bigskip
\noindent {\bf Fact \ref{ex:SMul}}. {\it 
Let $V$ be a normed vector space.
For linear maps $\Lambda$ and $\Gamma$ on $V$, the operator norm is submultiplicative:
\[
\|\Lambda\Gamma\|_{\rm op}
\leq
\|\Lambda\|_{\rm op}\|\Gamma\|_{\rm op}.
\]
Moreover, if $\I$ denotes the identity map on $V$, then
\[
\|\I\|_{\rm op}=1.
\]
}
\begin{proof}
Recall that
\[
\|\Lambda\|_{\rm op}
:=
\sup_{\|x\|=1}\|\Lambda x\|.
\]
For any $x\in V$ with $\|x\|=1$, we have
\[
\|\Lambda\Gamma x\|
\leq
\|\Lambda\|_{\rm op}\|\Gamma x\|.
\]
Moreover,
\[
\|\Gamma x\|
\leq
\|\Gamma\|_{\rm op}\|x\|
=
\|\Gamma\|_{\rm op}.
\]
Therefore
\[
\|\Lambda\Gamma x\|
\leq
\|\Lambda\|_{\rm op}\|\Gamma\|_{\rm op}.
\]
Taking the supremum over all $x\in V$ with $\|x\|=1$, we obtain
\[
\|\Lambda\Gamma\|_{\rm op}
\leq
\|\Lambda\|_{\rm op}\|\Gamma\|_{\rm op}.
\]

Next, for any $x\in V$ with $\|x\|=1$, we have
\[
\|\I x\|=\|x\|=1.
\]
Hence
\[
\|\I\|_{\rm op}
=
\sup_{\|x\|=1}\|\I x\|
=
1.
\]
\end{proof}
\bigskip
\noindent {\bf Fact \ref{ex:NeumannSeries}}. {\it 
Let $B$ be a linear map on a finite-dimensional normed vector space $V$.
If $\|B\|_{\rm op}<1$, then $\id-B$ is invertible and
\[
(\id-B)^{-1}
=
\sum_{n=0}^{\infty}B^n.
\]
}

\begin{proof}
Since $\|B\|_{\rm op}<1$, the series
\[
S: = \sum_{n=0}^{\infty}B^n
\]
converges absolutely with respect to the operator norm.
Indeed, using the submultiplicativity of the operator norm, $\|AB\|_{\rm op}\leq \|A\|_{\rm op}\|B\|_{\rm op}$, we have
\[
\sum_{n=0}^{\infty}\|B^n\|_{\rm op}
\leq
\sum_{n=0}^{\infty}\|B\|_{\rm op}^n
=
\frac{1}{1-\|B\|_{\rm op}}
<
\infty.
\]
By continuity of composition with respect to the operator norm,
\[
(\id-B)S
=
\sum_{n=0}^{\infty}B^n-\sum_{n=0}^{\infty}B^{n+1}
=
\id.
\]
Similarly,
\[
S(\id-B)
=
\sum_{n=0}^{\infty}B^n-\sum_{n=0}^{\infty}B^{n+1}
=
\id.
\]
Therefore, $\id-B$ is invertible and
\[
(\id-B)^{-1}
=
\sum_{n=0}^{\infty}B^n.
\]
\end{proof}

\bigskip
\noindent {\bf Fact \ref{InjInv}}. {\it 
Let $V$ be a finite-dimensional vector space and let $T:V\to V$ be a linear map.
If $T$ is injective, then $T$ is invertible.
}

\begin{proof}Choose a basis of $V$ and let $A$ be the matrix representation of $T$ with respect to this basis.
Since $T$ is injective, the equation $Ax=0$ has only the trivial solution.
Hence the columns of $A$ are linearly independent.
By the elementary property of determinants, this implies $\det A\neq 0$, and thus $A$ is invertible. Hence $T$ is invertible.
\end{proof}

\bigskip
\noindent {\bf Fact \ref{expA}}. {\it 
Let $A$ be a linear map on a finite-dimensional normed vector space $V$.
Then
\[
\|e^A\|_{\rm op} \leq e^{\|A\|_{\rm op}}.
\]
}
\begin{proof}
By definition,
\[
e^A
=
\sum_{n=0}^{\infty}\frac{A^n}{n!}.
\]
Using the triangle inequality and the submultiplicativity of the operator norm, we obtain
\[
\|e^A\|_{\rm op}
=
\left\|
\sum_{n=0}^{\infty}\frac{A^n}{n!}
\right\|_{\rm op}
\leq
\sum_{n=0}^{\infty}\frac{\|A^n\|_{\rm op}}{n!}
\leq
\sum_{n=0}^{\infty}\frac{\|A\|_{\rm op}^n}{n!}
=
e^{\|A\|_{\rm op}}.
\]
This proves the assertion.
\end{proof}
\bibliographystyle{unsrt}
\bibliography{ref}

@article{GoriniKossakowskiSudarshan1976,
  author  = {Gorini, Vittorio and Kossakowski, Andrzej and Sudarshan, E. C. G.},
  title   = {Completely positive dynamical semigroups of {$N$}-level systems},
  journal = {Journal of Mathematical Physics},
  volume  = {17},
  number  = {5},
  pages   = {821--825},
  year    = {1976},
  doi     = {10.1063/1.522979}
}

@article{Lindblad1976,
  author  = {Lindblad, G{\"o}ran},
  title   = {On the generators of quantum dynamical semigroups},
  journal = {Communications in Mathematical Physics},
  volume  = {48},
  number  = {2},
  pages   = {119--130},
  year    = {1976},
  doi     = {10.1007/BF01608499}
}

@article{Kossakowski1972a,
  author  = {Kossakowski, Andrzej},
  title   = {On quantum statistical mechanics of non-Hamiltonian systems},
  journal = {Reports on Mathematical Physics},
  volume  = {3},
  number  = {4},
  pages   = {247--274},
  year    = {1972},
  doi     = {10.1016/0034-4877(72)90010-9}
}

@article{Kossakowski1972b,
  author  = {Kossakowski, Andrzej},
  title   = {On necessary and sufficient conditions for a generator of a quantum dynamical semi-group},
  journal = {Bulletin de l'Acad{\'e}mie Polonaise des Sciences, S{\'e}rie des Sciences Math{\'e}matiques, Astronomiques et Physiques},
  volume  = {20},
  pages   = {1021--1025},
  year    = {1972}
}

@article{Kossakowski1973,
  author  = {Kossakowski, Andrzej},
  title   = {On the general form of the generator of a dynamical semi-group for the spin $1/2$ system},
  journal = {Bulletin de l'Acad{\'e}mie Polonaise des Sciences, S{\'e}rie des Sciences Math{\'e}matiques, Astronomiques et Physiques},
  volume  = {21},
  pages   = {649--653},
  year    = {1973}
}

@book{RivasHuelga2012,
  author    = {Rivas, {\'A}ngel and Huelga, Susana F.},
  title     = {Open Quantum Systems: An Introduction},
  series    = {SpringerBriefs in Physics},
  publisher = {Springer},
  address   = {Berlin, Heidelberg},
  year      = {2012},
  doi       = {10.1007/978-3-642-23354-8},
  isbn      = {978-3-642-23353-1}
}

@book{EngelNagel2000,
  author    = {Engel, Klaus-Jochen and Nagel, Rainer},
  title     = {One-Parameter Semigroups for Linear Evolution Equations},
  series    = {Graduate Texts in Mathematics},
  volume    = {194},
  publisher = {Springer},
  address   = {New York},
  year      = {2000},
  isbn      = {978-0-387-98463-6}
}

@book{NC,
  author    = {Nielsen, Michael A. and Chuang, Isaac L.},
  title     = {Quantum Computation and Quantum Information},
  edition   = {10th Anniversary Edition},
  publisher = {Cambridge University Press},
  address   = {Cambridge},
  year      = {2010}
}

@book{OurTextbook,
  author    = {Hayashi, Masahito and Ishizaka, Satoshi and Kawachi, Akinori and Kimura, Gen and Ogawa, Tomohiro},
  title     = {Introduction to Quantum Information Science},
  series    = {Graduate Texts in Physics},
  publisher = {Springer},
  address   = {Berlin, Heidelberg},
  year      = {2015},
  doi       = {10.1007/978-3-662-43502-1},
  isbn      = {978-3-662-43502-1}
}

@book{Wilde,
  author    = {Wilde, Mark M.},
  title     = {Quantum Information Theory},
  publisher = {Cambridge University Press},
  address   = {Cambridge},
  edition   = {Second},
  year      = {2017}
}

@book{Watrous,
  author    = {Watrous, John},
  title     = {The Theory of Quantum Information},
  publisher = {Cambridge University Press},
  address   = {Cambridge},
  year      = {2018}
}

@book{HeinosaariZiman,
  author    = {Heinosaari, Teiko and Ziman, M{\'a}rio},
  title     = {The Mathematical Language of Quantum Theory: From Uncertainty to Entanglement},
  publisher = {Cambridge University Press},
  address   = {Cambridge},
  year      = {2012}
}

@article{LumerPhillips1961,
  author  = {Lumer, G{\"u}nter and Phillips, Ralph S.},
  title   = {Dissipative operators in a Banach space},
  journal = {Pacific Journal of Mathematics},
  volume  = {11},
  number  = {2},
  pages   = {679--698},
  year    = {1961}
}

@article{Chruscinski2021LegacyKossakowski,
  author  = {Chru{\'s}ci{\'n}ski, Dariusz},
  title   = {The Legacy of {A}ndrzej {K}ossakowski},
  journal = {Open Systems \& Information Dynamics},
  volume  = {28},
  number  = {4},
  pages   = {2150015},
  year    = {2021},
  doi     = {10.1142/S1230161221500153}
}

@article{Choi1975,
  author  = {Choi, Man-Duen},
  title   = {Completely positive linear maps on complex matrices},
  journal = {Linear Algebra and its Applications},
  volume  = {10},
  number  = {3},
  pages   = {285--290},
  year    = {1975},
  doi     = {10.1016/0024-3795(75)90075-0}
}

@article{Jamiolkowski1972,
  author  = {Jamio{\l}kowski, Andrzej},
  title   = {Linear transformations which preserve trace and positive semidefiniteness of operators},
  journal = {Reports on Mathematical Physics},
  volume  = {3},
  number  = {4},
  pages   = {275--278},
  year    = {1972},
  doi     = {10.1016/0034-4877(72)90011-0}
}

@article{Kraus1971,
  author  = {Kraus, Karl},
  title   = {General state changes in quantum theory},
  journal = {Annals of Physics},
  volume  = {64},
  number  = {2},
  pages   = {311--335},
  year    = {1971},
  doi     = {10.1016/0003-4916(71)90108-4}
}

@article{Stinespring1955,
  author  = {Stinespring, W. Forrest},
  title   = {Positive functions on {$C^*$}-algebras},
  journal = {Proceedings of the American Mathematical Society},
  volume  = {6},
  number  = {2},
  pages   = {211--216},
  year    = {1955},
  doi     = {10.1090/S0002-9939-1955-0069403-4}
}

@book{RudinPMA,
  author    = {Rudin, Walter},
  title     = {Principles of Mathematical Analysis},
  edition   = {3},
  publisher = {McGraw-Hill},
  address   = {New York},
  year      = {1976}
}

@book{Kreyszig,
  author    = {Kreyszig, Erwin},
  title     = {Introductory Functional Analysis with Applications},
  publisher = {John Wiley \& Sons},
  address   = {New York},
  year      = {1978}
}

@book{BreuerPetruccione2002,
  author    = {Breuer, Heinz-Peter and Petruccione, Francesco},
  title     = {The Theory of Open Quantum Systems},
  publisher = {Oxford University Press},
  address   = {Oxford},
  year      = {2002}
}

@book{GardinerZoller2004,
  author    = {Gardiner, Crispin and Zoller, Peter},
  title     = {Quantum Noise: A Handbook of Markovian and Non-Markovian Quantum Stochastic Methods with Applications to Quantum Optics},
  edition   = {3},
  publisher = {Springer},
  address   = {Berlin},
  year      = {2004},
  isbn      = {978-3-540-22301-6}
}

@book{AlickiLendi2007,
  author    = {Alicki, Robert and Lendi, Karl},
  title     = {Quantum Dynamical Semigroups and Applications},
  series    = {Lecture Notes in Physics},
  volume    = {717},
  publisher = {Springer},
  address   = {Berlin},
  year      = {2007},
  doi       = {10.1007/3-540-70861-8}
}

@article{Lumer1961,
  author  = {Lumer, G{\"u}nter},
  title   = {Semi-inner product spaces},
  journal = {Transactions of the American Mathematical Society},
  volume  = {100},
  pages   = {29--43},
  year    = {1961}
}

@misc{WikipediaKossakowski,
  author       = {{Wikipedia contributors}},
  title        = {{Andrzej Kossakowski}},
  howpublished = {\url{https://en.wikipedia.org/w/index.php?title=Andrzej_Kossakowski&oldid=1263442218}},
  note         = {Wikipedia, The Free Encyclopedia. Accessed May 15, 2026},
  year         = {2024}
}

@article{ChruscinskiPascazio2017,
  author  = {Chru{\'s}ci{\'n}ski, Dariusz and Pascazio, Saverio},
  title   = {A Brief History of the {GKLS} Equation},
  journal = {Open Systems \& Information Dynamics},
  volume  = {24},
  number  = {03},
  pages   = {1740001},
  year    = {2017},
  doi     = {10.1142/S1230161217400017},
  eprint  = {1710.05993},
  archivePrefix = {arXiv},
  primaryClass  = {quant-ph}
}

@article{Saverio,
author = {Pascazio, Saverio},
title = {Remembering {A}ndrzej {K}ossakowski},
journal = {Open Systems \& Information Dynamics},
volume = {28},
number = {04},
pages = {2150016},
year = {2021},
doi = {10.1142/S1230161221500165},
URL = {https://doi.org/10.1142/S1230161221500165},
eprint = {https://doi.org/10.1142/S1230161221500165}
}

@article{Rel1,
  author  = {Kimura, Gen},
  title   = {Restriction on relaxation times derived from the {Lindblad}-type master equations for two-level systems},
  journal = {Physical Review A},
  volume  = {66},
  pages   = {062113},
  year    = {2002},
  doi     = {10.1103/PhysRevA.66.062113}
}

@article{Rel2,
  author  = {Chru{\'s}ci{\'n}ski, Dariusz and Kimura, Gen and Kossakowski, Andrzej and Shishido, Yasuhito},
  title   = {Universal Constraint for Relaxation Rates for Quantum Dynamical Semigroup},
  journal = {Physical Review Letters},
  volume  = {127},
  pages   = {050401},
  year    = {2021},
  doi     = {10.1103/PhysRevLett.127.050401}
}

@article{Rel3,
  author  = {Muratore-Ginanneschi, Paolo and Kimura, Gen and Chru{\'s}ci{\'n}ski, Dariusz},
  title   = {Universal bound on the relaxation rates for quantum {Markovian} dynamics},
  journal = {Journal of Physics A: Mathematical and Theoretical},
  volume  = {58},
  pages   = {045306},
  year    = {2025},
  doi     = {10.1088/1751-8121/adaa3f}
}

@article{Rel4,
  author  = {Chru{\'s}ci{\'n}ski, Dariusz and vom Ende, Frederik and Kimura, Gen and Muratore-Ginanneschi, Paolo},
  title   = {A universal constraint for relaxation rates for quantum {Markov} generators: complete positivity and beyond},
  journal = {Reports on Progress in Physics},
  volume  = {88},
  pages   = {097602},
  year    = {2025},
  doi     = {10.1088/1361-6633/ae075f}
}
\end{document}